\begin{document}

\newtheorem{theorem}{Theorem}[section]
\newtheorem{lemma}[theorem]{Lemma}
\newtheorem{corollary}[theorem]{Corollary}
\newtheorem{definition}{Definition}[section]
\newtheorem{proposition}[theorem]{Proposition}
\newtheorem{observation}[theorem]{Observation}
\newtheorem{example}{Example}

\newcounter{theremark}
\newenvironment{remark}{\refstepcounter{theremark}\medskip\textsc{Remark~\arabic{theremark}.}}{}

\newenvironment{repobservation}[1]{\noindent \sc{\bf Observation~\ref{#1}.}\begin{itshape}}{\end{itshape}}
\newenvironment{reptheorem}[1]{\noindent \sc{\bf Theorem~\ref{#1}.}\begin{itshape}}{\end{itshape}}
\newenvironment{replemma}[1]{\noindent \sc{\bf Lemma~\ref{#1}.}\begin{itshape}}{\end{itshape}}

\def \ee   {\varepsilon}
\def \PP   {{\cal P}}
\def\dist  {\mathsf{dist}}

\newcommand{\ie}{{\it i.e.}}
\newcommand{\eg}{{\it e.g.}}

\newcommand{\size}{\operatorname{size}}
\newcommand{\treesize}{\operatorname{treesize}}
\newcommand{\subtree}{\operatorname{subtree}}
\newcommand{\children}{\operatorname{children}}

\begin{titlepage}
\title{Mechanism Design with Strategic Mediators}

\author{Moshe Babaioff\thanks{Microsoft Research, Herzliya Israel, \texttt{moshe@microsoft.com}.}
\and
Moran Feldman\thanks{School of Computer and Communication Sciences, EPFL, \texttt{moran.feldman@epfl.ch}. Supported by ERC Starting Grant 335288-OptApprox. Part of the work was done while the author was an intern at Microsoft Research, Herzliya Israel.}
\and
Moshe Tennenholtz\thanks{Faculty of Industrial Engineering and Management, Technion, \texttt{moshet@ie.technion.ac.il}. Work carried out at Microsoft Research, Herzliya Israel.}
}

\maketitle

\begin{abstract}
We consider the problem of designing mechanisms that interact with strategic agents through {\em strategic} intermediaries (or mediators), and investigate the cost to society due to the mediators' strategic behavior. Selfish agents with private information are each associated with exactly one strategic mediator, and can interact with the mechanism exclusively through that mediator. Each mediator aims to optimize the combined utility of {\em his} agents, while the mechanism aims to optimize the combined utility of {\em all} agents. We focus on the problem of facility location on a metric induced by a publicly known tree. With non-strategic mediators, there is a dominant strategy mechanism that is optimal. We show that when both agents and mediators act strategically, there is no dominant strategy mechanism that achieves {\em any} approximation. We, thus, slightly relax the incentive constraints, and define the notion of a {\em two-sided incentive compatible} mechanism. We show that the $3$-competitive {\em deterministic} mechanism suggested by Procaccia and Tennenholtz~\cite{ProcacciaT13} and Dekel et al.~\cite{DekelFP2010} for lines extends naturally to trees, and is still $3$-competitive as well as two-sided incentive compatible. This is essentially the best possible~\cite{DekelFP2010,ProcacciaT13}.
We then show that by allowing randomization one can construct a $2$-competitive {\em randomized} mechanism that is two-sided incentive compatible, and this is also essentially tight.
This result also closes a gap left in the work of Procaccia and Tennenholtz~\cite{ProcacciaT13} and Lu et al.~\cite{LuWZ09} for the simpler problem of designing strategy-proof mechanisms for weighted agents with no mediators on a line, while extending to the more general model of trees.
We also investigate a further generalization of the above setting where there are multiple levels of mediators.
\end{abstract}

\pagenumbering{Alph}
\thispagestyle{empty}
\end{titlepage}

\pagenumbering{arabic}


\section{Introduction}

The Algorithmic Mechanism Design literature is generally interested in the implications of strategic behavior on the quality of social decision making.
The usual assumption is that agents interact directly with a mechanism that picks an outcome. Yet, in many complex real world settings the interaction goes through {\em intermediaries}. If these intermediaries are acting strategically, this can influence the outcome picked by the mechanism, and result with an increase in social cost.

Consider, for example, a political decision taken by indirect voting. There are districts, and each district is represented by a representative.
Each citizen has a position, and let us assume the positions of the citizens are points on an interval. A decision is also a point on the interval, and the cost for a citizen of such a point equals to the distance of her position from the decision made. Each representative aims to minimize the total cost for his own constituency, while the global goal is to minimize the total cost of all citizens. Decisions are taken using the reports of the representatives exclusively (there is no direct interaction with the agents), and these representatives have the freedom to manipulate their reports  if such a manipulation helps their constituency.
We are interested in questions such as: {\em What is the cost for society of such strategic behavior? How should the society set up the decision process to minimize that cost?}

More generally, we are interested in designing mechanisms that interact with strategic agents through {\em strategic} intermediaries (which we also call mediators).
Agents have private information, and when put in a game, each agent acts to optimize her own utility.\footnote{Throughout the paper we refer to an agent as ``she'', and to a mediator as ``he''.}
The mechanism designer aims to optimize a social goal.
The intermediaries do not have any private information of their own, rather, each intermediary acts in the mechanism on behalf of the agents associated with him, aiming to optimize the same social goal with respect to his agents only (note that he does {\em not} have a personal agenda and is completely benevolent).
As the intermediaries control the information flow from the agents to the mechanism, the mechanism faces strategic behavior not only of the agents, but also of intermediaries:
within the freedom given by the mechanism, an intermediary acts strategically to optimize on behalf of the agents he represents.\footnote{We assume that an intermediary {\em is able} to manipulate the reports of his agents, and do not consider settings in which there exists an infrastructure for sending messages between the agents and the center through the mediators in a non-manipulable way (\eg, using cryptographic means.)}
In this paper we aim to understand the implications of the strategic behavior of intermediaries on the welfare of the agents.

The general framework outlined above can be studied in the context of many specific settings, and might yield very different results in different cases. Here, we focus on one such example and leave the consideration of other settings for future works. The setting we consider is facility location on a metric induced by a publicly known tree, which generalizes the decision making problem on a line introduced above.
There are $n$ agents, each located at some private location. The agents are partitioned to $k$ disjoint sets, and each set is represented by a unique mediator. The mechanism (or {\em center}) should locate one facility. The cost of an agent is her distance from the location of the facility, and she aims to minimize her cost.\footnote{While the general framework does not preclude transfer of utilities, in this specific model there is no money and utilities cannot be transferred. Thus, our results for facility location can be viewed as part of the literature on approximate mechanism design without money \cite{ProcacciaT13}.}
The social goal considered is the goal of minimizing the total distance of the agents from the facility.

If the center had access to the locations of all agents he could minimize the total cost by locating the facility at a {\em median} of all locations.  While all our results hold for general tree metrics, for the sake of the exposition, in the introduction we mainly discuss the euclidian metric on an interval of the real line. For that metric, if $t_i$ is the $i$-th left most agent (breaking ties arbitrarily) and $n$ is odd, then there is a unique optimal location at the median location $t_{(n+1)/2}$ (for even $n$ there is an interval of optimal locations, between the two medians).
With strategic agents but no mediators (or equivalently, with non-strategic mediators), there is a dominant strategy mechanism that is optimal: locate the facility at a median point, breaking ties to the left.
While this result gives a complete picture for the standard model without strategic mediators, we show that with both strategic agents and strategic mediators the picture is much more complicated. We first show that there does not exist a dominant strategy mechanism achieving any approximation. This happens even in a simple setting with two possible locations, a single mediator and a single agent, as if the agent switches between the locations in her report, the mediator should switch them back, and vice versa.

Given the impossibility to achieve a dominant strategy mechanism with good performance, we suggest a slightly weaker solution concept for direct revelation mechanisms (in which each agent reports her location, and each mediator reports the locations of all his agents).
Our aim would be to build mechanisms which achieve good approximation (minimize the ratio between the cost of the outcome and the optimal cost).
A mechanism is {\em agent-side incentive compatible (agent-side IC)} if each agent has a dominant strategy to be truthful given that her mediator is truthful (regardless of any parameter of the model, like the number of mediators, and regardless of other players' strategies).
A mechanism is {\em mediator-side incentive compatible (mediator-side IC)} if each mediator has a dominant strategy to be truthful given that all his agents are truthful
(again, regardless of any parameter and regardless of other players' strategies).
We aim to construct mechanisms that are {\em two-sided incentive compatible (two-sided IC)}, \ie, they are both agent-side incentive compatible and mediator-side incentive compatible. We construct both deterministic and randomized mechanisms, and prove that they achieve essentially the best possible performance.\footnote{All the mechanisms we construct run in polynomial time, while our lower bounds hold independent of computational consideration. Like in prior literature in approximate mechanism design without money \cite{ProcacciaT13}, the barrier to optimality is incentives, not computation.}

One of the settings considered by Procaccia and Tennenholtz~\cite{ProcacciaT13} is equivalent to designing deterministic mediator-side IC mechanisms on an interval of the real line. Their work implies that the results of Dekel et al.~\cite{DekelFP2010} for regression learning induce a $3$-competitive \emph{deterministic} mediator-side IC mechanism on an interval, and that this is \emph{essentially} the best possible competitive ratio for such a mechanism. The mechanism induced works as follows: for every mediator, it replaces all points reported by the mediator by the optimal\footnote{There might be multiple optimal locations, in such cases ties need to be handled carefully to preserve incentives. To simplify the exposition, in the introduction we assume there are no ties.} location for that mediator, and then finds an optimal location with this new input (the mechanism essentially computes median of medians, weighted by the number of agents each mediator represents).

We prove the above mechanism is also agent-side IC, and describe a simple extension of it to general trees. This yields the following theorem.

\begin{theorem}
There exists a deterministic two-sided IC mechanism on \emph{tree metrics} with a competitive ratio of $3$.
Moreover, for any fixed $\ee > 0$, there is no deterministic two-sided IC mechanism with a competitive ratio of $3-\ee$.
\end{theorem}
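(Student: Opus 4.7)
The plan is to extend the median-of-medians mechanism from lines to trees and verify its three required properties: the competitive ratio, agent-side IC, and mediator-side IC. Given a tree metric, the \emph{optimum} for a multiset of points is a vertex $v$ such that every subtree hanging off $v$ contains at most half the mass; such a vertex always exists, and when several do, they form a path. I would define the mechanism as follows: for each mediator $j$, compute a canonical median $m_j$ of the locations he reports (breaking ties via a fixed ordering over the vertices, \eg\ a depth-first traversal from a root), and then output a canonical median of the multiset $\{m_j\}$ in which each $m_j$ is weighted by the number of agents mediator $j$ represents. The tie-breaking rule must be consistent across both levels; this is the subtle point that is hidden in the line case (where ``break left'' suffices) and is the main place where care is required when moving to trees.

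Next I would establish the competitive ratio. Fix any tree, let $x^*$ be a social optimum, and let $x$ be the output of the mechanism. For a single mediator with $n_j$ agents whose true locations sum to cost $C_j$ against $x^*$, the triangle inequality gives $\sum_{i} \dist(t_i,x) \le \sum_i \dist(t_i,m_j) + n_j\dist(m_j,x)$. The first sum is at most $C_j$ because $m_j$ is the \emph{mediator-optimal} location and $x^*$ was a candidate. Summing over mediators and using that $x$ is the weighted median of the $m_j$'s (so $\sum_j n_j \dist(m_j,x) \le \sum_j n_j \dist(m_j,x^*) \le 2 \cdot \mathrm{OPT}$ by one more application of the triangle inequality to each $m_j$), one obtains total cost at most $3\cdot\mathrm{OPT}$. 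This is the standard ``median of medians'' argument; the only nontrivial step is the last inequality, which holds because replacing each agent by her mediator's median moves her by at most her distance to $x^*$, so the optimal cost in the reduced instance is at most $2\cdot\mathrm{OPT}$, and the weighted median does no worse on that instance than $x^*$ does.

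For two-sided IC, I would argue each side separately. For mediator-side IC, assume all agents are truthful, so each mediator $j$'s input is fixed; his ``report'' is then his entire multiset of locations, but the mechanism only sees him through the single point $m_j$ contributed with weight $n_j$ to a weighted tree median. Because a weighted median on a tree is strategy-proof for each of its weighted participants whose true preferred point is $m_j$ (the classical generalized-median characterization on trees), no deviation by the mediator helps him. For agent-side IC, assume the mediator is truthful: then agent $i$'s report enters through the $1$-median $m_j$ of her mediator's agents, and the outcome $x$ is a fixed monotone function of $m_j$ (holding all other mediators' inputs fixed). Strategy-proofness of the tree median against a single participant, combined with the fact that moving $m_j$ toward an agent only moves the final facility weakly toward that agent along the tree, gives the result. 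The key lemma to invoke or re-prove here is the strategy-proofness of the tree $1$-median with a consistent tie-breaking rule; this is where I would spend most of the effort, since the tie-breaking rule on trees is the only nonroutine ingredient.

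Finally, the lower bound is immediate: lines are trees, and on lines two-sided IC is at least as restrictive as mediator-side IC, so the $3-\ee$ lower bound of Procaccia and Tennenholtz~\cite{ProcacciaT13} and Dekel et al.~\cite{DekelFP2010} transfers verbatim. The anticipated main obstacle is therefore not the competitive-ratio analysis or the lower bound, but designing a tie-breaking rule on the tree that preserves strategy-proofness at both levels simultaneously; a sufficient fix is to fix a root and, among optima, prefer the vertex that appears earliest in a fixed DFS order, which is the natural generalization of ``break ties to the left'' on a line.
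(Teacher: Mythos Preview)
Your proposal is correct and follows essentially the same approach as the paper: define the weighted median-of-medians mechanism on trees, prove the competitive ratio of $3$ via the triangle-inequality chain you describe (this is exactly the paper's argument, stated there for the more general iterated mechanism IWMM and specialized to depth $s=2$), prove both IC properties via the path-monotonicity of the tree median under a single-point move, and inherit the $3-\ee$ lower bound from the line case. The one noteworthy difference is the tie-breaking rule: the paper breaks ties by selecting, among all weighted medians, the one \emph{closest to a fixed reference point} $z$ (and proves separately that this is well-defined), rather than by DFS order. The paper's central structural lemma---if one weighted point moves from $p$ to $p'$ and the chosen median changes from $q$ to $q'$, then both $q$ and $q'$ lie on the $p$--$p'$ path with $q$ closer to $p$---is established specifically for the closest-to-$z$ rule, and both IC proofs are short consequences of applying it once (mediator-side) or twice (agent-side). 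Your DFS rule is plausible but would require re-verifying that lemma; the paper's choice makes the monotonicity argument cleaner.
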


Procaccia and Tennenholtz~\cite{ProcacciaT13} raised the question whether it is possible to get a better competitive ratio using randomization. They were able to answer affirmatively in the case of two mediators representing a ``similar'' number of agents. Lu et al.~\cite{LuWZ09} extend the analysis of the mechanism of \cite{ProcacciaT13} to the case of multiple mediators representing a ``similar'' number of agents. However, even if all mediators have equal number of agents, the competitive ratio of this mechanism approaches $3$ as the number of mediators increase. On the negative side, \cite{LuWZ09} gives a hardness result of $1.33$ using a complex LP-based proof.

We suggest a new and sophisticated randomized mechanism that is $2$-competitive and works for any tree. We also prove using a simple argument that this is essentially the best possible.\footnote{Like the hardness of \cite{LuWZ09}, our hardness holds, in fact, even for mediator-side IC mechanisms.}
\begin{theorem}
There exists a randomized two-sided IC mechanism on tree metrics with a competitive ratio of $2$.
Moreover, for any fixed $\ee > 0$, there is no randomized two-sided IC mechanism with a competitive ratio of $2-\ee$.
\end{theorem}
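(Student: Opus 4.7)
My plan for the upper bound is a two-stage mechanism. Stage~1 replaces, for each mediator~$i$, his reported agent locations by the tree $1$-median $q_i$ of those reports, repeated $w_i=|A_i|$ times. Because the tree $1$-median is the canonical strategy-proof social-choice function on tree metrics (Moulin), Stage~1 is agent-side IC: given that her mediator forwards her report faithfully, each agent's dominant strategy is truthfulness. Stage~2 is a randomized mechanism on the reduced instance $\{(q_i,w_i)\}_i$, designed to be both (i)~mediator-side IC with respect to each mediator's choice of $q_i$ (so that the composition is two-sided IC) and (ii)~$2$-competitive against the true optimum $OPT$ computed on the original, unreduced agent locations.

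The main obstacle is Stage~2. The naive weighted random dictator (output $q_i$ with probability $w_i/n$) is mediator-side IC but only $3$-competitive: the direct triangle-inequality bound $E[\text{cost}] \le OPT + \sum_i w_i\,d(p^*,q_i) \le 3\cdot OPT$ is tight when each mediator's agents are spread around~$q_i$ so that $\text{cost}_i(q_i)>0$. To reach ratio~$2$ I plan to replace the point mass at each $q_i$ by a distribution along the tree path from $q_i$ to a carefully chosen ``central'' point of the reduced instance---on a line, the weighted median of the $q_j$'s; on a tree, the analogous vertex determined by subtree-weight balance---with the outer mixture still placing probability $w_i/n$ on mediator~$i$ and the along-path density calibrated via a transportation-style argument so that $E[d(\text{output},a)]\le 2\cdot d(a,p^*)$ for every agent $a$ after summation. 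Mediator-side IC will be preserved because Stage~2 depends on mediator~$i$'s report only through $q_i$, and the induced expected cost on his own agents is a convex function of $q_i$ along tree paths, uniquely minimized at the true median.

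For the lower bound, my plan is a short direct argument exploiting the information gap between the reduced data and the true instance. I consider a family of two-mediator line instances with skewed weights ($1$ and $n-1$) and reported medians at $0$ and $1$. A mediator-side IC mechanism commits to an output distribution $D$ based only on the reduced data, and the adversary then picks the worst-case consistent true configuration of the heavier mediator's agents. Using two contrasting configurations---one concentrated at the reported median, one spread symmetrically about it---one obtains two cost inequalities whose sum is bounded below by a quantity forcing $E[\text{cost}] \ge (2 - O(1/n)) \cdot OPT$ on at least one configuration. Letting $n \to \infty$ yields the hardness $2 - \ee$ for every fixed $\ee > 0$, matching the upper bound up to lower-order terms; the argument is simpler than the LP-based $1.33$ hardness of~\cite{LuWZ09} precisely because it leverages the mediator's power to ``hide'' agent spread behind a single reported median.
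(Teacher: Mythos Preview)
Your upper-bound plan differs substantively from the paper's mechanism TRM. The paper does \emph{not} spread each mediator's mass along a path to a center; instead it roots the tree at the weighted median $r$ of the $q_i$'s, defines a ``core'' $X=\{u:\treesize(u)\ge n/4\}$, and assigns probabilities so that the total mass in the subtree of any non-root $u\in X$ is exactly $(\treesize(u)-n/4)/(n/2)$. Your random-dictator-with-spreading idea is plausible, but as stated it is underspecified (which density along the path?), and your IC argument has a real gap: when mediator~$i$ moves $q_i$, the center itself moves, which reshapes \emph{every} mediator's spread; the resulting dependence of $i$'s expected cost on $q_i$ is not obviously convex, let alone minimized at the true median. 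The paper's IC proof rests on a structural lemma: a change $q_i\to\hat q_i$ shifts probability mass only along the $q_i$--$\hat q_i$ path, monotonically toward $\hat q_i$, leaving all other probabilities fixed. You would need an analogue, and it is not clear your mechanism satisfies one.

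Your lower-bound plan has a concrete error: weights $(1,n-1)$ do not yield a $2-\ee$ hardness. With one mediator carrying almost all the weight and median~$1$, the deterministic weighted-median-of-medians mechanism simply outputs~$1$ and is near-optimal on \emph{every} configuration consistent with that median, so no distribution $D$ is forced into ratio~$2$. Your ``concentrated vs.\ symmetric-spread'' pair does not create the needed tension either: on the symmetric-spread configuration the heavy mediator is \emph{indifferent} to the facility location, so mediator-side IC does not tie the mechanism's behavior across the two configurations (and your assertion that an IC mechanism ``commits based only on the reduced data'' is itself unjustified---IC does not imply mediator-basedness). The paper's construction instead takes two mediators of \emph{equal} weight $2r{+}1$ at $0$ and $1$ and two mirror-image true configurations $I_1,I_2$ (each placing $r{+}1$ of one mediator's agents at the far end); IC forces the same expected output location $p$ on both, and since the cost on $I_1$ at $p$ plus the cost on $I_2$ at $p$ is $2(2r{+}1)$ while each optimum is $r{+}1$, at least one ratio is $\ge (2r{+}1)/(r{+}1)\to 2$.
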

This result closes the gap left in the work of Procaccia and Tennenholtz~\cite{ProcacciaT13} and Lu et al.~\cite{LuWZ09} for the simpler problem of designing strategy-proof mechanisms for weighted agents with no mediators on a line, while extending to the more general model of trees.

For the case of locations on an interval of the real line the mechanism works as follows. For every mediator it replaces all points reported to the mediator by the optimal location for that mediator. For simplicity assume that the number of agents can be divided by $4$. Then, it sorts the locations and uses a uniformly selected point among the $n/2$ central points (that is, the points from the $n/4 + 1$ leftmost location to the $3n/4$ leftmost location).

The randomized mechanism for trees generalizes this idea but is much more involved, and is our main technical contribution.
This mechanism chooses from the set of medians (optimal locations of mediators) a ``core'' subset. This core is the equivalent of the central points from the line case. Each point in the core is assigned some positive probability to become the facility location. However, unlike in the line case, the probabilities assigned to the points of the core are non-uniform, and are carefully chosen to achieve both the competitive ratio and the right incentives. The exact probability distribution depends on the medians of all mediators, including medians outside of the core. If all the reports happen to fall on a single line, then the probability distribution becomes uniform, and the algorithm reduces to the one described above for lines.

We remark that all our mechanisms run algorithms that use only the optimal location for each mediator,
and do not need, in addition, access to the exact locations of the agents associated with each mediator.
We call an algorithm that satisfies this property a {\em mediator based} algorithm.
We prove that such algorithms, which use {\em only} the locations of the optimal points of the mediators (and not the locations of their agents),
cannot be better than $2$-competitive.
Interestingly, we show that there exists a deterministic mediator based algorithm that has a competitive ratio of $2$, yet that algorithm is {\em not} two-sided IC. Thus, for deterministic two-sided IC mechanisms, the implications of strategic behavior by the mediators goes beyond the constraint of being mediator based; such mechanisms cannot be better than $3$-competitive (which is tight). Thus, there is a gap that is a result of incentives, and is not due to insufficient information.

Tree metrics are a strict generalization of line metrics and capture domains that cannot be reasonably modeled by line metrics. Consider the following toy example. People of three nationalities live in a single country (\eg, Switzerland), and want to elect a president. The candidates for the position differ in two attributes: their nationality and their degree of nationalism (for example, how much are they willing to settle for a compromise when dealing with an issue on which the different national groups disagree). Each citizen, naturally, wants to elect a president sharing his nationality, but different citizens of the same national group might want to elect candidates with different degrees of nationalism. Notice that a candidate of low nationalism is more acceptable by citizens of other nationalities (regardless of the level of nationalism, every citizen would probably like to have a president of her own nationality), thus, the metric induced by this example is a star with 3 edges (of course, one can think of a country with more nationalities to get a star with more edges).

We also consider a generalization of the above setting allowing multiple levels of mediation. In other words, the center, agents and mediators form a tree, in which the root is the center and the leaves are the agents. Every internal node of the tree is a mediator representing its children in the tree. Unfortunately, the competitive ratio of every mechanism for this setting degrades exponentially with the height of the tree, even when the mechanism is only required to respect a very weak definition of incentive compatibility. 
This result is consistent with the existence of a symmetric voting system composed of $k$ levels where a minority of size exponentially small in $k$ can control the decisions of the system.\footnote{See \texttt{http://gowers.wordpress.com/2013/10/15/holding-a-country-to-ransom/} for an example of such a voting system.} Finally, we show that the mechanism that iteratively applies weighted median has a competitive ratio which is essentially optimal and satisfies the weak notion of incentive compatibility.

\subsection{Related Work}

In this paper we deal with mediators who act as intermediaries between a set of agents and a mechanism. The most related setting studied in the literature is the recent work on auctions with intermediaries \cite{FeldmanMMP10}. There, as in our setting, both agents and intermediaries are strategic. However, the setting there is Bayesian while ours is Pre-Bayesian. Also, our aim is to address the social welfare issue requiring dominant strategies by the agents when their associated mediator is truthful, rather than revenue maximization.

More generally, our work refers to the study of mediators (see, \eg, \cite{MondererT09} for a study in the context of complete information games, and \cite{AshlagiMT09} for a study in the context of incomplete information games).
However, the typical signature of work on mediators is a single mediator that serves as an arbitration device: the agents are not a captive audience, and each of them can decide to participate in the game directly or work through the mediator. In our setting there are multiple intermediaries, each having his own captive audience, which must play the game through the intermediary. Moreover, the intermediaries are players and try to optimize their own utilities.
Our setting nicely fit with situations such as voting by the (already selected) representatives of a geographic area or interest group. Additionally, we would like to mention the work of Leyton-Brown et al.~\cite{Leyton-BrownST02} which deals with game theoretic aspects of bidding clubs in which ``collusion devices'' (cartels) are strategically created in a fixed mechanism (first price auction). In contrast, in our setting the partition of agents to mediators is pre-determined and our focus is on mechanism design given that fact.

The specific example of the framework that we consider is related to the recent literature on approximated mechanism design without money \cite{ProcacciaT13}. This literature deals with approximation algorithms which are used to resolve incentive issues for tractable problems rather than overcome computational complexity of intractable problems, when no money transfers are available. An additional conceptual contribution of our work is extending the literature on approximate mechanism design without money to incorporate mediation devices. Indeed, the problem studied in this paper, the facility location problem, is the canonical problem of that literature, which is easily solved (optimally) if no intermediaries are in place.

As pointed out in the previous section, the design of mediator-side incentive compatible mechanisms is equivalent to the design of strategy-proof mechanisms for weighted agents that was studied by Procaccia and Tennenholtz~\cite{ProcacciaT13}, and later also by Lu et al.~\cite{LuWZ09} (these papers only considered the special case of a line metric). The implications of this equivalence to our settings were discussed above.

The literature on facility location on a line based on information provided by strategic agents  is classic in the context of mechanism design with single-peaked preferences \cite{Moulin80}. The extension of this problem to facility location on a network has been introduced by \cite{SchummerV02}. It has been shown that there exist non-dictatorial strategy proof algorithms for facility location on trees, and that any graph possessing circles does not allow for that. The study of approximate mechanism design without money for networks \cite{AlonFPT10} discusses the minimization of the maximal distance to a facility on a network using deterministic and probabilistic strategy proof algorithms, yielding some positive approximation results and tight bounds. The problem of approximating the optimal location of two facilities on a line using strategy proof mechanisms has been discussed in \cite{LuSWZ10}, while the general case of locating $k$ facilities in an approximately optimal manner using strategy proof mechanisms can be handled for large populations by the general technique given
in \cite{NissimST12}.

\section{Model and Solution Concept} \label{sec:model}

Within the general framework of strategic mediators we focus on one specific mechanism design problem: facility location on a metric induced by a publicly known tree $T = (V, E)$ with the following metric on each edge.
Each edge $e\in E$ in the tree is mapped to the interval $[0,\ell_e]$ for some $\ell_e>0$, with the usual Euclidian metric.
In our problem there are $n$ agents, each of which has a private position which can be represented by a point on the tree. The position of an agent can be either a node $v \in V$ or a point somewhere along an edge $e \in E$. Each one of the $n$ agents is associated with one of $k$ mediators.
For $i\in [k]$, mediator $d_i$ represents a set $A_i$ of $n_i$ agents; we denote these agents by $a_{i, 1}, a_{i, 2}, \ldots, a_{i, n_i}$. As we assume each agent is associated with exactly one mediator, the sets of agents of any two mediators do not intersect and  $\sum_{i = 1}^k n_i = n$.
The position of each agent is only known to the agent herself, and we denote the private position of agent $a_{i,j}$ by $t_{i,j}$.
Everything else is common knowledge. In particular, the number of agents represented by each mediator is known to the mechanism.\footnote{If the number of agents represented by each mediator is considered private information, then the mechanism has no way to distinguish ``important'' mediators representing many agents from ``unimportant'' mediators representing only few agents. This intuitive impossibility can be easily formalized to show that no constant competitive ratio is possible. The assumption that size of the population represented by each mediator is public is reasonable in many settings, for example, the size of the population of a congressman's district is publicly known.}
We call a point from the metric induced by $T$ simply a ``point''. For example, by saying that ``$p$ is a point'' we mean that $p$ is a point from the metric induced by $T$. Particularly, the location of each agent is a point.

The {\em center} has to pick a position for a single facility.
If the center locates the facility at point $p$, then the {\em cost of an agent} $a_{i,j}$ is $\dist(p, t_{i,j})$, where $\dist(p, t_{i,j})$ is the distance between $p$ and $t_{i,j}$ along the metric induced by $T$.
The \emph{social cost} of locating the facility at point $p$ is $\sum_{i\in[k],j\in[n_i]} \dist(p, t_{i,j})$, \ie, the sum of all the agents' costs.
The objective of the center is to pick a location for the facility that minimizes the social cost.
The {\em cost of a mediator} $d_i$ ($i\in[k]$) is the total cost for the agents he represents, which is $\sum_{j\in[n_i]} \dist(p, t_{i,j})$.
Each mediator aims to minimize his cost.
We use the term {\em player} to denote either an agent or a mediator.
We assume that the center and players are risk neutral, and for a distribution over locations, they evaluate their cost by the expected cost.
Note that in our model there is no money, and utilities cannot be transferred.

An {\em algorithm} for the center is a mapping from its input, the locations of all agents, to a location for the facility.
We say that an algorithm is $\alpha$-competitive, or has a \emph{competitive ratio} of $\alpha$, if for any set of locations for the agents, the location picked by the center for the facility induces a cost that is at most $\alpha$ times larger than the minimal possible cost (with respect to its input).

When the agents' locations are private information the center has to come up with a mechanism by which players report their information and this information is used to pick a facility location.
We consider {\em direct revelation mechanisms} in which each agent is asked to report her location (to her mediator), and each mediator is asked to report the location of each of his agents. The mechanism uses the public information and the mediators' reports to locate the facility, with the aim of minimizing the social cost. We say that a mechanism is {\em $\alpha$-competitive}, or has a {\em competitive ratio} of $\alpha$, if under the solution concept that we consider, the location of the facility picked by the center has cost that is at most $\alpha$ times larger than the minimal possible cost.
Crucial to our model is the assumption that the center (or the mechanism) can not interact directly with the agents, and has access to their locations only through their mediators, which can manipulate the agents' reports.

\paragraph{Solution Concept.} Any direct revelation mechanism picked by the center puts the agents and the mediators (the players), which are both strategic, into a game. We would like to use mechanisms which induce games with some desired properties.

A direct revelation mechanism is {\em dominant strategy truthful} if it is a dominant strategy for each agent to report her location truthfully (regardless of the strategies chosen by the mediators and the other agents), and it is a dominant strategy for each mediator to report the locations of all of his agents to the center exactly as reported to him by the agents (again, regardless of the strategies chosen by the agents and the other mediators). We observe that asking a competitive mechanism to be dominant strategy truthful is unrealistic.

\begin{observation} \label{ob:no_dominant}
No direct revelation dominant strategy truthful mechanism has a finite competitiveness, and this is true even if the center is allowed to charge the mediators and agents.
\end{observation}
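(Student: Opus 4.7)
The plan is to prove the observation in the smallest possible instance: a tree consisting of a single edge with endpoints $p_1, p_2$ of length $L>0$, one mediator $d_1$, and one agent $a_{1,1}$ whose private location is promised to lie in $\{p_1, p_2\}$. Any direct revelation mechanism for this instance is specified by a facility-location rule $M : \{p_1,p_2\} \to [0,L]$ together with, in the version allowing transfers, an agent-side payment rule $q:\{p_1,p_2\} \to \mathbb{R}$ depending only on the mediator's report (since that is all the center observes). Since dominant strategy truthfulness is strictly stronger than its agent-side component, it suffices to show no mechanism satisfying agent-side truthfulness in dominant strategies can have finite competitive ratio. My handle on that requirement will be two specific mediator strategies: the identity strategy $\sigma_{\mathrm{id}}$, and the ``swap'' strategy $\sigma_{\mathrm{sw}}$ defined by $\sigma_{\mathrm{sw}}(p_1)=p_2$ and $\sigma_{\mathrm{sw}}(p_2)=p_1$.

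Applying the agent's IC inequality at true location $t=p_1$ against both strategies, the truthful outcome under $\sigma_{\mathrm{id}}$ coincides with the deviating outcome under $\sigma_{\mathrm{sw}}$ (both equal $(M(p_1),q(p_1))$), and the remaining two outcomes both equal $(M(p_2),q(p_2))$. Both inequalities must hold, so together they force
\begin{equation*}
\dist(M(p_1),p_1)+q(p_1) \;=\; \dist(M(p_2),p_1)+q(p_2).
\end{equation*}
Repeating at $t=p_2$ yields the analogous equality with $p_1$ replaced by $p_2$ inside the distances. Subtracting the two equalities eliminates the payments and leaves $\dist(M(p_1),p_1)-\dist(M(p_1),p_2)=\dist(M(p_2),p_1)-\dist(M(p_2),p_2)$. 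Parameterizing the edge by arc-length from $p_1$, with $M(p_1)=x\in[0,L]$ and $M(p_2)=y\in[0,L]$, this collapses to $2x-L=2y-L$, i.e., $M(p_1)=M(p_2)$; feeding this back into either of the original equalities gives $q(p_1)=q(p_2)$.

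Hence the mechanism places the facility at a fixed point $p^* \in [0,L]$ and charges a fixed agent-side payment, independently of the agent's location. I would conclude by choosing the agent's true location to be whichever of $\{p_1,p_2\}$ is \emph{farther} from $p^*$: the optimum puts the facility on the agent at social cost zero, while the mechanism pays the strictly positive distance to $p^*$, yielding an unbounded ratio. The fact that $q$ is constant across inputs is precisely why transfers cannot help, and mediator-side payments are irrelevant because the mediator is a pure altruist whose utility depends only on his agents' distances. The step that warrants the most care is the algebraic deduction $M(p_1)=M(p_2)$, since in principle the center may place the facility anywhere on the tree rather than only at $p_1$ or $p_2$; I have deliberately restricted to a single-edge tree so that this deduction reduces to a one-line linear identity, which is all that is needed for an impossibility result.
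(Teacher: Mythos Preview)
Your proof is correct but follows a genuinely different route from the paper's. The paper exploits the \emph{mediator-side} component of dominant-strategy truthfulness: it first observes that finite competitiveness forces the center to output exactly the location it receives from the mediator (since with one agent the optimum has cost zero), and then exhibits an agent strategy (misreporting $x$ as $y$) under which the mediator strictly prefers to ``undo'' the lie rather than pass it on, violating the mediator's dominant-strategy requirement.

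You instead use only the \emph{agent-side} component: by testing the agent's dominance condition against both the identity and the swap mediator strategies, you force the two IC inequalities to collapse to an equality at each location, and then a linear identity on the edge pins $M(p_1)=M(p_2)$ and $q(p_1)=q(p_2)$. This yields a constant mechanism, hence infinite ratio. What your approach buys is a mildly sharper statement: even the agent-side half of dominant-strategy truthfulness (truthful best for the agent against \emph{every} mediator strategy, not just the truthful one) already precludes finite competitiveness. This dovetails nicely with the paper's later Observation~\ref{ob:agent_side_optimal}, which shows that the weaker agent-side IC notion (dominance only against a truthful mediator) \emph{does} permit optimality. The paper's argument, on the other hand, is shorter and makes transparent the intuitive point stressed in the surrounding remark: the obstruction is the sequential information flow, since a mediator who can invert an agent's lie will want to.

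Two small remarks. First, your write-up is phrased for deterministic $M$; on the single edge the same algebra goes through verbatim for randomized $M$ by linearity of $z\mapsto \dist(z,p_1)-\dist(z,p_2)=2z-L$, yielding $\mathbb{E}[M(p_1)]=\mathbb{E}[M(p_2)]$ and hence a positive expected cost at one of the endpoints while the optimum is zero. Second, your dismissal of mediator-side payments is justified in this model because the mediator's utility is defined as the sum of his agents' \emph{distances}; it is worth saying explicitly that this is why such payments cannot enter the agent's IC constraints you use.
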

\begin{proof}
Consider an instance with a single mediator representing a single agent which can take two possible locations $x$ and $y$.
To have finite competitiveness the center must locate the facility at the location of the agent (when both the agent and the mediator are truthful). However, the center gets no information other than the report of the mediator, and therefore, it must \emph{always} locate the facility at the location reported by the mediator. Moreover, the charges collected by the center can depend only on this location.

Let $p_x$ and $p_y$ be the charges that the mediator pays when reporting $x$ and $y$, respectively. Assume without loss of generality that $p_y \geq p_x$. Now, assume that the agent's strategy is to report $y$ despite the fact that she is located at $x$. If the mediator switches the location back, then his cost is $p_x$, while a truthful repetition of the agent's report will result in a cost of $p_y + |y - x| \geq p_x + |y - x| > p_x$. Thus, it is clearly non-optimal for the mediator in this case to truthfully repeat the report of the agent.
\end{proof}

\begin{remark}
The above impossibility applies to a setting in which all entities have {\em exactly} the same utility function, so there are no conflicts. It is a result of the sequential nature of information propagation from the agents to the center through the mediators, and the incompatibility of that with dominant strategies.
\end{remark}

Given this impossibility result we need to settle for a slightly weaker solution concept, achieving \emph{Incentive Compatibility} (IC) in the following sense.
We still want each agent to have an incentive to be truthful, as long as her mediator is truthful (as opposed to playing an ``unreasonable'' strategy), and we want each mediator to be truthful as long as his agents are truthful. This is captured by the following definition.

\begin{definition}
A direct revelation mechanism is {\em agent-side incentive compatible} if for every mediator $d_i$, in the induced game created by fixing $d_i$ to be truthful, truthful reporting is a dominant strategy for each agent $a_{i, j}$ represented by $d_i$.

A direct revelation mechanism is {\em mediator-side incentive compatible} if for every mediator $d_i$, in the induced game created by fixing all $d_i$'s agents to be truthful, truthful reporting is a dominant strategy for $d_i$.

A direct revelation mechanism is {\em two-sided incentive compatible} if it is agent-side incentive compatible and mediator-side incentive compatible.
\end{definition}
Note that in any two-sided incentive compatible mechanism, it is in particular an Ex-post Nash for all players to be truthful.

To understand the implication of strategic behavior by the mediators we compare the competitiveness achieved by the best two-sided incentive compatible mechanisms to the competitiveness achieved by the best agent-side incentive compatible mechanisms (we do so both for deterministic and for randomized mechanisms).

\subsection{Agent-Side Incentive Compatible Mechanisms}
\label{sec:optimal}
Median points play a significant role both in the optimal algorithm and our mechanisms. We next present some basic definitions and observe that median points exactly characterize optimal locations.

\begin{definition}
Following are the definitions of median points and weighted median points.
\begin{compactitem}
\item A \emph{weighted point} is a pair $(p, x)$ where $p$ is a point and $x$ is a positive real number. Given a weighted point $\tilde{p} = (p, x)$, we say that $x$ is the weight of $\tilde{p}$, and write $w(\tilde{p}) = x$. We also think of $\tilde{p}$ as located at location $p$ in the metric. Hence, we can talk, \eg, about the distance between two weighted points.\footnote{In the interest of readability, throughout the paper we put the tilde sign above letters representing weighted points.}
\item Given a multi-set $S$ of elements, let $f(p, S)$ be the multiplicity of $p$ in $S$ (\ie, $f(p, S)$ is the number of copies of $p$ in $S$). Given an additional multi-set $S'$, we denote by $S \cup S'$ and $S \setminus S'$ two multi-set containing $f(p, S) + f(p, S')$ and $\max\{f(p, S) - f(p, S'), 0\}$ copies of every element $p \in S\cup S'$, respectively.
\item Given a multi-set $S$ of weighted points and a point $p$.
    \begin{compactitem}
        \item Let $S_p$ denote the multi-set of weighted points in $S$ that have $p$ as their location. More formally, for every weighted point $\tilde{q} = (q, x) \in S$, the multi-set $S_p$ contains $f(\tilde{q}, S)$ copies of $\tilde{q}$ if $q = p$, and no copy of $\tilde{q}$ otherwise.
        \item The weight of $S$, denoted by $w(S)$, is the total weight of the weighted points in $S$. More formally, $w(S) = \sum_{\tilde{q} \in S} w(\tilde{q}) \cdot f(\tilde{q}, S)$,
        \item Let $m_p$ be the maximum weight of a multi-set $S' \subseteq S \setminus S_p$ such that the path connecting every two weighted points of $S'$ does not go through $p$.
        \item We say that $p$ is a {\em weighted median} of $S$ if $m_p \leq w(S) / 2$.
    \end{compactitem}
\item Given a multi-set $S$ of points, we say that a point $p$ is a {\em median} of $S$ if it is a weighted median of the multi-set $S'$ containing $f(p, S)$ copies of $(p, 1)$ for every point $p \in S$.
\end{compactitem}
\end{definition}

Informally, a point $p$ is a median of $S$ if removing it splits the tree $T$ into parts, each containing at most $|S|/2$ points of $S$. The importance of median points stems from the following easy observation, whose proof (with simpler versions going back to~\cite{Jordan1869})
is deferred to Appendix~\ref{ap:missing_model_proofs}.

\begin{observation} \label{ob:median_optimal}
For every non empty finite multi-set $S$ of weighted points, there is at least one weighted median. Moreover, a point $p$ is a weighted median of $S$ if and only if locating the facility at $p$ minimizes the weighted total cost of a set of agents located at the points of $S$ (\ie, the sum $\sum_{\tilde{q} \in S} w(\tilde{q}) \cdot f(\tilde{q}, S) \cdot \dist(p, \tilde{q})$).
\end{observation}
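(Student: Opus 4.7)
The plan is to characterize global minimizers of the weighted cost function and then read off both the existence of a weighted median and the iff-characterization simultaneously. Let $\Phi(p) = \sum_{\tilde{q} \in S} w(\tilde{q}) \cdot f(\tilde{q}, S) \cdot \dist(p, \tilde{q})$. The key structural fact I will use is that for every weighted point $\tilde{q}$, the map $p \mapsto \dist(p, \tilde{q})$ is continuous, piecewise linear, and convex along every geodesic of $T$ (trees are $0$-hyperbolic), so $\Phi$ inherits these properties. Since $T$ equipped with its metric is a compact space and $\Phi$ is continuous on it, $\Phi$ attains a global minimum.

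Next, fix an arbitrary point $p \in T$ and an edge $e$ incident to $p$, and compute the right directional derivative of $\Phi$ as one moves an infinitesimal distance from $p$ into the component of $T \setminus \{p\}$ containing the interior of $e$. Let $W_e$ denote the total weight of the weighted points of $S$ that lie strictly inside that component. Each such $\tilde{q}$ gets closer at unit rate and contributes $-w(\tilde{q}) \cdot f(\tilde{q}, S)$ to the derivative; every other weighted point of $S$ moves away at unit rate (either it is located at $p$ itself, in which case $\dist(p, \tilde{q})$ grows linearly from $0$, or it lies in a different component of $T \setminus \{p\}$) and contributes $+w(\tilde{q}) \cdot f(\tilde{q}, S)$. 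Summing yields a derivative of exactly $w(S) - 2W_e$. On the other hand, any submultiset $S' \subseteq S \setminus S_p$ whose pairwise connecting paths avoid $p$ is contained in a single component of $T \setminus \{p\}$, and the maximum weight of such a component is $\max_e W_e$; hence $m_p = \max_e W_e$, and the weighted-median condition $m_p \leq w(S)/2$ is therefore equivalent to $w(S) - 2W_e \geq 0$ holding for every edge $e$ incident to $p$.

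Finally, by convexity of $\Phi$ along geodesics, $p$ is a global minimizer of $\Phi$ if and only if its right directional derivative is non-negative in every direction out of $p$. By the previous paragraph this condition is exactly $p$ being a weighted median of $S$, which establishes the iff-characterization; combined with the existence of a global minimizer, this also yields the existence of a weighted median. The main subtlety I expect is in the derivative computation, specifically handling weighted points located at $p$ itself (which fall on neither ``side'' in the $m_p$ definition but still contribute $+w(\tilde{q}) \cdot f(\tilde{q}, S)$ to every directional derivative out of $p$) and weighted points sharing a common location; once these are tracked per edge incident to $p$, the argument reduces to the standard weighted $1$-median-on-a-tree calculation.
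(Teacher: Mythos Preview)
Your proposal is correct and takes a genuinely different route from the paper. The paper splits the proof into two lemmata: first it establishes existence of a weighted median by a discrete greedy walk over the finite set $R = V \cup \{p : \exists_w\, (p,w)\in S\}$, repeatedly moving from the current point into the unique heavy component until the median condition is met; second, it proves the optimality characterization by fixing a median $p$ and an arbitrary point $q$, partitioning $S$ according to whether the path to $q$ passes through $p$, and bounding the cost change when the facility moves from $q$ to $p$ (with a separate strict-inequality argument for non-medians). Your argument instead treats $\Phi$ analytically: compactness of the metric tree gives a minimizer, convexity of $\Phi$ along geodesics reduces global optimality to nonnegativity of all one-sided directional derivatives, and the derivative at $p$ in the direction of an incident edge $e$ is computed as $w(S) - 2W_e$, which is precisely the median inequality for that component. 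This packages existence and the iff-characterization into a single step and is the standard $1$-median-on-a-tree calculation; the paper's approach is more combinatorial and self-contained, avoiding any appeal to compactness or convexity, at the price of handling existence and optimality separately and needing an ad~hoc strictness argument for the ``only if'' direction.
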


The mechanism that always picks a median of all the locations of the agents (with a careful tie breaking) is optimal and agent-side incentive compatible.
Thus, we have the following observation that naturally extends a well known result for line metrics \cite{ProcacciaT13}. We present its proof for completeness in Appendix~\ref{ap:missing_model_proofs}.

\begin{observation} \label{ob:agent_side_optimal}
There exists a deterministic agent-side incentive compatible mechanism that is optimal (\ie, $1$-competitive).
\end{observation}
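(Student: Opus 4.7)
The plan is to use the mechanism $M^*$ that, given the mediators' reports, assembles the multi-set $S$ of all reported agent locations and outputs a canonically chosen median of $S$. By Observation~\ref{ob:median_optimal}, any median minimizes the total cost, so $M^*$ is $1$-competitive regardless of how ties are broken; the substantive task is to specify a tie-breaking rule that makes $M^*$ agent-side incentive compatible.

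For tie-breaking I would fix once and for all a root $r \in V$ and, at each vertex, a total order on its children, turning $T$ into a rooted ordered tree. A standard structural fact about tree medians (going back to~\cite{Jordan1869}) is that the set of medians of any finite multi-set on $T$ is either a single point or a closed sub-segment of a single edge; in either case the set has two canonical endpoints. The mechanism $M^*$ outputs the endpoint closer to $r$, breaking any remaining tie according to the fixed child-ordering. This is deterministic and well-defined, and by construction always returns a median of $S$.

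For agent-side IC, fix a mediator $d_i$ (assumed truthful) and an agent $a_{i,j}$; let $t$ be her true location and $t'$ any deviation, and write $p, p'$ for the medians selected by $M^*$ on the two resulting inputs. The heart of the argument is the monotonicity claim that $p$ lies on the (unique) $p'$-to-$t$ path in $T$. Granted this, additivity of distances along a tree yields $\dist(p',t) = \dist(p',p) + \dist(p,t) \ge \dist(p,t)$, which is exactly the agent-side IC condition. To establish the claim I would, for each edge $e$ of $T$, track the total weight of $S$ on each side of $e$: switching $a_{i,j}$'s report from $t$ to $t'$ alters these side-weights only for edges on the $t$-to-$t'$ path, and the median characterization $m_p \le w(S)/2$ together with the tie-breaking rule then forces the newly chosen median to lie on that same path, on the $t'$ side of $p$.

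The main obstacle is combining the monotonicity with the tie-breaker in the case where the deviation $t \mapsto t'$ changes the set of medians from a single point to a segment (or vice versa); one must verify that the canonical choice does not jump ``backwards'' relative to the agent's move and thereby enable a profitable manipulation. On a line this reduces to the classical result that a median with a consistent ``leftmost'' tie-break is strategy-proof; on a tree one restricts attention to the unique path between $p$ and $p'$, which by the structural fact above sits inside a single edge whenever the set of medians is non-trivial, so the argument ultimately reduces to the one-dimensional case.
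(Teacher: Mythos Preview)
Your approach is essentially the same as the paper's: define the mechanism as ``median with a canonical tie-break toward a fixed reference point,'' invoke Observation~\ref{ob:median_optimal} for optimality, and for agent-side IC prove the monotonicity claim that the selected median can only move along the $t$--$t'$ path in the direction of the deviation. The paper carries this out by fixing an arbitrary point $z$ and choosing the median closest to $z$ (Observation~\ref{ob:closest_median_exists} guarantees uniqueness, so your child-ordering tiebreak is not needed), and then proves the monotonicity statement in full generality as Observation~\ref{ob:point_moves}.

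One correction: your structural claim that the median set is ``a closed sub-segment of a single edge'' is false. On a path $A$--$B$--$C$ with one agent at $A$ and one at $C$, every point of the path is a median, spanning two edges. What is true is that the median set is always a (possibly degenerate) sub\emph{path} of $T$; this still has two well-defined endpoints, so your tie-breaking rule survives. But your final reduction---``the path between $p$ and $p'$ sits inside a single edge, so we reduce to the one-dimensional case''---does not go through as stated. The paper's Observation~\ref{ob:point_moves} handles this directly on the tree without any single-edge assumption, and is exactly the lemma you need to fill the gap.
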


We note that the optimal agent-side IC mechanism is deterministic. Hence, randomization clearly does not help in improving performance when mediators are not strategic. Our results show that this is not the case when mediators are strategic and one aims for two-sided IC mechanisms.

\subsection{Mediator Based Algorithms}

We say that an algorithm for the center is {\em mediator based} if it uses only an optimal facility location for each mediator (but never uses any other information regarding the positions of the agents themselves).
We show that for mediator based algorithms, randomization does not improve performance, as any such randomized $\alpha$-competitive algorithm can be transformed to a mediator based deterministic algorithm with the same competitive ratio (moreover, the resulting algorithm performs at least as good on every single input).
To state this result we first need the following lemma whose proof appears in Appendix~\ref{ap:missing_model_proofs}.

\begin{lemma} \label{le:single_better_than_distribution}
For any tree and any distribution over points $F$, there exists a point $p(F)$ such that for any finite multi-set $S$ of points:
\[
    \underset{p' \sim F}{\mathbb{E}} \left[\sum_{q \in S} \dist(p', q) \cdot f(q, S)\right]
    \geq
    \sum_{q \in S} \dist(p(F), q) \cdot f(q, S)
    \enspace.
\]
Moreover, for the euclidian metric on $[a, b]$ (for arbitrary $a$ and $b$) the expected location according to $F$ can serve as such a point $p(F)$.
\end{lemma}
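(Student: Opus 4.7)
The plan is to first reduce the multi-set statement to a per-point statement: by linearity it suffices to find a single point $p(F)$ satisfying
$\dist(p(F), q) \le \mathbb{E}_{p' \sim F}[\dist(p', q)]$ for every $q \in T$. Multiplying by $f(q, S)$ and summing over $q \in S$ then yields the desired inequality for every finite multi-set $S$.

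To construct such a $p(F)$ in a general tree, I would define $r(q) := \mathbb{E}_{p' \sim F}[\dist(p', q)]$ and the closed ball $B_q := \{x \in T : \dist(x, q) \le r(q)\}$, and show that $\bigcap_{q \in T} B_q \ne \emptyset$. Two properties drive the argument. First, each $B_q$ is a connected subset (a subtree) of $T$: along the unique tree-path between any $x, y \in B_q$, the distance to $q$ is a convex function of the path parameter, hence maximized at the endpoints, so it never exceeds $r(q)$. Second, for any $q_1, q_2 \in T$, the triangle inequality $\dist(p', q_1) + \dist(p', q_2) \ge \dist(q_1, q_2)$ holds pointwise in $p'$, and taking expectations gives $r(q_1) + r(q_2) \ge \dist(q_1, q_2)$, which in a tree metric is precisely the condition for $B_{q_1} \cap B_{q_2}$ to be non-empty (two balls in a tree intersect iff their radii sum to at least the distance between their centers).

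With pairwise intersection in hand, I would invoke the Helly property of subtrees of a tree (every finite pairwise-intersecting family of subtrees has a common point) to conclude that the family $\{B_q\}_{q \in T}$ has the finite intersection property. Since $T$ is a compact metric space (finitely many edges, each of finite length) and every $B_q$ is closed, compactness upgrades this to $\bigcap_{q \in T} B_q \ne \emptyset$, and $p(F)$ can be taken to be any point in this intersection.

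For the ``moreover'' claim on the Euclidean interval $[a, b]$, I would verify directly that $p(F) := \mathbb{E}_{p' \sim F}[p']$ lies in every $B_q$: this is exactly Jensen's inequality applied to the convex function $x \mapsto |x - q|$, giving $|\mathbb{E}[p'] - q| \le \mathbb{E}[|p' - q|] = r(q)$. The main conceptual step is the Helly-for-subtrees argument; the only subtle point is that the family $\{B_q\}_{q \in T}$ is uncountable, so compactness of $T$ is needed to pass from the finite-intersection Helly statement to a common point of the entire family. The remaining verifications (convexity of $\dist(\cdot, q)$ along tree-paths, the distance-sum characterization of ball intersection in trees) are standard facts about tree metrics.
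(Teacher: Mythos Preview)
Your proposal is correct and takes a genuinely different route from the paper's own proof. The paper constructs $p(F)$ iteratively: it builds a finite sequence of distributions $F = F_1, F_2, \ldots, F_\ell$, where each $F_{i+1}$ is obtained from $F_i$ by collapsing all the mass lying on some path $P$ onto the single ``average'' point of $P$ (using a Jensen-type inequality along the path to show this cannot increase $\mathbb{E}[\dist(p',q)]$ for any $q$); once the support is reduced to a single edge, one more collapse yields a point distribution, and that point is $p(F)$. Your argument is instead a one-shot existence proof via the Helly property of subtrees: define the closed balls $B_q$ of radius $r(q)=\mathbb{E}[\dist(p',q)]$, observe that they are subtrees and that the expected triangle inequality forces pairwise intersection, then invoke Helly for subtrees plus compactness of $T$. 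What the paper's approach buys is an explicit, constructive description of $p(F)$ (and along the way the line case falls out as the first collapse); what your approach buys is brevity and a cleaner conceptual picture, at the cost of being non-constructive and relying on an external structural fact about trees. Both approaches ultimately rest on the same convexity of $\dist(\cdot,q)$ along tree-paths, but they package it very differently.
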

Note that $p(F)$ does {\em not} depend on $S$, and the same $p(F)$ works for every $S$.
A randomized algorithm maps the locations of all agents to a distribution over locations $F$. By the above lemma the deterministic algorithm that instead locates the facility deterministically at $p(F)$ can only improve the social cost for every input. Thus, we have the following corollary.
\begin{corollary}
Given any $\alpha$-competitive mediator based {\em randomized} algorithm, it is possible to construct a mediator based {\em deterministic} algorithm with the same competitive ratio $\alpha$.
\end{corollary}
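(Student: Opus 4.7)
My plan is to construct the deterministic algorithm explicitly from the randomized one using the point guaranteed by Lemma~\ref{le:single_better_than_distribution}, and then verify that it is both mediator based and $\alpha$-competitive by applying the lemma twice: once to argue the competitive ratio, and once to confirm that only the same information is used.

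First, I would set up notation. Let $A$ be the given $\alpha$-competitive mediator based randomized algorithm. Because $A$ is mediator based, its input reduces to a tuple $I = ((p_1,\ldots,p_k),(n_1,\ldots,n_k))$ consisting of one optimal location per mediator together with the publicly known sizes $n_i$. On input $I$, $A$ outputs a distribution $F_I$ over points of $T$. Now define the deterministic algorithm $A'$ that, on input $I$, deterministically returns the point $p(F_I)$ guaranteed by Lemma~\ref{le:single_better_than_distribution}. Since $p(F_I)$ depends only on $F_I$, which in turn depends only on $I$, the algorithm $A'$ is itself mediator based.

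Next, I would verify the competitive ratio. Fix any instance with true agent locations, let $S$ be the multi-set consisting of every $t_{i,j}$ (with multiplicity), and let $\mathrm{OPT}(S)$ denote the minimum social cost. By Lemma~\ref{le:single_better_than_distribution} applied with this $S$ and $F=F_I$,
\[
\sum_{q\in S}\dist(p(F_I),q)\cdot f(q,S)\;\le\;\mathbb{E}_{p'\sim F_I}\!\left[\sum_{q\in S}\dist(p',q)\cdot f(q,S)\right].
\]
The left hand side is the social cost incurred by $A'$, and the right hand side is the expected social cost incurred by $A$, which by hypothesis is at most $\alpha\cdot \mathrm{OPT}(S)$. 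Chaining these inequalities yields the $\alpha$-competitiveness of $A'$ on every input, which is actually stronger than what the statement asks.

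The only genuine content is Lemma~\ref{le:single_better_than_distribution} itself; given that lemma, the proof of the corollary is essentially a one-line substitution argument, so I do not anticipate a real obstacle. The one subtle point to be careful about is that the point $p(F_I)$ must not depend on the particular multi-set $S$ of true locations (only on $F_I$); this is exactly the ``moreover'' emphasized in the statement of the lemma, and it is what makes the resulting $A'$ a well-defined algorithm whose competitive guarantee holds uniformly over all inputs consistent with the mediators' reports.
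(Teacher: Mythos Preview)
Your proposal is correct and follows essentially the same approach as the paper: replace the distribution $F$ output by the randomized algorithm with the single point $p(F)$ from Lemma~\ref{le:single_better_than_distribution}, and observe that this can only decrease the social cost on every input while remaining mediator based. Your write-up is in fact more explicit than the paper's, which dispatches the corollary in a single sentence, and you correctly highlight the key subtlety that $p(F)$ is independent of $S$.
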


Note that the above transformation does {\em not} maintain incentives, indeed we show below that there exists a mediator based randomized two-sided IC mechanism which is $2$-competitive, but no mediator based deterministic two-sided IC mechanism achieves this competitive ratio. Thus, although randomization does not improve performance for mediator-based {\em algorithms}, it does  improve performance for mediator-based {\em two-sided IC mechanisms}.

\section{Deterministic Two-Sided IC Mechanisms} \label{sc:deterministic_algorithm}

In this section we extend the deterministic ``median of medians'' mechanism of~\cite{DekelFP2010,ProcacciaT13} from lines to trees and show that
the resulting mechanism, which we call the \emph{Weighted Median Mechanism} (WMM), is a deterministic two-sided IC mechanism. This mechanism is also $3$-competitive, which is essentially tight by a lower bound of~\cite{DekelFP2010,ProcacciaT13} (given for completeness as Theorem~\ref{th:lower-bound-det-mech}).
The mechanism essentially elicits from each mediator an optimal location from the mediator's perspective (median of the mediator's agents),
and then picks a weighted median of these locations. To create the right incentives for the agents and mediators, tie breaking must be handled carefully in both steps of the mechanism.
By breaking ties in a way that is independent of the players' reports, we make sure the players have no incentive to manipulate.
The basic idea is that in each step we break ties in favor of the point closest to an arbitrary predetermined point.
To formally describe WMM we need the following observation which proves that the above tie breaking rule is well defined, \ie, whenever the mechanism has to decide between a set of points, there is always a unique point in the set which is closest to the arbitrary predetermined point.

\begin{observation} \label{ob:closest_median_exists}
Given a non-empty finite multi-set $S$ of weighted points and an arbitrary point $z$, the set $M$ of weighted medians of $S$ contains a unique point $p$ closest to $z$.
\end{observation}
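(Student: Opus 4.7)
The plan is to deduce the existence and uniqueness of a closest median to $z$ from two properties of $M$: that $M$ is non-empty, and that $M$ is topologically connected as a subset of $T$ (\ie, a subtree). Non-emptiness is immediate from Observation~\ref{ob:median_optimal}. Once these are established, a standard tree-projection argument will yield the unique closest point in $M$ to $z$.

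The main step is showing connectedness. I would take any two weighted medians $p_1, p_2 \in M$ and any point $r$ on the unique path in $T$ from $p_1$ to $p_2$, and verify that $r \in M$ by showing every component of $T \setminus \{r\}$ carries weight at most $w(S)/2$. Classify such a component $C$ by which of $p_1, p_2$ it contains. If $p_1 \notin C$, then $C$ is contained inside a single component of $T \setminus \{p_1\}$ (the one adjacent to $p_1$ in the direction of $r$), whose total non-$p_1$ weight is $\leq w(S)/2$ because $p_1 \in M$; therefore $w(C) \leq w(S)/2$. The remaining component of $T \setminus \{r\}$ contains $p_1$ but not $p_2$, and sits inside the component of $T \setminus \{p_2\}$ containing $r$, giving the same bound via the median condition at $p_2$. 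A subtlety to be careful about is that the definition of $m_r$ excludes points located exactly at $r$, but this exclusion only lowers the relevant weights, so the bounds go through.

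For uniqueness, suppose $p, p' \in M$ both minimize $\dist(z, \cdot)$. In a tree, let $u$ be the unique point at which the geodesics $z \to p$ and $z \to p'$ first diverge; then $u$ lies on the path from $p$ to $p'$, and this path is contained in $M$ by connectedness, so $u \in M$. If $u \neq p$ then $\dist(z, u) < \dist(z, p)$, contradicting minimality; hence $u = p$, and symmetrically $u = p'$, forcing $p = p'$.

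The main obstacle is the connectedness step, and in particular making the case analysis around $r$ clean despite the fact that the definition of $m_r$ excludes weighted points at location $r$ while $S_{p_1}, S_{p_2}$ may contribute when we apply the median condition at $p_1$ or $p_2$; the trick is to bound each component of $T\setminus\{r\}$ by inclusion inside a component of $T\setminus\{p_1\}$ or $T\setminus\{p_2\}$, rather than trying to account for locations at $r$ directly.
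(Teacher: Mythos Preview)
Your approach is correct and is, if anything, cleaner than the paper's. You isolate the structural fact that $M$ is path-convex (every point on the geodesic between two medians is a median), and then use the tree meeting-point argument for uniqueness. The paper never states convexity of $M$ as a lemma; instead it first replaces each median $p$ by a canonical nearby median $r(p)$ lying in the finite set $V\cup\{z\}\cup\{\text{locations of }S\}$, thereby reducing existence of a closest median to a finite minimization, and then argues uniqueness by an ad hoc version of your meeting-point step applied to two candidates in that finite set. Your route gives a reusable structural statement about $M$; the paper's route handles existence of the minimizer more directly by passing to a finite set, at the cost of a less transparent argument.

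One small point worth tightening: to invoke a ``standard tree-projection argument'' you need $M$ closed, not just path-convex (an open segment is path-convex but admits no closest point to an external $z$). This is easy to get here---by Observation~\ref{ob:median_optimal}, $M$ is exactly the set of minimizers of the continuous function $p\mapsto\sum_{\tilde q\in S} w(\tilde q)\,f(\tilde q,S)\,\dist(p,\tilde q)$ on the compact metric tree $T$, hence $M$ is closed (and non-empty)---but you should say so explicitly, since your connectedness argument alone does not give it.
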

\begin{proof}
The proof of this observation consists of several steps. First, showing that it is enough to consider a finite set $R$ of weighted medians because for every weighted median $p$ either $p \in R$ or there is a weighted median $r(p) \in R$ which is closer to $z$ than $p$. Second, restricting ourselves to the set $R'$ of weighted medians from $R$ that are closest to $z$ (\ie, as close to $z$ as any other weighted median of $R$). Finally, proving that $R'$ is of size $1$, which completes the proof of the theorem.

Given a point $p \in M$, let us denote by $r(p)$ the first point of $V \cup \{z\} \cup \{p' ~|~ \exists_x (p', x) \in S\}$ along the path from $p$ to $z$ (there is such a point because $z$ is always on this path and the set is finite). Let us prove that $r(p)$ is also a weighted median of $S$ (\ie, $r(p) \in M$). Consider an arbitrary sub multi-set $S'$ of weighted points from $S$ which are connected via paths that do not go through $r(p)$. If $p \in V \cup \{z\} \cup \{p' ~|~ \exists_x (p', x) \in S\}$, then $r(p) = p$, and therefore, the weighted points of $S'$ are connected via paths that do not go through $p$ as well. Otherwise, every path connecting two weighted points of $S$ which goes through $p$ must go through $r(p)$ as well. Hence, once again, the weighted points of $S'$ are connected via paths that do not go through $p$. In both cases, we get $w(S') \leq w(S) / 2$ because $p$ is a weighted median of $S$, which completes the proof that $r(p) \in M$.

Let $R = \{r(p) ~|~ p \in M\}$ and $R' = \{p \in R ~|~ \dist(p, z) = \min_{q \in R} \dist(q, z)\}$ (the minimum in the definition of $R'$ always exists because $R$ is a finite set). Observe that $R' \subseteq R \subseteq M$, and for every point $p \in M$ either $p = r(p)$ or $\dist(p, z) > \dist(r(p), z)$. Hence, every point in $R'$ is closer to $z$ than any point of $M \setminus R$. Moreover, all the points in $R'$ have the same distance from $z$. To complete the proof we have to show that $|R'| = 1$. Clearly, $R' \neq \varnothing$, so it is enough to show $|R'| \leq 1$. Assume towards a contradiction that there are two points $p\neq q$ in $R'$. Let $y$ be the first point where the paths from $p$ and $q$ to $z$ meet. Since $\dist(p, z) = \dist(q, z)$, $y$ cannot be either $p$ or $q$, and therefore, $\dist(y, z) < \dist(p, z) = \dist(q, z)$. Observe that $y$ is located along the path connecting $p$ and $q$, and therefore, we can partition $S$ into three sub multi-sets whose union is $S$:
\begin{compactitem}
    \item $S_1$ - Contains weighted points of $S$ for which the path connecting them to $y$ goes through $p$.
    \item $S_2$ - Contains weighted points of $S$ for which the path connecting them to $y$ goes through $q$.
    \item $S_3$ - The rest of the weighted points of $S$ (the path connecting them to $y$ does goes through neither $p$ nor $q$).
\end{compactitem}
All the weighted points of $S_1 \cup S_3$ are connected by paths that do not go through $q$, and therefore, $w(S_1 \cup S_3) \leq w(S) / 2$ because $q$ is a weighted median of $S$. Thus, $w(S_2) = w(S) - w(S_1 \cup S_3) \geq w(S) / 2$. Using a similar argument we can also get $w(S_1) \geq w(S) / 2$. As $S_1 \cup S_2 \subseteq S$, the two results can hold at the same time only if $w(S_1) = w(S_2) = w(S) / 2$ and $w(S_3) = 0$.

Consider now an arbitrary sub multi-set $S'$ of weighted points of $S$ that are connected by paths that do not go through $y$. Clearly, either $S' \subseteq S_1 \cup S_3$ or $S' \subseteq S_2 \cup S_3$. In the first case $w(S') \leq w(S_1 \cup S_3) = w(S) / 2$, and in the second case $w(S') \leq w(S_2 \cup S_3) = w(S) / 2$. Hence, in both cases $w(S') \leq w(S) / 2$, which implies that $y \in M$. However, this is a contradiction since $\dist(y, z) < \dist(p, z) = \dist(q, z)$.
\end{proof}

The {\em Weighted Median Mechanism (WMM)} is a direct revelation mechanism in which the center does the following:
\begin{compactitem}
    \item For each mediator $d_i$ it computes $\ell_i$ which is the median of the multi-set $\{t'_{i, j} | 1 \leq j \leq n_i\}$ closest to $z_i$, where $t'_{i, j}$ is the location reported by $d_i$ for agent $a_{i, j}$ and $z_i$ is an arbitrary point chosen independently of the reports received from the mediators (such a median exists, and is unique, by Observation~\ref{ob:closest_median_exists}).
    \item Let $M$ be the set of weighted medians of the multi-set $\{(\ell_i, n_i) | 1 \leq i \leq k\}$.
    Locate the facility at the point of $M$ closest to $z$, where $z$ is an arbitrary point chosen independently of the reports received from the mediators.
\end{compactitem}

Note that this direct revelation mechanism can also be executed with much less communication since the only information the center needs from each mediator is a single point (the location of the median closest to some arbitrary point), and not the location of every agent represented by the mediator. Thus, the center can ask each mediator $d_i$ to report a single location $\ell_i$, and locate the facility at the weighted median of the multi-set $\{(\ell_i, n_i) | 1 \leq i \leq k\}$ closest to some point $z$ picked in advance. Observe that this algorithm for the center is mediator based. The resulting mechanism clearly achieves the same competitiveness as the direct revelation mechanism when each mediator $d_i$ indeed reports a median location $\ell_i$ of his agents closets to an arbitrary point $z_i$ since the location is picked using exactly the same method. Moreover, this mechanisms is also two-sided IC since the space of possible deviations for the mediators in this mechanism is more restricted than the corresponding space in the direct revelation mechanism.

On line metrics WMM is essentially identical to a mechanism already known to be $3$-competitive and mediator-side IC by an observation of \cite{ProcacciaT13} based on a result of \cite{DekelFP2010}. We present a proof that WMM is $3$-competitive and mediator-side IC not only for lines, but also for trees. We also show that it is agent-side IC. The next theorem summarizes the proved properties of WMM.

\begin{theorem} \label{thm:wmm_properties}
For any tree metric, the Weighted Median Mechanism is a deterministic two-sided IC mechanism with a competitive ratio of $3$.
\end{theorem}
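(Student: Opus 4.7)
My plan is to prove the three claims of the theorem---the $3$-competitive ratio, agent-side IC, and mediator-side IC---separately, relying on two key tools: Observation~\ref{ob:median_optimal}, which identifies weighted medians with cost minimizers, and a \emph{monotonicity lemma} for WMM's tie-broken weighted median on a tree, asserting that if a single weighted input point is moved from $q$ to $q'$ while the other inputs are fixed, then the new tie-broken median lies on the unique tree-path from the old median toward $q'$.

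For the \emph{competitive ratio}, let $p$ denote the WMM output and $p^*$ an optimal location. For each mediator $d_i$ the triangle inequality gives
\[
 \sum_{j} \dist(p, t_{i,j}) \;\le\; n_i\cdot\dist(p, \ell_i) \;+\; \sum_{j} \dist(\ell_i, t_{i,j}).
\]
Summing over $i$, the second term is at most $\sum_{i,j}\dist(p^*, t_{i,j}) = \mathrm{OPT}$ because each $\ell_i$ is a cost-minimizer for $d_i$'s agents (Observation~\ref{ob:median_optimal}). For the first term, $p$ is a weighted median of $\{(\ell_i, n_i)\}_i$, so Observation~\ref{ob:median_optimal} gives $\sum_i n_i\cdot\dist(p, \ell_i) \le \sum_i n_i\cdot\dist(p^*, \ell_i)$, and one more triangle inequality bounds the right-hand side by $\mathrm{OPT} + \sum_{i,j}\dist(\ell_i, t_{i,j}) \le 2\,\mathrm{OPT}$. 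Adding the two estimates yields $\sum_{i,j}\dist(p, t_{i,j}) \le 3\,\mathrm{OPT}$.

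For \emph{agent-side IC}, fix $d_i$ as truthful and fix all other reports. A deviation by $a_{i,j}$ from $t_{i,j}$ to $t'$ changes exactly the single input $t'_{i,j}$ to the inner median step, so the monotonicity lemma moves $\ell_i$ along the tree-path toward $t'$, and applying the lemma again at the outer step moves the facility from $p$ along the tree-path toward the new $\ell_i'$. Composing, the facility's new location $p'$ lies on the tree-path from $p$ in the direction of $t'$; a short case analysis on the position of $t_{i,j}$ relative to that path shows that $\dist(p', t_{i,j}) \ge \dist(p, t_{i,j})$. For \emph{mediator-side IC}, fix $d_i$'s agents as truthful. Since the mechanism depends on $d_i$'s report only through the extracted point $\ell_i'$, and since the mediator can realize any $\ell_i'$ by reporting all his agents at a common location, his effective choice is the value of $\ell_i'$. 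Under truthfulness, $\ell_i' = \ell_i$, which by Observation~\ref{ob:median_optimal} is a minimizer of $g(q) := \sum_j \dist(q, t_{i,j})$ (a function that is convex along tree paths). By the outer-step monotonicity lemma, a deviation $\ell_i' \neq \ell_i$ moves the facility from $p$ to some $p'$ on the path toward $\ell_i'$; invoking the weighted-median characterization of $p$ under truthful reporting, one shows that $p'$ cannot overshoot $\ell_i$ in a way that would reduce $g$, which is the classical strategyproofness argument for weighted medians with tree-single-peaked preferences.

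The main technical obstacle is establishing the monotonicity lemma in a form compatible with the ``closest-to-$z$'' tie-breaking rule. Observation~\ref{ob:closest_median_exists} guarantees the tie-broken median is uniquely defined, but because the set of weighted medians can form a nontrivial subtree, the lemma requires tracking how this set deforms under a single-point perturbation (essentially: it extends in the direction of the perturbation and contracts in the opposite direction) and then verifying that the closest-to-$z$ representative of the varying set traces out a monotone path. Once this lemma is in place, the IC proofs reduce cleanly to the two applications outlined above, and the competitive-ratio proof is immediate from Observation~\ref{ob:median_optimal} and two triangle inequalities.
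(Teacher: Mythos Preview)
Your proposal is correct and follows essentially the same route as the paper. Your ``monotonicity lemma'' is the paper's Observation~\ref{ob:point_moves}; your competitive-ratio argument is exactly the $s=2$ case of Proposition~\ref{prop:IWMM-comp}; and both IC proofs in the paper proceed by applying Observation~\ref{ob:point_moves} and then, for mediator-side IC, the convexity of the mediator's cost along the path from $\ell_i$ to $\ell'_i$, just as you outline.

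One point of care: the form of the monotonicity lemma you state---``the new tie-broken median lies on the path from the old median toward $q'$''---is one-sided and, taken literally, is not quite enough to make the agent-side composition work without the ``short case analysis'' you allude to. The paper's Observation~\ref{ob:point_moves} asserts the two-sided version: \emph{both} the old and the new median lie on the path from $q$ to $q'$, with the old median closer to $q$. With that form in hand, the two applications (inner then outer) line up to give the single chain $t_{i,j}\to\ell_i\to p\to p'\to\ell'_i$, from which $\dist(t_{i,j},p)\le\dist(t_{i,j},p')$ is immediate and no case analysis is needed. Your one-sided version does imply the two-sided one by applying it in both directions, but it is cleaner to state and prove the two-sided form directly. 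Similarly, in the mediator-side argument the two-sided form gives that $p$ and $p'$ both lie on the segment from $\ell_i$ to $\ell'_i$ with $p$ closer to $\ell_i$, so no ``overshoot'' issue arises: convexity of $g$ along that segment with minimum at the endpoint $\ell_i$ gives $g(p)\le g(p')$ at once.
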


The proof of Theorem~\ref{thm:wmm_properties} can be found in Section~\ref{ssc:wmm}. The following theorem proved by~\cite{DekelFP2010, ProcacciaT13} shows that WMM has an optimal competitive ratio. For completeness, we give a proof of this result in Section~\ref{sec:lower}.

\begin{theorem}
\label{th:lower-bound-det-mech}
Fix any constant $\ee > 0$.
Then, no direct revelation \emph{deterministic} mechanism that is mediator-side incentive compatible has a competitive ratio of $3-\ee$, even for line metrics.
\end{theorem}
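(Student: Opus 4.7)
My plan is to reduce the problem to a well-studied setting: deterministic strategy-proof mechanisms for weighted facility location on a line. Given any mediator-side IC mechanism, I would first observe that, because the mediator's cost on an input is a sum of distances from a single facility location, his best response (given that his agents are truthful and hence his received reports equal their true positions) is equivalent to that of a single weighted agent whose weight is $n_i$ and whose preferred location is any median of his $n_i$ agents' positions. Thus any deterministic mediator-side IC mechanism with competitive ratio $\rho$ induces a deterministic strategy-proof mechanism for weighted agents on the line with competitive ratio $\rho$ (the social costs differ only by a constant determined by each mediator's internal dispersion). It therefore suffices to show that no deterministic strategy-proof mechanism for weighted facility location on a line can achieve a competitive ratio better than $3-\varepsilon$.

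For the line problem, the plan is to construct, for each small $\varepsilon>0$, a family of instances with few weighted agents where the strategy-proofness constraints across ``neighboring'' instances are incompatible with competitive ratio $3-\varepsilon$. Concretely, I would consider two instances $I_A, I_B$ sharing the weights $w_1 = w$ (the ``heavy'' agent) and $w_2 = 1$: in $I_A$, agent $1$ is at $0$ and agent $2$ at $1$; in $I_B$, agent $1$ is at $0$ and agent $2$ at $-1$. In both the weighted optimum is at $0$, with cost $1$. Applying SP of agent $2$, who can report the other location, yields the two constraints $|p_A-1|\leq|p_B-1|$ and $|p_B+1|\leq|p_A+1|$ on the outputs $p_A,p_B$. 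I would combine these with a third, ``unanimous'' instance where agent $2$ is also at $0$ (forcing $p_C = 0$ by competitiveness, since any nonzero output gives infinite ratio) to chain SP inequalities along the path $I_B \to I_C \to I_A$, and push the ratio $w$ towards the threshold $1$ (agent $1$ barely dominates).

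The \emph{main obstacle} is tuning the weights and positions so that the chain of SP inequalities, together with the competitive ratio constraint on each instance, actually closes into a tight contradiction at factor exactly $3$ rather than some weaker $1+w/(w+1)$-style bound. I anticipate the construction must use weights that depend on $1/\varepsilon$ and instances whose optimal costs differ by a factor also depending on $1/\varepsilon$, so that the ``travel cost'' for the output between $p_A$ and $p_B$ forced by SP accumulates into a factor of $3-\varepsilon$ loss in at least one of the instances. A secondary subtlety is that tie-breaking in the optimum does not obscure the argument: one can slightly perturb weights (using $w+\varepsilon'$ instead of $w$) so that the weighted median is unique in every instance used, without affecting the asymptotic bound.
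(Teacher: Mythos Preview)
Your reduction has a fatal gap. You propose to collapse each mediator to ``a single weighted agent whose weight is $n_i$ and whose preferred location is any median of his $n_i$ agents' positions,'' and then prove a $3-\varepsilon$ lower bound for deterministic strategy-proof weighted facility location on the line. But that target statement is \emph{false}: the weighted-median mechanism is strategy-proof and \emph{optimal} (competitive ratio $1$) for standard weighted agents, where agent $i$'s cost is $n_i\cdot|p-\ell_i|$. Your concrete instances $I_A,I_B,I_C$ illustrate this: with weights $w$ and $1$ and the heavy agent at $0$, the weighted median outputs $0$ in all three instances, satisfies every SP inequality you wrote, and achieves the optimum. No chaining of SP constraints across $I_A,I_B,I_C$ can yield a ratio above $1$, let alone $3-\varepsilon$.

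The difficulty you are discarding in the reduction is exactly what drives the lower bound. A mediator is \emph{not} equivalent to a weighted agent for competitive-ratio purposes: his agents may be spread out, and the social cost (and hence the competitive ratio) is computed against the \emph{true} agent positions, not against $n_i$ copies at the median. Mediator-side IC forces the mechanism to behave consistently across instances that share the same mediator medians but have very different internal agent distributions and hence very different optima. The paper exploits precisely this: it builds instances $I_1^{l,h},I_2^{l,h},I_3^{l,h}$ with two mediators of $2r+1$ agents each, all having medians $l$ and $h$ respectively, but with the agents of one mediator split $r{+}1$ versus $r$ between $l$ and $h$. Via the intermediate instance $I_3$ (all agents at the medians), IC forces the same output $p$ on $I_1$ and $I_2$; but the optimum is at $h$ for $I_1$ and at $l$ for $I_2$, so whichever side $p$ falls on, one instance incurs cost $(3r+1)/(r+1)\to 3$. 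If $p$ is interior, a further IC step against $I_2^{0,p}$ pins the output and yields the same ratio. Your plan needs instances where the mediators' agents are genuinely dispersed; collapsing to single weighted points throws away the phenomenon.
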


\subsection{Analysis of WMM} \label{ssc:wmm}

In this section we analyze WMM and prove Theorem~\ref{thm:wmm_properties}. We begin with the following  auxiliary observation. This observation shows that, given a fixed arbitrary point $z$, if the weighted median of a multi-set $S$ closest to $z$ changes when some weight moves from the point $p$ to the point $p'$, then this weighted median must be located along the path connecting $p$ and $p'$, and moves along this path in the direction from $p$ to $p'$.
\begin{observation} \label{ob:point_moves}
Consider:
\begin{compactitem}
    \item Two multi-sets of weighted points $S$ and $S'$ differing in a single weighted point, \ie, $S' = S \setminus \{\tilde{p}\} \cup \{\tilde{p}'\}$ for a pair of weighted points $\tilde{p} \in S$ and $\tilde{p}'$ having the same weight (\ie, $w(\tilde{p}) = w(\tilde{p}')$).
    \item An arbitrary point $z$.
    \item The weighted medians $q$ ($q'$) of $S$ ($S'$) closest to $z$.
\end{compactitem}
If $q \neq q'$ then the path connecting $\tilde{p}$ and $\tilde{p}'$ goes through both $q$ and $q'$, and $q$ is closer to $\tilde{p}$ than $q'$.
\end{observation}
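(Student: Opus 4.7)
The plan is to exploit the cost-function characterisation of weighted medians (Observation~\ref{ob:median_optimal}) together with convexity of the cost along any tree-path. Write $C_S(x) := \sum_{\tilde r \in S} w(\tilde r) f(\tilde r, S)\, \dist(x, \tilde r)$ so that the weighted medians of $S$ are exactly the minimisers of $C_S$; let $M, M'$ denote the median sets of $S, S'$. Define the shift $g(x) := C_{S'}(x) - C_S(x) = w(\tilde p)[\dist(x,\tilde p') - \dist(x,\tilde p)]$. In a tree, the path from any point $x$ to $\tilde p$ (resp.\ $\tilde p'$) first meets $P$ at the unique projection $\pi(x)$, so a direct calculation yields $g(x) = w(\tilde p)(L - 2 s(x))$, where $L = \dist(\tilde p, \tilde p')$ and $s(x)$ is the parameter of $\pi(x)$ on $P$ (measured from $\tilde p$). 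Two features will drive the argument: $g$ depends only on $s(x)$, and is therefore constant along any tree-sub-path whose projection onto $P$ collapses to a single point; and $g$ is strictly decreasing in $s$ along $P$ itself.

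Adding the inequalities $C_S(q) \le C_S(q')$ and $C_{S'}(q') \le C_{S'}(q)$ yields $g(q') \le g(q)$, i.e.\ $s(q) \le s(q')$. If $s(q) = s(q')$, both inequalities must be tight, forcing $q, q' \in M \cap M'$; the uniqueness in Observation~\ref{ob:closest_median_exists} (applied to $M$, or $M'$) then contradicts $q \ne q'$. So $s(q) < s(q')$ strictly. The main remaining task is to show $q, q' \in P$, since on $P$ the parameter equals the distance from $\tilde p$, which immediately yields that the path from $\tilde p$ to $\tilde p'$ contains both $q$ and $q'$ with $q$ closer to $\tilde p$.

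I would prove $q \in P$ by contradiction. Suppose $q \notin P$. Then $\tilde p, \tilde p'$ lie in the same branch at $q$, so moving $w(\tilde p)$ from $\tilde p$ to $\tilde p'$ leaves every branch weight at $q$ unchanged; thus $q \in M \iff q \in M'$, so $q \in M'$. Since $q \ne q'$ but $q$ is the unique closest-to-$z$ element of $M$, we must have $\dist(z, q') < \dist(z, q)$, which forces $q' \notin M$. Since branch weights outside $P$ are preserved, $M \triangle M' \subseteq P$, and hence $q' \in P$. Now consider the tree-path from $q$ to $q'$: both endpoints minimise $C_{S'}$, so by convexity $C_{S'}$ is constant along it, placing the whole path in $M'$. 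The sub-path from $q$ to $\pi(q)$ projects to $\pi(q)$ alone, so $g$ is constant there; therefore $C_S$ is constant on this sub-path, and $\pi(q) \in M$. A $z$-distance dichotomy finishes it: if $z$ lies in the $q$-branch of $\pi(q)$, then the path from $z$ to $q' \in P$ passes through $\pi(q)$, so $\dist(z, q') > \dist(z, \pi(q)) \ge \dist(z, q)$, contradicting $\dist(z, q') < \dist(z, q)$; otherwise the path from $z$ to $q$ passes through $\pi(q)$, so $\dist(z, q) > \dist(z, \pi(q))$, again contradicting $\pi(q) \in M$ with $q$ closest-to-$z$.

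The argument for $q' \in P$ is dual but not symmetric. Supposing $q' \notin P$, the same branch reasoning at $q'$ yields $q' \in M \cap M'$; then $q \ne q'$ with $q$ closest-to-$z$ in $M$ gives $\dist(z, q) < \dist(z, q')$. Convexity of $C_S$ along the tree-path from $q$ to $q'$ (both endpoints in $M$) makes $C_S$ constant, placing the whole path in $M$; the sub-path from $\pi(q')$ to $q'$ projects to $\pi(q')$ alone, so $g$ is constant there, and $C_{S'} = C_S + g$ is likewise constant, placing $\pi(q') \in M'$. The same $z$-distance dichotomy now at $\pi(q')$ contradicts either subcase, yielding $q' \in P$ and completing the proof. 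The main obstacle I anticipate is the ``projection'' bookkeeping --- specifically, noticing that constancy of $g$ on a sub-path whose projection collapses is precisely what lets one upgrade ``endpoints of the sub-path lie in $M'$'' to ``they lie in $M$'' (and vice versa), supplying the auxiliary medians $\pi(q)$ and $\pi(q')$ that feed the $z$-distance contradictions.
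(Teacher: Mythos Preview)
Your proof is correct, but it takes a genuinely different route from the paper's.

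The paper argues purely combinatorially via the branch-weight definition of a weighted median. It first observes that, since $q\neq q'$, one of $q,q'$ must fail to be a median of the other multi-set (say $q\notin M'$); from this it extracts a ``heavy branch'' $S''$ at $q$ in $S'$ with $w(S'')>w(S')/2$, notes that $\tilde p'$ must lie in it, and then rules out each bad position for $q'$ (on the $\tilde p$-side of $q$, or off the path entirely) by exhibiting a contradiction with $q'$ being a median of $S'$ or with the tie-breaking toward $z$.

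Your argument instead works through the cost characterisation of medians (Observation~\ref{ob:median_optimal}). The key devices --- the shift $g(x)=w(\tilde p)(L-2s(x))$, its constancy off $P$ and strict monotonicity along $P$, and convexity of $C_S$ and $C_{S'}$ along tree-paths --- let you (i) obtain $s(q)<s(q')$ in one line from $C_S(q)\le C_S(q')$ and $C_{S'}(q')\le C_{S'}(q)$, and (ii) manufacture the auxiliary medians $\pi(q)\in M$ and $\pi(q')\in M'$ needed for the $z$-distance contradictions. You still invoke the branch-weight viewpoint once, for the clean fact $M\mathbin{\triangle}M'\subseteq P$, but the engine of the argument is analytic rather than combinatorial. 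One small wording slip: when you write ``$q$ is the unique closest-to-$z$ element of $M$, we must have $\dist(z,q')<\dist(z,q)$'', the inequality actually comes from $q\in M'$ together with $q'$ being closest-to-$z$ in $M'$; the clause about $M$ is what then forces $q'\notin M$. The logic is fine, just compressed.

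What each approach buys: the paper's argument is self-contained from the definition and never needs Observation~\ref{ob:median_optimal} or convexity, at the cost of heavier case-work tracking heavy branches. Your approach is more conceptual and makes the direction of the median's motion ($s(q)<s(q')$) immediate; it also isolates reusable facts (constancy of $g$ off $P$, convexity propagating membership in $M$ or $M'$ along paths) that would transfer to related monotonicity statements.
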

\begin{proof}
Assume $q \neq q'$. If $q'$ is also a weighted median of $S$ than $q$ must be closer to $z$ than $q'$. Similarly, if $q$ is also a weighted median of $S'$ than $q'$ must be closer to $z$ than $q$. As it is not possible that both $q$ is closer to $z$ than $q'$, and $q'$ is closer to $z$ than $q$, it must be the case that either $q$ is not a weighted median of $S'$, or $q'$ is not a weighted median of $S$. We assume without loss of generality that first option is true. Assume towards a contradiction that $q$ is not on the path from $\tilde{p}$ to $\tilde{p}'$. Let $P'$ be a sub multi-set of weighted points from $S'$ that can be connected via paths that do not contain $q$. If $f(\tilde{p}', P') < f(\tilde{p}', S')$ then $w(P') \leq w(S) / 2 = w(S') / 2$ because $q$ is a weighted median of $S$. Otherwise, consider the multi-set $P = P' \setminus \{\tilde{p}'\} \cup \{\tilde{p}\}$. By our assumption that the path between $\tilde{p}$ and $\tilde{p}'$ does not contain $q$, we get that $P$ is a sub multi-set of weighted points from $S$ connect by paths that do not include $q$. Hence, $w(P') = w(P) \leq w(S) / 2 = w(S') / 2$. In conclusion, we got that the weight of $P'$ is always at most $w(S') / 2$, and therefore, $q$ is a weighted median of $S'$, which contradicts our assumption. Hence, we can assume from now on that $q$ is on the path between $\tilde{p}$ and $\tilde{p}'$.

Since $q$ is not a weighted median of $S'$, there exists a sub multi-set $S''$ of weighted points from $S'$ that are connected via paths that do not go through $q$ and obeys $w(S'') > w(S') / 2$. Since $q$ is a weighted median of $S$, we must have: $f(\tilde{p}', S'') = f(\tilde{p}', S')$. Assume now, towards a contradiction, that $q'$ is on the path between $\tilde{p}$ and $q$. This assumption implies that $q$ is between $q'$ and $\tilde{p}'$, and therefore, the points of $S''$ are connected by paths that do not go through $q'$. However, since $q'$ is a weighted median of $S''$, this implies $w(S'') \leq w(S') / 2$, which is a contraction. Hence, we can assume from now on that $q'$ is not on the path between $\tilde{p}$ and $q$.

Next, assume towards a contradiction, that $q'$ is not on the path between $q$ and $\tilde{p}'$. Let $t$ be the point on the path between $q$ and $\tilde{p}'$ closest to $q'$. Consider an arbitrary sub multi-set $P$ of weighted points from $S'$ connected via paths that do not go through $t$. If the points of $P$ can be connected via paths that do not go through $q'$, then $w(P) \leq w(S') / 2$ because $q'$ is a weighted median of $S'$. Otherwise, at least one weighted point of $P$ is separated from $t$ by $q'$, and therefore, $\tilde{p}' \not \in P$. Moreover, all the paths between weighted points of $P$ do not go through $q$, and since $q$ is a weighted median of $S$, we get: $w(P) \leq w(S) / 2 = w(S') / 2$. In conclusion, every sub multi-set $P$ of weighted points from $S'$ connected via paths that do not go through $t$ has a weight of at most $w(S') / 2$, and therefore, $t$ is also a weighted median of $S'$.
By definition $q'$ is the weighted median of $S'$ closest to $z$, and therefore, $q'$ is closer to $z$ than $t$.

From the last conclusion we learn that the path from $q$ to $z$ goes through $t$ and $q'$. Hence, $q'$ is not a weighted median of $S$. Thus, there exists a sub multi-set $P$ of weighted points from $S$ that are connected via paths that do not go through $q'$, and obeys $w(P) > w(S) / 2$. Since $q'$ is a weighted median of $S'$, we must have $f(\tilde{p}, P) = f(\tilde{p}, S)$. Consider the set $P' = P \setminus \{\tilde{p}\} \cup \{\tilde{p}'\}$. Since $\tilde{p}$ and $\tilde{p}'$ are connect by a path that does not go through $q'$, the points of $P'$ are also connected by paths that do not go through $q'$. On the other hand, $q'$ is a weighted median of $S'$, and $P' \subseteq S'$, and therefore, $w(P) = w(P') \leq w(S') / 2 = w(S) / 2$, which is a contradiction. Hence, $q'$ must be on the path connecting $q$ and $\tilde{p}'$, which completes the proof of the observation.
\end{proof}

Using the above observation we can now show that WMM is two-sided IC. The agent-side IC of WMM is an easy new result, while the mediator-side IC is known from \cite{DekelFP2010,ProcacciaT13} for lines, and we extend it to trees. 

\begin{lemma}
The Weighted Median Mechanism is agent-side IC.
\end{lemma}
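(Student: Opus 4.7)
The plan is to show that under the assumption that mediator $d_i$ faithfully relays whatever agent $a_{i,j}$ tells him, no deviation to a report $t'_{i,j} \neq t_{i,j}$ can decrease $a_{i,j}$'s cost. I would fix all the other agents' reports and compare two executions of WMM: one in which $a_{i,j}$ reports $t_{i,j}$ and one in which she reports $t'_{i,j}$. Since the input to the second (weighted) median step depends on the agents' reports only through the $\ell_j$'s, and only $\ell_i$ can possibly change between the two executions, it suffices to analyze what happens when $\ell_i$ moves. If $\ell_i$ is the same in both executions, the facility stays put and the agent is indifferent.

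The substantive case is when $\ell_i$ changes from some $q$ to some $q' \neq q$. Here I would apply Observation~\ref{ob:point_moves} to the multi-set of unit-weighted reports submitted to $d_i$, with tie-breaking point $z_i$: the two multi-sets differ only in one unit-weighted entry (swapping $t_{i,j}$ for $t'_{i,j}$), so the observation yields that the unique simple path from $t_{i,j}$ to $t'_{i,j}$ in $T$ traverses both $q$ and $q'$, with $q$ closer to $t_{i,j}$ than $q'$. I would then invoke Observation~\ref{ob:point_moves} a second time at the top level, with tie-breaking point $z$ and weighted multi-set $\{(\ell_j, n_j)\}_{j=1}^{k}$, in which the sole change is replacing $(q, n_i)$ by $(q', n_i)$. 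If the facility does not move, the agent is again indifferent; otherwise it moves from $p$ to some $p' \neq p$, and the observation tells us that the path from $q$ to $q'$ contains both $p$ and $p'$, with $p$ closer to $q$ than $p'$.

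To conclude, I would chain the two path statements: since $q$ and $q'$ both lie on the simple path from $t_{i,j}$ to $t'_{i,j}$ with $q$ first, and $p, p'$ both lie on the sub-path from $q$ to $q'$ with $p$ first, the six points appear in the order $t_{i,j}, q, p, p', q', t'_{i,j}$ along one simple path of $T$. In particular $\dist(p, t_{i,j}) = \dist(t_{i,j}, q) + \dist(q, p) < \dist(t_{i,j}, q) + \dist(q, p') = \dist(p', t_{i,j})$, so misreporting strictly hurts $a_{i,j}$ in this case and leaves her cost unchanged in all remaining cases; truthful reporting is therefore a (weakly) dominant strategy. The main obstacle is not any new geometric insight but rather the bookkeeping around the two applications of Observation~\ref{ob:point_moves}: checking that the swapped weighted points have equal weights (trivially so, with weights $1$ in the first application and $n_i$ in the second) and that the tie-breaking points $z_i$ and $z$, being fixed independently of the reports, are identical across the two executions so that the uniqueness of the closest weighted median applies as stated.
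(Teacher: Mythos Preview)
Your proposal is correct and follows essentially the same approach as the paper's own proof: two applications of Observation~\ref{ob:point_moves} (first at the agent level to locate $\ell_i$ and $\ell'_i$ on the path from $t_{i,j}$ to $t'_{i,j}$, then at the weighted-median level to locate $F$ and $F'$ on the path from $\ell_i$ to $\ell'_i$), followed by chaining the two path statements to conclude $\dist(t_{i,j},F) < \dist(t_{i,j},F')$. The only cosmetic difference is that the paper phrases the first application as ``$\ell_i$ lies on the path from $t_{i,j}$ to $\ell'_i$'' rather than your full six-point ordering, but the argument is the same.
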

\begin{proof}
Consider an arbitrary agent $a_{i, j}$. We would like to show that given that $d_i$ is truthful, it is non-beneficial for $a_{i, j}$ to report a fake location $t'_{i, j}$ instead of its true location $t_{i, j}$. Let $\ell_i$ and $\ell'_i$ denote the median of the reports of agents $\{a_{i, 1}, a_{i, 2}, \ldots, a_{i,n _i}\}$ when $a_{i, j}$ reports its true location $t_i$ and the fake location $t'_i$, respectively. If $\ell_i = \ell'_i$, then reporting the fake location $t'_{i, j}$ does not affect the location picked by the center, and therefore, is non-beneficial for $a_{i, j}$. Otherwise, by Observation~\ref{ob:point_moves}, $\ell_i$ is on the path between $t_{i, j}$ and $\ell'_i$.

Let $F$ and $F'$ be the locations that the center picks for the facility assuming $a_{i, j}$ reports a location of $t_{i, j}$ or $t'_{i, j}$ to her mediator, respectively. If $F = F'$, then reporting the fake location $t'_{i, j}$ is, again, non-beneficial for $a_{i, j}$. Otherwise, by Observation~\ref{ob:point_moves}, $F$ and $F'$ are both on the path between $\ell_i$ and $\ell'_i$, and $F$ is closer to $\ell_i$ than $F'$.
Combining both results, we get that the path from $t_{i, j}$ to $\ell'_i$ goes first through $\ell_i$, then through $F$ and finally through $F'$. Thus, $\dist(t_{i, j}, F) < \dist(t_{i, j}, F')$, and it is strictly non-beneficial for $a_{i, j}$ to report the fake location $t'_{i, j}$.
\end{proof}

\begin{lemma}
The Weighted Median Mechanism is mediator-side IC.
\end{lemma}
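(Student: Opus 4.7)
The plan is to reduce the mediator's deviation to a single-point change, apply Observation~\ref{ob:point_moves} to pin down how the facility shifts, and then use convexity of the mediator's cost along a tree path to conclude that the deviation is not beneficial.

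First, I would reduce $d_i$'s effective strategy space. Since all agents of $d_i$ report truthfully, $d_i$ receives the true locations $t_{i,1}, \ldots, t_{i,n_i}$. The center processes $d_i$'s reports only to extract the single point $\ell_i$ (the median of the reports closest to $z_i$), so $d_i$'s strategy can be identified with his choice of $\ell_i$. Moreover, for any target point $p$, reporting all $n_i$ agents at $p$ forces $\ell_i = p$, so every point of the tree is achievable. Let $\ell_i^{\ast}$ denote the value of $\ell_i$ under truthful reporting; by construction $\ell_i^{\ast}$ is a median of the multi-set $\{t_{i,j}\}_j$, so by Observation~\ref{ob:median_optimal} it globally minimizes the mediator's cost function $c(F) := \sum_{j=1}^{n_i} \dist(F, t_{i,j})$.

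Second, I would track the facility when $d_i$ switches from $\ell_i^{\ast}$ to some $\ell_i' \neq \ell_i^{\ast}$. Let $F^{\ast}$ and $F'$ be the resulting facility locations; these are the weighted medians closest to $z$ of two multi-sets that differ only in replacing $(\ell_i^{\ast}, n_i)$ by $(\ell_i', n_i)$. If $F^{\ast} = F'$, the deviation changes nothing. Otherwise, Observation~\ref{ob:point_moves} (applied with $\tilde{p} = (\ell_i^{\ast}, n_i)$ and $\tilde{p}' = (\ell_i', n_i)$, which have the same weight) guarantees that $F^{\ast}$ and $F'$ both lie on the tree path from $\ell_i^{\ast}$ to $\ell_i'$, with $F^{\ast}$ strictly closer to $\ell_i^{\ast}$ than $F'$ is. Equivalently, on this path the order is $\ell_i^{\ast}$, then $F^{\ast}$, then $F'$, and moving from $F^{\ast}$ toward $F'$ along the path means moving away from $\ell_i^{\ast}$.

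Third, I would show $c(F^{\ast}) \leq c(F')$. For any path $P$ in $T$ and any fixed point $t_{i,j}$, the distance $\dist(P(t), t_{i,j})$ equals $|t - \tau_j| + \dist(q_j, t_{i,j})$, where $q_j$ is the unique projection of $t_{i,j}$ onto $P$ at parameter $\tau_j$ (this holds because in a tree the unique path from any point of $P$ to $t_{i,j}$ must leave $P$ at $q_j$). Thus $c$ restricted to $P$ is a finite sum of V-shaped functions plus a constant, hence convex and piecewise linear. Apply this to the path from $\ell_i^{\ast}$ to $\ell_i'$: by convexity and the fact that $\ell_i^{\ast}$ is a global minimizer of $c$ (so in particular a minimizer along the path), $c$ is non-decreasing as one moves from $\ell_i^{\ast}$ along the path. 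Since $F^{\ast}$ lies between $\ell_i^{\ast}$ and $F'$ on this path, we obtain $c(F^{\ast}) \leq c(F')$, so the deviation is weakly unprofitable for $d_i$.

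The main obstacle is step two: Observation~\ref{ob:point_moves} carries essentially all the combinatorial weight, and the whole argument hinges on using it correctly to force $F^{\ast}$ to lie on the path from $\ell_i^{\ast}$ to $F'$. Steps one and three are comparatively routine — step one only requires noting that $d_i$'s report is summarized by a single point and that any point is attainable, and step three reduces to the standard fact that sums of distances to fixed targets are convex when restricted to a path in a tree.
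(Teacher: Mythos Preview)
Your proposal is correct and follows essentially the same approach as the paper: reduce the mediator's strategy to the single reported median, apply Observation~\ref{ob:point_moves} to place both facility locations $F^\ast$ and $F'$ on the $\ell_i^\ast$--$\ell_i'$ path with $F^\ast$ closer to $\ell_i^\ast$, and then use convexity of the mediator's cost along that path (minimized at the true median $\ell_i^\ast$) to conclude $c(F^\ast)\le c(F')$. The only cosmetic differences are that you justify attainability of any point explicitly and frame the convexity via projections onto the path, whereas the paper parametrizes the path by $x\in[0,1]$ and argues directly that each summand is V-shaped.
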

\begin{proof}
Consider an arbitrary mediator $d_i$. We should prove that it is always a best response for $d_i$ to be truthful given that the agents of $A_i$ are truthful. Let $\ell_i$ be the median of the locations reported by the agents to $d_i$, that is closest to $z_i$.
Since the agents of $A_i$ are truthful, $\ell_i$ is also the median of the true locations of the agents of $A_i$ that is closest to $z_i$. Every strategy of $d_i$ can be characterized by the median closest to $z_i$ of the locations reported by $d_i$ on behalf of his agents because this is the only information used by the center. Hence, we can characterize any deviation of $d_i$ from truthfulness by the median $\ell'_i$ of the locations it reports that is closest to $z_i$.
Our objective is to prove that a deviation represented by $\ell'_i$ is no better than being truthful.

First, if $\ell_i = \ell'_i$ then the location picked by the center is unaffected by the deviation, and therefore, the deviation is non-beneficial. Thus, we can assume from now on $\ell_i \neq \ell'_i$. Next, let $F$ and $F'$ be the locations picked by the center given that $d_i$ reports locations whose medians (picked by the center according to the tie breaking rule) are $\ell_i$ and $\ell'_i$, respectively. If $F = F'$ then the deviation is clearly non-beneficial. Otherwise, by Observation~\ref{ob:point_moves}, the path from $\ell_i$ to $\ell'_i$ goes through $F$ and $F'$ in that order.

For every $x \in [0, 1]$, let $p_x$ be the point along the path between $\ell_i$ and $\ell'_i$ obeying $\dist(\ell_i, p_x) = x \cdot \dist(\ell_i, \ell'_i)$, and let $c_x$ be the social cost of the agents of $A_i$ if the facility is located at $p_x$. Let $p$ be an arbitrary point, and let us observe the change in $\dist(p_x, p)$ as $x$ increases. Clearly, $\dist(p_x, p)$ initially decreases at a constant rate and then starts to increase again in a constant rate (possibly the decrease part or increase part is of zero length).
Hence, the derivative of $c_x$ is a non-decreasing function in $x$ (except for a finite number of $x$ values for which it might not be defined). For $x = 0$, $p_x$ is the median of the locations of the agents of $d_i$, and therefore, $c_x$ is minimized for $x = 0$. Thus, the derivative of $c_x$ is non-negative for every $x > 0$, whenever it is defined. Let $y$ and $y'$ be the values of $x$ for which $F = p_y$ and $F' = p_{y'}$. Then, we have $y < y'$, and therefore, $c_y \leq c_{y'}$, \ie, locating the facility at $F'$ induces no better social cost for the agents of $A_i$ in comparison to locating it at $F$. Hence, reporting a set of locations whose median is $\ell'_i$ instead of a truthful report (which makes $\ell_i$ the median closest to $z_i$) is non-beneficial from $d_i$'s perspective.
\end{proof}

To complete the proof of Theorem~\ref{thm:wmm_properties}, we still need to show that WMM is $3$-competitive. This was already observed by~\cite{DekelFP2010,ProcacciaT13} for lines, but not for trees. In order to prove this result for trees, we observe that WMM is a restriction of a more general mechanism called IWMM which we analyze in Section~\ref{sec:hierarchy}. The competitive ratio of WMM follows immediately from the result proved by Proposition~\ref{prop:IWMM-comp} for IWMM.
\section{Randomized Two-Sided IC Mechanisms} \label{sc:rand}
It is known by \cite{DekelFP2010,ProcacciaT13} that there is no deterministic Two-Sided IC Mechanism that is better than $3$-competitive (this result also appears as Theorem~\ref{th:lower-bound-det-mech}). In this section we show that we can improve and achieve a competitive ratio of $2$ by switching to randomized mechanisms, and that this is the best ratio that can be achieved.

To simplify the exposition of our mechanism, we first describe it for the simple case of line metrics (\ie, for the case where the tree $T$ is simply an interval). A line metric is the Euclidean metric of an arbitrary interval $[a, b]$ (where $a < b$ are real numbers). Notice that a point in the metric is simply a real number from the interval $[a, b]$.
The {\em Two Percentiles Range Mechanism (TPRM)} is a direct revelation mechanism in which the center runs the following algorithm:
\begin{compactitem}
    \item For each mediator $d_i$ compute the median $\ell_i$ of the multi-set $\{t'_{i, j} | 1 \leq j \leq n_i\}$ that is closest to $z_i$, where $t'_{i, j}$ is the location reported by $d_i$ for agent $a_{i, j}$ and $z_i$ is an arbitrary point chosen independently of the reports received from the agents.
    \item Consider the multi-set $S$ of points, created by adding $\ell_i$ to the multi-set $n_i$ times, for each $i$. Let $u_i$ denote the $i$-th element of this multi-set when sorted in any non-decreasing order.
    \item Randomly choose a location for the facility from the list: $u_{\lfloor n/4 \rfloor + 1}, u_{\lfloor n/4 \rfloor + 2}, \ldots, u_{\lceil 3n/4 \rceil}$, where the probability of each value $u_i$ in this list is $(n/2)^{-1}$, except for the first and last values ($u_{\lfloor n/4 \rfloor + 1}$ and $u_{\lceil 3n/4 \rceil}$), which have a probability of $(1 - r / 4) / (n/2)$ where $r$ is the reminder of dividing $n$ by $4$.
\end{compactitem}
Like in the deterministic case, Observation~\ref{ob:closest_median_exists} ensures that $\ell_i$ is well defined for every $i$.
Also similarly to the deterministic case, this direct revelation mechanism can also be executed with much less communication by only asking each mediator to report a single point (the location of the median closest to some arbitrary point) and running a mediator based algorithm that corresponds to the two final steps of TPRM on the reports. The resulting mechanism will have the same properties (competitiveness, incentives) as the direct revelation mechanism.

\begin{theorem} \label{th:tprm_two_sided_and_competitive}
For any line metric, the \emph{Two Percentiles Range Mechanism} (TPRM) is a randomized two-sided IC mechanism with a competitive ratio of $2$.
\end{theorem}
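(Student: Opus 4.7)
The plan is to establish the three properties of TPRM---agent-side IC, mediator-side IC, and the $2$-competitive ratio---separately. Throughout, I use that TPRM's random output $p$ is supported on the sorted multiset $\{u_r\}_{r=1}^n$ of $S$, with probabilities $q_r$ concentrated on the ``middle-half'' $M := \{\lfloor n/4\rfloor+1,\ldots,\lceil 3n/4\rceil\}$, equal to $2/n$ in the interior and slightly adjusted at the two endpoints.

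For agent-side IC, I adapt the WMM argument. An agent $a_{i,j}$ deviating from $t_{i,j}$ to $t'_{i,j}$ changes the reported $\ell_i$ to some $\ell'_i$ with $\ell_i$ lying weakly between $t_{i,j}$ and $\ell'_i$, by the line version of Observation~\ref{ob:point_moves}. Replacing the $n_i$ copies of $\ell_i$ in $S$ by $n_i$ copies of $\ell'_i$ shifts each rank-$r$ element $u_r$ weakly in the direction from $t_{i,j}$ toward $\ell'_i$. Coupling the truthful and deviating executions by matching ranks, every realization of $p$ after the deviation is weakly farther from $t_{i,j}$ than its coupled counterpart, so $\mathbb{E}[|p-t_{i,j}|]$ weakly increases. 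For mediator-side IC, since the center uses only the single median reported by $d_i$ (closest to $z_i$), the mediator's effective strategy is to choose any $\ell'_i$ to report. The same rank-coupling argument shows that as $\ell'_i$ moves from $\ell_i$ in either direction, every realization of $p$ shifts weakly in the same direction. Since $d_i$'s cost $q\mapsto\sum_j|q-t_{i,j}|$ is convex and attains its minimum on the median interval containing $\ell_i$, with one-sided derivatives of opposite signs around $\ell_i$, a monotone-comparative-statics argument yields that $d_i$'s expected cost is minimized at $\ell'_i=\ell_i$.

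For the $2$-competitive ratio, let $t^*$ be a true (lower) median, so $OPT=\sum_{i,j}|t^*-t_{i,j}|$, and let $g_j:=t_{(j+1)}-t_{(j)}$ be the gaps of the sorted agent locations. Writing $\mathrm{cost}(p)-OPT=\int_{\min(p,t^*)}^{\max(p,t^*)}|\mathrm{cost}'(x)|\,dx$ and noting $\mathrm{cost}'(x)=2j-n$ on $(t_{(j)},t_{(j+1)})$, Fubini gives
\[
  \mathbb{E}[\mathrm{cost}(p)] - OPT \;\leq\; \sum_{j\geq n/2}(2j-n)\,g_j\,\Pr[p > t_{(j)}] \;+\; \sum_{j<n/2}(n-2j)\,g_j\,\Pr[p < t_{(j+1)}].
\]
Since $OPT = \sum_j \min(j,n-j)\,g_j$, to prove $\mathbb{E}[\mathrm{cost}(p)]\leq 2\,OPT$ it suffices to show the coefficient-wise bounds $(2j-n)\Pr[p > t_{(j)}]\leq n-j$ for $j\geq n/2$ and $(n-2j)\Pr[p<t_{(j+1)}]\leq j$ for $j<n/2$ whenever $g_j>0$ (terms with $g_j=0$ vanish). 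The median-of-medians property gives the sandwich $2G(q)-n\leq F(q)\leq 2G(q)$, where $F(q):=|\{r:u_r\leq q\}|$ and $G(q):=|\{(i,j):t_{i,j}\leq q\}|$; whenever $g_j>0$, $G(t_{(j)})=j$, so $F(t_{(j)})\in[2j-n,2j]$. Combined with TPRM's probabilities on $M$, one obtains $\Pr[p>t_{(j)}]\leq \min\{1,(7n-8j)/(2n)\}$, and the required inequality reduces to $(4j-3n)^2\geq 0$ (with a symmetric statement for the lower range)---both trivially true. The main obstacle I anticipate is the careful bookkeeping around the endpoint probabilities of TPRM when $n\bmod 4\neq 0$, though these small deviations from $2/n$ should preserve the $(4j-3n)^2\geq 0$ reduction.
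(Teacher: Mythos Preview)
Your approach is essentially the paper's: the gap decomposition and the sandwich $F(t_{(j)})\in[2j-n,\,2j]$ are exactly Lemma~\ref{le:l_i_r_i_bound}, and your reduction to $(4j-3n)^2\ge 0$ is the paper's maximization at $i=n/4$ rewritten; the IC proofs via rank-coupling also match the paper's strategy.

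There is, however, one genuine step you skip in both IC arguments. You assert that after replacing $n_i$ copies of $\ell_i$ by $\ell'_i$, ``every realization of $p$ after the deviation is weakly farther from $t_{i,j}$,'' and for the mediator side you invoke convexity plus a monotone shift. But a weak shift of each $u_r$ toward $\ell'_i$ does \emph{not} by itself give either conclusion: if some $u_r$ lay strictly on the $t_{i,j}$-side of $\ell_i$ (respectively, strictly left of the mediator's minimizer $\ell_i$) and moved right, it would move \emph{closer} to $t_{i,j}$ (respectively, decrease the mediator's convex cost). What makes the coupling work is the additional fact---stated in the paper as Observation~\ref{ob:move_mediator_bad_for_agent}---that when $\ell'_i>\ell_i$, every rank with $u_r<\ell_i$ is unchanged, so the only ranks that move already satisfy $u_r\ge\ell_i\ge t_{i,j}$; on that side the distance to $t_{i,j}$ is monotone and the mediator's cost $q\mapsto\sum_j|q-t_{i,j}|$ is nondecreasing. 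Once you insert this observation (which is immediate: the elements of $S$ below $\ell_i$ are identical in $S$ and $S'$, hence the sorted prefixes agree), both IC proofs go through and coincide with the paper's.
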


The proof of Theorem~\ref{th:tprm_two_sided_and_competitive} is deferred to Section~\ref{ssc:TPRM-analysis}. The intuition behind the improved competitive ratio of TPRM, as compared to WMM, is as follows. Consider a section $s$ connecting the real locations of two agents separated by no other agent.
\begin{compactitem}
    \item If $s$ is located near the median of all the agents, then it is not very important on which side of it is the facility located, because either way $s$ will contribute to the distance between the facility and about half the agents.
    \item If $s$ is located very far from the median, then both TPRM and WMM will locate the facility at the ``right'' side of $s$.
    \item If $s$ is not located near the median, nor very far from it, then it is important that the mechanism will locate the facility at the right side of $s$. WMM makes the wrong call for some inputs, and, being deterministic, when it makes the wrong call it makes it with probability $1$. On the other hand, it can be shown that the randomized TPRM always has a significant probability of making the right choice.
\end{compactitem}

Next, we present a mechanism that extends TPRM to general trees, called the \emph{Tree Randomized Mechanism (TRM)}. Like TPRM, TRM defines a distribution over the medians of the sets reported by the mediators, and then chooses the facility location randomly according to this distribution. The distribution is carefully picked to achieve the best possible competitive ratio of 2. More specifically, the mass of the distribution is placed on the ``central parts'' of the tree, which is analogous to behavior of TPRM. Formally, TRM is a direct revelation two-sided IC mechanism that is also $2$-competitive. The description of TRM consists of two parts. The first part of TRM determines the medians of the sets reported by the mediators, and then splits the edges of $T$ at these points, which allows us to assume that all medians are vertexes of $T$. Additionally, this part finds a point $r$ which is considered to be the ``center'' of $T$. If necessary, an edge is split to make $r$ a vertex of $T$, and then $T$ is rooted at $r$.

\begin{tabbing}
\textbf{TRM - Part $1$}\\
1. \= Fix an arbitrary point $z$ independently of the reports of the mediators.\\
2. \> Define $\size(p) = \sum_{i|\ell_i = p} n_i$ for every point $p$, where $\ell_i$ is the median closest to $z$ of the points\\ \> reported by $d_i$.\\
3. \> Let $L = \{(p, \size(p)) ~|~ \mbox{$p$ is a point and $\size(p) > 0$}\}$.\\
4. \> Let $r$ be the weighted median of $L$ closest to z.\\
5. \> For \= every point $p \in \{r\} \cup \{\ell_i ~|~ 1 \leq i \leq k\}$ do:\\
6. \> \> If \= $p$ is an internal point of an edge $e$, then:\\
7. \> \> \> Split $e$ at point $p$ to two edges.\\
8. \> Root $T$ at $r$.
\end{tabbing}

Clearly we can assume after the first part of TRM that $\{r\} \cup \{\ell_i ~|~ 1 \leq i \leq k\} \subseteq V$. Since $T$ is now rooted, we can define additional notation that is used to describe the second part of TRM.
\begin{definition}
~
\begin{compactitem}
    \item $\children(u)$ - The set of children nodes of $u$ in the tree.
    \item $\subtree(u)$ - The set of nodes in the subtree of node $u$, including $u$ itself. More formally, $\subtree(u) = \{u\} \cup [\cup_{u' \in \children(u)} \subtree(u')]$.
    \item $\treesize(u)$ - The number of agents represented by a mediator $d_i$ for which $\ell_i \in \subtree(u)$. More formally, $\treesize(u) = \size(u) + \sum_{u' \in \children(u)} \treesize(u')$.
\end{compactitem}
\end{definition}

The second part of TRM defines the probability distribution used to select the facility location. Informally, this distribution has the following property: the probability that the facility location is in the \emph{subtree} of a (non-root) vertex $u$ is proportional to $\max\{\treesize(u) - n/4, 0\}$. An alternative view of this distribution is that the algorithm preprocess the agents by discarding some agents of peripheral vertexes, and ``promoting'' other agents of these vertexes to be associated with vertexes which are closer to the root. After this preprocessing, the algorithm selects a uniformly random remaining agent, and locates the facility at the location desired by the mediator associated with the selected agent. The preprocessing is done in a way that guarantees the following:
\begin{compactitem}
	\item If only a few agents originally belong to the subtree of a vertex $u$ (\ie, $\treesize(u) \leq n/4$), then all these agents are either discarded or promoted to some ancestor vertex of $u$, and thus, $u$ never becomes the facility location.
	\item If many agents originally belong to the subtree of a vertex $u$, then the preprocessing reduces the number of agents associated with this subtree by $n/4$ (or $n/2$ if $u$ is the root).
\end{compactitem}

\begin{tabbing}
\textbf{TRM - Part $2$}\\
1. \= Let $X = \{u \in V ~|~ \treesize(u) \geq n/4\}$.\\
2. \> Let $c(u)$ denote the value $n/4$ for every node $u \in X - \{r\}$ and $n/2$ for $u = r$.\\
3. \> Pick \= every node $u \in X$ as the facility location with7 probability:
\end{tabbing}
\vspace{-0.2in}
\[
    p(u)
    =
    \frac{\treesize(u) - c(u)}{n/2} - \frac{\sum_{u' \in X \cap \children(u)}(\treesize(u') - c(u'))}{n/2}
    \enspace.
\]
For consistency, define $p(u) = 0$ for every point $u\notin X$.

Note that this direct revelation mechanism can also be viewed (and executed) as a mediator based algorithm as the center only needs from each mediator $d_i$ a location $\ell_i$ which is optimal from $d_i$'s perspective. The proof of the next theorem is deferred to Appendix~\ref{app:rand_proofs}.

\begin{theorem} \label{th:general_metrics_mechanism}
For any tree metric, the \emph{Tree Randomized Mechanism} (TRM) is a randomized two-sided IC mechanism with a competitive ratio of $2$.
\end{theorem}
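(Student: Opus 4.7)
The proof splits into four tasks: (i) $\{p(u)\}$ is a valid probability distribution; (ii) TRM is agent-side IC; (iii) TRM is mediator-side IC; (iv) TRM has competitive ratio $2$. I would handle (i) first by telescoping: $\sum_{u \in X} p(u) = (\treesize(r) - c(r))/(n/2) = 1$, because every non-root $u' \in X$ has $\text{parent}(u') \in X$ (since $\treesize$ weakly increases up the tree) and hence each term $(\treesize(u') - c(u'))/(n/2)$ appears once with a $+$ sign (from $v = u'$ in the first part of $p(v)$) and once with a $-$ sign (from $v = \text{parent}(u')$ in the second part). For $p(u) \geq 0$, the required inequality rewrites as $\size(u) + \sum_{u' \in \children(u) \setminus X} \treesize(u') \geq c(u) - \sum_{u' \in X \cap \children(u)} c(u')$; if $|X \cap \children(u)| = 0$, this reduces to $\treesize(u) \geq c(u)$, which holds by $u \in X$; if $u \neq r$ and $|X \cap \children(u)| \geq 1$, the right side is non-positive; and for $u = r$ the weighted-median property of $r$ gives $\treesize(u') \leq n/2$ for every $u' \in \children(r)$, from which a short case analysis on $|X \cap \children(r)|$ completes the check.

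Task (iv) is the technical heart. I would conceptually refine $T$ by splitting each edge at every agent location, so that every agent sits at a vertex of the refined tree; this refinement does not change $\treesize(\cdot)$, $X$, or $p(\cdot)$ at the original vertices, and each newly introduced vertex receives $p = 0$ (by the same telescoping as in (i)). For a non-root vertex $u$, write $e_u$ for the edge from $u$ to its parent and $\beta_u$ for the number of real agents in $\subtree(u)$. Then
\[
    OPT = \sum_{u \neq r} \ell_{e_u} \min(\beta_u,\, n - \beta_u),
    \qquad
    \mathbb{E}[\text{cost}(F)] = \sum_{u \neq r} \ell_{e_u} \bigl[\beta_u + P_u (n - 2\beta_u)\bigr],
\]
where $P_u := \Pr[F \in \subtree(u)]$ and $\ell_{e_u}$ is the length of $e_u$. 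A second telescoping inside $\subtree(u)$ yields $P_u = \max(\treesize(u) - n/4, 0)/(n/2)$. The bridge between the mechanism's view and the real agents is given by each $\ell_i$ being a median of $d_i$'s agents: if $\ell_i \notin \subtree(u)$, then at most $n_i/2$ agents of $d_i$ lie in $\subtree(u)$, which gives the two-sided bound $\treesize(u)/2 \leq \beta_u \leq (n + \treesize(u))/2$. Plugging this into the desired edge-wise inequality $\beta_u + P_u(n - 2\beta_u) \leq 2 \min(\beta_u,\, n - \beta_u)$ reduces, after elementary algebra, to $(2\beta_u/n - 1/2)^2 \geq 0$ when $\beta_u \leq n/2$ and to $(4\beta_u/n - 3)^2 \geq 0$ when $\beta_u > n/2$. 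Both are perfect squares, and their tightness is exactly what locks the ratio at $2$.

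For (ii) and (iii) I would mimic the WMM proofs, adapted to the randomized distribution. Since TRM is mediator based, any agent deviation only moves the single median $\ell_i$; by Observation~\ref{ob:point_moves} the new median $\ell'_i$ lies on the path from $t_{i,j}$ through $\ell_i$. I would parameterise the change by sliding a weighted point continuously from $\ell_i$ to $\ell'_i$ and track $\mathbb{E}[\dist(F, t_{i,j})]$ along the way. The root $r$ (the weighted median of $\{(\ell_i, n_i)\}$) may jump during the slide, but a second application of Observation~\ref{ob:point_moves} to the weighted multi-set $\{(\ell_i, n_i)\}$ shows $r$ itself moves monotonically along the same path; within each constant-root segment the edge-probabilities $P_u$ are piecewise-linear in the position of the sliding weight and shift the mass of the distribution away from $t_{i,j}$, forcing $\mathbb{E}[\dist(F, t_{i,j})]$ to be weakly increasing. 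Part (iii) follows by the same slide argument with $\dist(F, t_{i,j})$ replaced by $\sum_j \dist(F, t_{i,j})$; Observation~\ref{ob:median_optimal} says this quantity is minimised over all points at the true median $\ell_i$, so any move of $\ell_i$ away from its truthful value can only raise the expected group cost. The main obstacle I anticipate is carrying out this monotonicity argument rigorously across the discrete re-rooting events, which requires careful bookkeeping of how $X$, the rooted tree, and the probabilities $p(u)$ jointly evolve.
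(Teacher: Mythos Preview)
Your treatment of (i) and (iv) is correct and essentially coincides with the paper: the telescoping identity $\sum_{u'\in X\cap\subtree(u)} p(u') = (\treesize(u)-c(u))/(n/2)$ is the paper's Lemma~\ref{le:tree_probability}, and the edge-wise competitive-ratio analysis via the two-sided bound $\treesize(u)/2 \le \beta_u \le (n+\treesize(u))/2$ together with the perfect-square reductions is exactly the computation the paper carries out (your $\beta_u$ is the paper's $\overline{\treesize}(s)$; your two squares are the maxima in Corollaries~\ref{co:segment_ratio} and~\ref{co:segment_ratio_case_2}).

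For (ii) and (iii) the paper takes a cleaner route than your continuous slide. Instead of tracking a continuously moving median, it proves a \emph{discrete single-step lemma} (Lemma~\ref{le:short_deviation}): if the reported median jumps from a vertex $u$ to an adjacent vertex $\hat u$ (no other point of $V$ between them), then $p(v)$ is unchanged for $v\notin\{u,\hat u\}$, the sum $p(u)+p(\hat u)$ is preserved, and $p(\hat u)$ weakly increases. The re-rooting event you flag as the main obstacle is dealt with once and for all inside this lemma, as one of two explicit cases. Chaining the single steps along the path $u=w_1,\dots,w_s=\hat u$ yields monotonicity of all tail sums $\sum_{j\ge k}p(w_j)$ (Corollary~\ref{co:long_deviation}), and both IC statements then follow by Abel summation (Observation~\ref{ob:sum_exchange}) against distances measured from the truthful end of the path. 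Your continuous parametrisation could presumably be pushed through, but it forces you to track a moving vertex set and a moving root simultaneously; discretising to the existing vertices of $V$ sidesteps all of that.

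One step in your sketch of (iii) is too quick. ``Minimised at $\ell_i$'' alone does not imply that shifting probability mass along the path toward $\hat\ell_i$ raises the expected group cost; you also need that $g(F)=\sum_j\dist(F,t_{i,j})$ is \emph{monotone increasing} along the path from $\ell_i$ to $\hat\ell_i$. That holds because $g$ is convex along any tree path and its minimiser $\ell_i$ is an endpoint of this particular path, after which the tail-sum monotonicity plus Abel summation finishes the argument. The paper reaches the same conclusion by partitioning $A_i$ into the agents whose path to $\hat\ell_i$ passes through $\ell_i$ and the rest, using $|A_i'|\ge|A_i''|$; this amounts to a hands-on version of the convexity argument.
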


\begin{figure}[t]
    \center
    \framebox{\includegraphics[width=2.5in, height=1.5in, clip=true, trim=0.8in 1in 0in 0in]{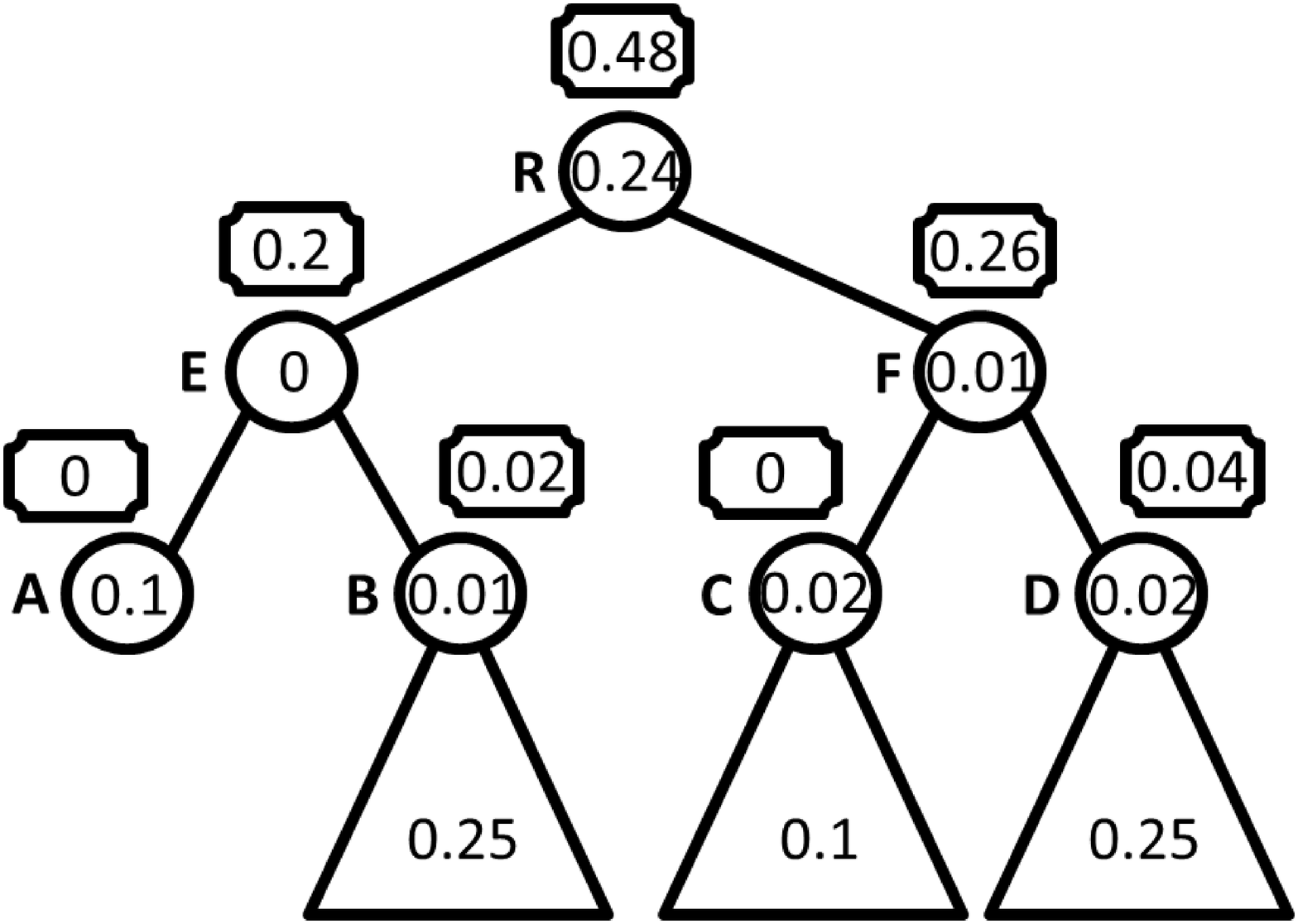}}
    \caption{Example of a probability distribution induced by TRM. \label{fig:tree_example}}
\end{figure}

Let us get a better understanding of TRM by considering an example input (given as Figure~\ref{fig:tree_example}) and explaining the probability distribution induced by TRM. The figure depicts the top $7$ nodes of an example tree $T$ that can be outputted by the first part of TRM. The number inside each node represents the portion of the agents population represented by mediators whose (sole) median is this node. For example, the number $0.24$ appears inside the root node $R$, hence, in our example, $0.24 \cdot n$ agents are represented by mediators whose median is $R$. Some of the nodes in the figure have a triangle shape dangling from them. Each triangle represents the subtree of the node it is dangling from, and the number written inside it represents the portion of the agents represented by mediators whose (sole) median is inside the subtree (but is not the root of the subtree). For example, inside the rightmost triangle we have the label $0.25$. Hence there are $0.25 \cdot n$ agents represented by mediators whose median is inside the subtree of $D$, but is not $D$ itself. Finally, outside of each node there is an additional number (in a box). This number represents the probability that TRM selects this node as the facility location.

Following is a short explanation of how (some of) the probabilities in Figure~\ref{fig:tree_example} were calculated. We say that a node (subtree) in Figure~\ref{fig:tree_example} has a weight of $x$ if $x$ agents are represented by mediators whose median is the node (a node in the subtree). Observe that TRM selects $R$ to be the root node $r$ since its left subtree has a weight of only $0.36 \cdot n < n/2$, and its right subtree has a weight of only $0.4 \cdot n < n/2$. Next, observe that no triangle in the figure represents a subtree with weight of more than $0.25$, and therefore, no node in the subtrees represented by these rectangles has a positive probability to be the location of the facility. Consider now a few of the nodes of Figure~\ref{fig:tree_example}.
\begin{compactitem}
    \item The subtree rooted at $A$ (which contains $A$ alone) has a weight of only $0.1 \cdot n \leq 0.25 \cdot n$, and therefore, $A\notin X$ and thus has $0$ probability to be the facility location.
    \item The subtree rooted at $B$ has a weight of $0.26 \cdot n > 0.25 \cdot n$, and no other node in this subtree has a positive probability to be the facility location. Therefore, $B$ has a probability of $2(0.26 - c(B))=2(0.26 - 0.25)=0.02$ to be the facility location.
    \item The subtree rooted at $E$ has a weight of $0.36 \cdot n > 0.25 \cdot n$. However, $E$ has a single child $B$ associated with a positive probability, and the subtree rooted at $B$ has a weight of $0.26$. Therefore, $E$ has a probability of $2(0.36 - c(E) - (0.26 - c(B)))=0.2$ to be the facility location.
    \item The subtree rooted at $R$ has a weight of $1 \cdot n$. However, $R$ has two children $E$ and $F$, each associated with a positive probability by the algorithm, whose subtrees have weights of $0.36$ and $0.4$, respectively. Therefore, $R$ has a probability of $2(1 - c(R) - (0.36 - c(E)) - (0.4 - c(F)))=0.48$ to be the facility location.
\end{compactitem}

The following theorem shows that TRM and TPRM have optimal competitive ratios, its proof is deferred to Section~\ref{sec:lower}.

\begin{theorem}
\label{th:lower-bound-rand-mech}
Fix any constant $\ee > 0$.
Then, no direct revelation randomized mechanism that is mediator-side incentive compatible and no mediator based algorithm has a competitive ratio of $2-\ee$, even for line metrics.
\end{theorem}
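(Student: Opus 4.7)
The plan is to exhibit three line-metric instances on which any $(2 - \ee)$-competitive mediator-based algorithm or two-sided IC mechanism is forced into contradictory constraints on the expected output location. Fix $n$ a large multiple of $4$, two mediators with $n_1 = n_2 = n/2$, and the instances
\begin{itemize}
\item $I_1$: mediator $1$'s agents are all at $0$; mediator $2$'s agents are all at $1$.
\item $I_2$: mediator $1$ has $n/4+1$ agents at $0$ and $n/4-1$ at $1$; mediator $2$ as in $I_1$.
\item $I_3$: mediator $1$ as in $I_1$; mediator $2$ has $n/4-1$ agents at $0$ and $n/4+1$ at $1$.
\end{itemize}
A direct check of the median definition confirms that the unique medians of mediators $1$ and $2$ are $0$ and $1$ in each of the three instances (in $I_2, I_3$ a strict majority of the mediator's agents sits at the reported endpoint). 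Summing agent positions, $\mathrm{OPT}(I_1) = n/2$ (attained on $[0,1]$), while $\mathrm{OPT}(I_2) = \mathrm{OPT}(I_3) = n/4+1$, attained uniquely at $1$ and $0$ respectively, since each instance places $3n/4 - 1$ agents at one endpoint.

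Let $F_j$ denote the mechanism's output distribution on $I_j$ and $\mu_j := E[F_j]$. The first step is to reduce to the case that each $F_j$ is supported on $[0,1]$. For a mediator-based algorithm this is free: clipping the output for inputs with medians $(0,1)$ and weights $(n/2, n/2)$ only decreases cost on each of the three instances and preserves the mediator-based property. For a two-sided IC mechanism I instead invoke the deviation ``mediator $1$ in $I_2$ reports all $n/2$ agents at $1$''; the induced input has all $n$ agents at $1$, on which any finite-competitive mechanism outputs the facility at $1$ deterministically. Incentive compatibility evaluated against mediator $1$'s true $I_2$-profile gives
\[
(n/4+1)\,E[|F_2|] + (n/4-1)\,E[|F_2-1|] \;\leq\; n/4 + 1,
\]
and using $E[|F_2|] \geq 0$ combined with the triangle inequality $E[|F_2|] + E[|F_2-1|] \geq 1$ forces $E[|F_2|] + E[|F_2-1|] \leq 1 + O(1/n)$, i.e., $F_2 \subseteq [0,1]$ up to $O(1/n)$ mass. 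The mirrored ``all at $0$'' deviation in $I_3$ gives the same for $F_3$, and analogous deviations in $I_1$ (mediator $1$ to ``all at $1$'' and mediator $2$ to ``all at $0$'') control $F_1$.

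With every $F_j$ supported on $[0,1]$ one has $E[|F_j|] = \mu_j$ and $E[|F_j - 1|] = 1 - \mu_j$. The $(2-\ee)$-competitive hypothesis on $I_2$ reads $(n/4+1)\mu_2 + (3n/4-1)(1-\mu_2) \leq (2-\ee)(n/4+1)$, which rearranges to $\mu_2 \geq (1+\ee)/2 - o_n(1)$; symmetrically $\mu_3 \leq (1-\ee)/2 + o_n(1)$. For a mediator-based algorithm all three instances present the identical tuple $(\ell_1, n_1, \ell_2, n_2) = (0, n/2, 1, n/2)$, so $F_2 = F_3$ and $\mu_2 = \mu_3$, contradicting the two bounds for $n$ large. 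For a two-sided IC mechanism I chain the $\mu_j$'s via two deviation pairs. First, mediator $1$ in $I_2$ deviates to ``all at $0$'' (producing $I_1$); the IC inequality on his true $I_2$-profile, under $F \subseteq [0,1]$, simplifies to $\mu_2 \leq \mu_1$. Conversely, mediator $1$ in $I_1$ deviates to mediator $1$'s report from $I_2$ (also producing $I_2$); IC on his true $I_1$-profile (all at $0$) reduces to $(n/2)\mu_1 \leq (n/2)\mu_2$, i.e., $\mu_1 \leq \mu_2$. Hence $\mu_1 = \mu_2$, and symmetric deviations of mediator $2$ between $I_1$ and $I_3$ give $\mu_1 = \mu_3$. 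Together, $\mu_2 = \mu_3$, contradicting the $(1+\ee)/2 > (1-\ee)/2$ gap.

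The main obstacle I would need to handle carefully is that a naive clipping of the IC mechanism's output can violate incentive constraints tying the three instances to inputs outside the family. The ``all at $1$''/``all at $0$'' deviations used above avoid the need for external clipping by forcing near-equality in the triangle inequality, which is itself the condition of being supported in $[0,1]$; the $O(1/n)$ slack introduced is absorbed into the $\ee$-slack by choosing $n$ large depending on $\ee$. Since every line metric is a tree metric, the same bound rules out randomized two-sided IC mechanisms and mediator-based algorithms with competitive ratio $2 - \ee$ on trees as well.
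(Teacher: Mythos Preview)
Your argument follows the same three–instance template as the paper: two mediators with the same pair of medians $(0,1)$, a ``balanced'' instance and two ``tilted'' instances, mediator-side IC forcing the expected output locations to coincide, and the competitive-ratio constraints on the two tilted instances then colliding. The paper links the expected locations via its Lemma~\ref{lemma:IC-median} and then invokes Lemma~\ref{le:single_better_than_distribution} (the Jensen-type fact that replacing the output distribution by its expectation can only lower cost on a line) to reduce to a single deterministic point; you instead chain two IC inequalities per mediator to get $\mu_2=\mu_1=\mu_3$ directly. These are equivalent routes.

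Two comments. First, the whole discussion of forcing the support of $F_j$ into $[0,1]$ is unnecessary and, as written, does not quite close. The lower bound lets you choose the metric, and the paper simply takes the tree to be the interval $[0,1]$; then every facility location lies in $[0,1]$ automatically, so $E[|F_j|]=\mu_j$ and $E[|F_j-1|]=1-\mu_j$ hold exactly. If instead you work over an unbounded line, your ``all at $1$'' deviation only yields $E[|F_2|]+E[|F_2-1|]\le 1+O(1/n)$; when you feed this $O(1/n)$ per-agent slack into the IC comparison between $I_1$ and $I_2$, the costs involve $\Theta(n)$ agents, so the slack in $\mu_1-\mu_2$ becomes $O(1)$, not $o(1)$, and you no longer conclude $\mu_2=\mu_3$. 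Dropping the unbounded-line viewpoint removes this issue entirely.

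Second, the theorem is about \emph{mediator-side} IC mechanisms, not two-sided IC. Since every deviation you invoke is a mediator deviation (with that mediator's agents truthful), your argument already proves the stated, stronger, result; just replace ``two-sided IC'' by ``mediator-side IC'' throughout.
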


\subsection{Analysis of TPRM} \label{ssc:TPRM-analysis}

In this subsection we analyze TPRM and prove Theorem~\ref{th:tprm_two_sided_and_competitive}. We begin the proof by establishing some notation.
Let us sort the agents according to their location (breaking ties arbitrarily), let $b_i$ be the $i^{th}$ agent in this order and let $t_i$ be the location of $b_i$. We say that {\em agent $b_i$ is to the left (right) of agent $b_j$} if $i < j$ ($i > j$). Let $s_i$ denote the segment connecting $b_i$ and $b_{i + 1}$, and let $x_i = t_{i + 1} - t_i$ be the length of $s_i$. We say that {\em segment $s_i$ is to the left (right) of agent $b$} if $b_i$ ($b_{i + 1}$) is to the left (right) of $b$. Given a location for the facility, we say that {\em an agent $b_i$ uses segment $s_i$} if $s_i$ is on the path between the agent and the given location.

Before analyzing TPRM, let us make sure it is well defined.

\begin{observation}
Let $p_i$ be the probability that TPRM picks $u_i$ as the location of the facility. Then, $\sum_{i = 1}^n p_i = 1$.
\end{observation}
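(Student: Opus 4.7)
The plan is a direct arithmetic verification by case analysis on $r := n \bmod 4$. Observe first that $p_i = 0$ for every index $i$ outside the range $I := \{\lfloor n/4 \rfloor + 1, \ldots, \lceil 3n/4 \rceil\}$, so it suffices to evaluate $\sum_{i \in I} p_i$.

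Writing $n = 4m + r$, I would compute the cardinality of $I$ in each of the four cases $r \in \{0, 1, 2, 3\}$ by expanding the floor and ceiling: this gives $|I| = 2m$ when $r = 0$ and $|I| = 2m + r$ for $r \in \{1, 2, 3\}$. By the definition of TPRM, each of the two endpoints of $I$ contributes probability $(1 - r/4)/(n/2) = (4 - r)/(2n)$, while each of the $|I| - 2$ interior indices contributes $2/n$. Hence, assuming $|I| \geq 2$, the total is
\[
\sum_{i \in I} p_i \;=\; (|I| - 2) \cdot \frac{2}{n} + 2 \cdot \frac{4 - r}{2n} \;=\; \frac{2|I| - r}{n}.
\]
Plugging in the computed value of $|I|$ yields $2|I| - r = 4m + r = n$ in all four cases, so the sum equals $1$ as claimed.

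The only slight subtlety is the degenerate regime $|I| = 1$, which occurs only for $n = 1$ (where $m = 0$ and $r = 1$); there the ``first'' and ``last'' endpoints coincide and the formula above would double-count. This case is trivial and is not needed by the subsequent analysis of TPRM (the mechanism is interesting only when $n \geq 2$), so it can be handled separately or excluded by convention. There is no real technical obstacle here; the entire argument is a routine ceiling/floor identity.
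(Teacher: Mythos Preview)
Your proof is correct and is essentially the same arithmetic verification as the paper's: both compute $|I| = \lceil 3n/4 \rceil - \lfloor n/4 \rfloor$, split the sum into the two endpoint probabilities plus the interior ones, and simplify. The paper does this slightly more compactly by writing $|I| = n/2 + r/2$ directly (your case split on $r$ is unnecessary since $|I| = 2m + r$ holds uniformly), but otherwise the computations coincide line for line. Your remark about the degenerate case $n = 1$ is in fact more careful than the paper, which silently assumes the two endpoints are distinct.
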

\begin{proof}
The number of elements in the list $u_{\lfloor n/4 \rfloor + 1}, u_{\lfloor n/4 \rfloor + 3}, \ldots, u_{\lceil 3n/4 \rceil}$ is:
\[
    \lceil 3n/4 \rceil - \lfloor n/4 \rfloor
    =
    (3n/4 + r/4) - (n/4 - r/4)
    =
    n/2 + r/2 \enspace.
\]

Hence, $n/2 + r/2 - 2$ elements of the list have a probability of $(n/2)^{-1}$, and two have a probability of $(1 - r / 4) / (n/2)$. Summing up all these probabilities, we get:
\[
    \sum_{i = 1}^n p_i
    =
    \frac{n/2 + r/2 - 2}{n/2} + 2 \cdot \frac{1 - r / 4}{n/2}
    =
    1 \enspace.
    \qedhere
\]
\end{proof}

\subsubsection{Proof: TPRM is \texorpdfstring{$2$}{2}-Competitive}

First we analyze the competitive ratio of TPRM as an algorithm (when players are truthful).
The following lemma characterizes the social cost of the optimal facility location. This characterization is used later to prove the competitive ratio of TPRM.

\begin{lemma}
The social cost of the optimal facility location is:
\begin{equation} \label{eq:opt_value}
    \sum_{i = 1}^{\lfloor (n - 1)/ 2 \rfloor} i \cdot x_i + \sum_{i = \lfloor (n + 1)/2 \rfloor}^{n - 1} (n - i) \cdot x_i \enspace.
\end{equation}
\end{lemma}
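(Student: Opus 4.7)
The plan is to invoke Observation~\ref{ob:median_optimal} to reduce to evaluating the cost at a median, and then to re-express the social cost as a sum over \emph{segments} rather than over agents, counting for each segment how many agents must traverse it when the facility sits at the median.

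By Observation~\ref{ob:median_optimal}, an optimal location is any median of the agents' locations. On a line a point $p^{*}$ is a median if and only if at most $n/2$ agents lie strictly on each side; I would fix the canonical choice $p^{*} = t_{\lceil n/2 \rceil}$. For $n$ odd this is unique (up to collisions of locations), and for $n$ even it is one valid point in the median interval $[t_{n/2}, t_{n/2+1}]$, which is enough since every median achieves the same social cost.

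The key identity is that, for any agent $b_{j}$ and any point $p$ on the line, the path from $t_{j}$ to $p$ is exactly the union of the consecutive segments it crosses, so
\[
    \dist(t_{j}, p) \;=\; \sum_{i=1}^{n-1} x_{i} \cdot \mathbf{1}[\, s_{i} \text{ lies between } t_{j} \text{ and } p \,].
\]
Summing over $j$ and swapping the order of summation, the social cost at $p$ equals $\sum_{i=1}^{n-1} N_{i}(p) \cdot x_{i}$, where $N_{i}(p)$ is the number of agents whose path to $p$ uses segment $s_{i}$. Because agents $b_{1}, \dots, b_{i}$ lie on or to the left of $s_{i}$ and $b_{i+1}, \dots, b_{n}$ on or to the right, we have $N_{i}(p) = i$ whenever $p$ lies (weakly) to the right of $s_{i}$, and $N_{i}(p) = n - i$ whenever $p$ lies (weakly) to the left.

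It remains to determine which case applies to each $i$ for $p = p^{*} = t_{\lceil n/2 \rceil}$. For $i \le \lfloor (n-1)/2 \rfloor$ we have $i+1 \le \lceil n/2 \rceil$, so $s_{i}$ is to the left of $p^{*}$ and contributes $i \cdot x_{i}$; for $i \ge \lfloor (n+1)/2 \rfloor$ we have $i \ge \lceil n/2 \rceil$, so $s_{i}$ is to the right of $p^{*}$ and contributes $(n-i) \cdot x_{i}$. Summing gives exactly \eqref{eq:opt_value}. The only mildly delicate point, and the place where the $\lfloor \cdot \rfloor$'s have to be checked, is the parity case analysis for $n$: when $n$ is even, the middle segment $s_{n/2}$ falls in the second sum with coefficient $n - n/2 = n/2$, which is consistent with $p^{*} = t_{n/2}$ sitting at the left endpoint of that segment; when $n$ is odd there is no middle segment to worry about. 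Any zero-length segments contribute nothing and may be assigned to either side without affecting the result, so collisions of agent positions cause no issue.
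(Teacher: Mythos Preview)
Your proof is correct and follows essentially the same approach as the paper: invoke Observation~\ref{ob:median_optimal} to place the facility at $t_{\lceil n/2\rceil}$, decompose the social cost as a sum over the segments $s_i$, and count how many agents use each segment depending on whether it lies to the left or right of the median. Your additional remarks on the parity split and zero-length segments are a bit more explicit than the paper's treatment, but the underlying argument is identical.
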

\begin{proof}
By Observation~\ref{ob:median_optimal}, $t_{\lceil n / 2 \rceil}$ is an optimal facility location. Let us evaluate the social cost of placing the facility at this location. Consider segment $s_i$. If $1\leq i < n/2$, then $s_i$ is to the left of $b_{\lceil n / 2 \rceil}$, and therefore, $i$ agents use $s_i$: $b_1, b_2, \ldots, b_i$. Thus, the contribution of $s_i$ to the cost of the optimal solution is $i \cdot x_i$. On the other hand, if $n-1\geq i \geq n/2$ (\ie, $s_i$ is to the right of $b_{\lceil n / 2 \rceil}$), then $n - i$ agents use $s_i$: $b_{i + 1}, b_{i + 2}, \ldots, b_n$. Thus, the contribution of $s_i$ to the cost of the optimal solution is $(n - i) \cdot x_i$. Adding up all the above contributions, we get Expression~(\ref{eq:opt_value}).
\end{proof}

Pick an arbitrary segment $s_i=[t_i,t_{i+1}]$ (recall that $t_i$ is the location of agent $b_i$), and let $L_i$ and $R_i$ denote the number of $u_j$ values that are smaller than $t_{i + 1}$ and larger then $t_i$, respectively. Intuitively, $L_i$ and $R_i$ denote the number of agents represented by a mediator $d_h$ for which $\ell_h$ is to the left and right of $s_i$, respectively.

\begin{lemma} \label{le:l_i_r_i_bound}
For every $i$, $L_i \leq 2i$ and $R_i \leq 2(n - i)$.
\end{lemma}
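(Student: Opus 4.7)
The plan is to interpret $L_i$ as a sum indexed by mediators and then invoke the median property of $\ell_h$ to put half of each contributing mediator's agents on the left of $s_i$. Since the multi-set $S$ contains each $\ell_h$ with multiplicity $n_h$,
\[
L_i \;=\; \bigl|\{\,j \,:\, u_j < t_{i+1}\}\bigr| \;=\; \sum_{h \,:\, \ell_h < t_{i+1}} n_h \enspace.
\]
Because we are in the competitive-ratio analysis where all players are truthful, $\ell_h$ is an actual median of the locations of the agents of $A_h$.

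I would then fix any mediator $d_h$ with $\ell_h < t_{i+1}$ and argue that at least $n_h/2$ of his agents lie at positions in $\{t_1,\ldots,t_i\}$. By the definition of weighted median (and Observation~\ref{ob:median_optimal}), the number of agents of $d_h$ lying strictly to the right of $\ell_h$ is at most $n_h/2$. Since $\ell_h < t_{i+1}$ and every agent's location belongs to the discrete sorted set $\{t_1,\ldots,t_n\}$, any agent of $d_h$ at position $\ge t_{i+1}$ is strictly to the right of $\ell_h$. Hence at most $n_h/2$ agents of $d_h$ have position $\ge t_{i+1}$, so at least $n_h/2$ of them have position $\le t_i$.

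Summing this latter bound over all $h$ with $\ell_h < t_{i+1}$, the total number of agents of those mediators sitting at positions $\le t_i$ is at least $L_i/2$. On the other hand, by the definition of the sorted order $t_1 \le \cdots \le t_n$, exactly $i$ agents in total lie at positions $\le t_i$. Combining these gives $L_i/2 \le i$, i.e., $L_i \le 2i$. The bound $R_i \le 2(n-i)$ follows by the entirely symmetric argument, swapping ``left of $\ell_h$'' with ``right of $\ell_h$'' and using that exactly $n-i$ agents lie at positions $\ge t_{i+1}$. I do not foresee a real obstacle; the only care required is to use strict versus non-strict inequalities consistently, so that ``position $< t_{i+1}$'' coincides with ``position $\le t_i$'', which is justified by the discreteness of the agent locations under the sorted indexing.
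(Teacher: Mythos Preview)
Your proposal is correct and follows essentially the same argument as the paper's proof: express $L_i$ as $\sum_{h : \ell_h < t_{i+1}} n_h$, use the median property to place at least half of each such mediator's agents to the left of $s_i$, and bound the total by the $i$ available slots. One small caveat: the claim that ``position $< t_{i+1}$'' coincides with ``position $\le t_i$'' fails when $t_i = t_{i+1}$, but the argument goes through cleanly if you instead note that any agent with location strictly less than $t_{i+1}$ must have index at most $i$ in the sorted order, which is exactly how the paper phrases it.
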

\begin{proof}
We prove $L_i \leq 2i$; the proof that $R_i \leq 2(n - i)$ is symmetrical. Let $D_i = \{d_h ~|~ \ell_h < t_{i + 1}\}$. Notice that a mediator $d_h$ contributes to $L_i$ if and only if $\ell_h < t_{i + 1}$ (\ie, $\ell_h$ is to the left of $s_i$), in which case his contribution to $L_i$ is $n_h$. Hence, $L_i = \sum_{d_h \in D_i} n_h$.

Consider now an arbitrary mediator $d_h \in D_i$. At least half the agents of $d_h$ must have locations smaller or equal to $\ell_h < t_{i + 1}$, and therefore must be of the form $b_j$ for $j \leq i$. Hence, the size of the set $B_i = \{b_j ~|~ j \leq i \wedge \exists_{h \in D_i} b_j \in A_h\}$ is at least $\sum_{d_h \in D} n_h/2 = L_i / 2$. The lemma now follows by observing that $B$ cannot contain more than $i$ agents.
\end{proof}

We are now ready to prove the approximation ratio of TPRM.

\begin{lemma}
The Two Percentiles Range Mechanism is $2$-competitive.
\end{lemma}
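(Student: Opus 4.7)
The plan is to bound the expected social cost of TPRM segment by segment. For every segment $s_i = [t_i, t_{i+1}]$, I would first observe that the facility location $F$ chosen by TPRM is always some $u_J$, and each $u_j$ equals the median of some mediator's agents, hence equals some agent location $t_m$. Consequently $F$ never lies strictly inside $s_i$, so the number of agents ``using'' $s_i$ is either $i$ (when $F \geq t_{i+1}$) or $n - i$ (when $F \leq t_i$). Writing $p_i := \Pr[F \geq t_{i+1}]$, I would express
\[
E[\mathrm{cost}]
= \sum_{i=1}^{n-1} x_i \cdot \bigl(i \cdot p_i + (n-i)(1-p_i)\bigr)
= \sum_{i=1}^{n-1} x_i \cdot \bigl((n-i) + (2i-n)\,p_i\bigr).
\]
Since the $u_j$'s are sorted, and since no $u_j$ lies strictly between $t_i$ and $t_{i+1}$, one has $L_i + R_i = n$ and $p_i = \Pr[J > L_i]$, where $J$ is the random index chosen by TPRM.

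Combining with the formula for OPT proved just above, $\mathrm{OPT} = \sum_{i=1}^{n-1} \min(i,n-i)\,x_i$, it suffices to show the per-segment bound $E[\mathrm{use}_i] \leq 2 \min(i, n-i)$. By the left/right symmetry (swap left with right and use $R_i \leq 2(n-i)$ in place of $L_i \leq 2i$), assume $i \leq n/2$; then $2i - n \leq 0$, so replacing $p_i = \Pr[J > L_i]$ by the smaller quantity $\Pr[J > 2i]$ (justified by $L_i \leq 2i$ from Lemma~\ref{le:l_i_r_i_bound}) only increases the bound. Thus it remains to verify
\[
(n - i) + (2i - n)\,\Pr[J > 2i] \;\leq\; 2i,
\quad\text{i.e.,}\quad
(n - 2i)\,\Pr[J > 2i] \;\geq\; n - 3i.
\]

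To finish, I would do a case analysis on where the threshold $2i$ falls with respect to the support $\{\lfloor n/4\rfloor+1, \ldots, \lceil 3n/4\rceil\}$ of $J$, using the explicit TPRM distribution. For clarity I would first present the case $4 \mid n$ (no endpoint correction): if $2i \leq n/4$ then $\Pr[J > 2i] = 1$ and the inequality reduces to $i \geq 0$; if $n/4 < 2i \leq 3n/4$ then $\Pr[J > 2i] = 2(3n/4 - 2i)/n$ and the inequality becomes, after clearing denominators and expanding, $(n - 4i)^2 \geq 0$, which is tight exactly at $i = n/4$; and if $2i > 3n/4$ then $\Pr[J > 2i] = 0$ and the RHS $n - 3i < 0$ since $i > 3n/8 > n/3$. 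I would then remark that the general case $n = 4q + r$ with $r \in \{1,2,3\}$ only perturbs the endpoint probabilities by factors $(1 - r/4)$, and that the same three-case analysis, with the ``perfect square'' identity becoming a minor perturbation thereof, still yields the desired inequality.

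The main obstacle is the middle case of the case analysis, where the bound is tight: the ``perfect square'' collapse $(n-4i)^2 \geq 0$ is what forces the specific choice of the TPRM distribution (uniform over the central half of the sorted medians), and any looser choice would fail at $i = n/4$. The remaining bookkeeping for $n \not\equiv 0 \pmod 4$ is routine but must be done carefully to avoid off-by-one issues at the boundary indices $\lfloor n/4\rfloor + 1$ and $\lceil 3n/4\rceil$.
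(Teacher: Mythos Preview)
Your proposal is correct and takes essentially the same segment-by-segment route as the paper: bound the probability that the facility lands on the ``wrong'' side of $s_i$ using $L_i\le 2i$, then reduce the per-segment inequality to $(n-4i)^2\ge 0$, tight at $i=n/4$. One simplification you are missing: the endpoint weights $(1-r/4)$ are chosen precisely so that $\Pr[J\le m]=\max\bigl\{0,\min\{1,\,2m/n-1/2\}\bigr\}$ holds \emph{exactly} for every integer $m$ and every $n$, so the general-$n$ case requires no separate ``perturbation'' analysis---the perfect-square identity survives verbatim.
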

\begin{proof}
Let $c_i$ be the coefficient of $x_i$ in Expression~(\ref{eq:opt_value}), \ie, $c_i = i$ for $i \leq n/2$, and $c_i = n - i$ for $i \geq n / 2$. To prove that TPRM is a $2$-competitive mechanism, it is enough to show that, given the facility location selected by TPRM, the expected number of agents using segment $s_i$ is at most $2c_i$.

Consider an arbitrary section $s_i$, and assume $i \leq n/2$ (the case $i \geq n/2$ is symmetric). Let $P_i$ be the probability that the facility location chosen by TPRM is to the left of $s_i$. If the chosen facility location is indeed to the left of $s_i$, then all the agents to the right of $s_i$ will use it, \ie, $n - i$ agents. On the other hand, if $s_i$ is to the right of the chosen facility location, than only $i$ agents will use it. Hence, the expected number of agents using $s_i$ is:
\[
    P_i \cdot (n - i) + (1 - P_i) \cdot i
    =
    P_i \cdot (n - 2i) + i
    \enspace.
\]
The ratio between this expectation and $c_i = i$ is:
\begin{equation} \label{eq:ratio}
    P_i \cdot \left(\frac{n}{i} - 2\right) + 1 \enspace.
\end{equation}
Proving the lemma now boils down to proving that (\ref{eq:ratio}) is always upper bounded by $2$. Notice that the coefficient of $P_i$ is always non-negative (recall that we assumed $i \leq n/2$), and therefore, any upper bound on $P_i$ will translate into an upper bound on (\ref{eq:ratio}).

By Lemma~\ref{le:l_i_r_i_bound}, $L_i \leq 2i$, which implies, by definition, that $u_{2i + 1} \geq t_{i + 1}$ (\ie, $u_{2i + 1}$ is to the right of $s_i$). Hence, the facility location picked can be to the left of $s_i$ only if the center picks one of the values: $u_{\lfloor n/4 \rfloor + 1}, u_{\lfloor n/4 \rfloor + 2}, \ldots, u_{2i}$. Thus, we can upper bound $P_i$ by:
\[
    P_i
    \leq
    \max \left\{0, \frac{[2i - (\lfloor n/4 \rfloor + 1)] + (1 - r/4)}{n/2} \right\}
    =
    \max \left\{0, \frac{2i - n/4}{n/2} \right\}
    =
    \max \left\{0, \frac{4i}{n} - \frac{1}{2} \right\}
    \enspace.
\]

If $\frac{4i}{n} - \frac{1}{2}  \leq 0$, than $P_i  = 0$, and the ratio (\ref{eq:ratio}) becomes $1$.
Hence, we only need to consider the case that
$\frac{4i}{n} - \frac{1}{2} > 0$. Using this assumption, the ratio~(\ref{eq:ratio}) can be upper bounded by:
\[
     P_i \cdot \left(\frac{n}{i} - 2\right) + 1
     \leq
     \left( \frac{4i}{n} - \frac{1}{2} \right) \cdot \left(\frac{n}{i} - 2\right) + 1
     =
     4 - \frac{8i}{n} - \frac{n}{2i} + 1 + 1
     =
     6 - \frac{8i}{n} - \frac{n}{2i}
     \enspace.
\]

It can be easily checked that the last expression is maximized for $i = n/4$, and its maximum value is exactly $2$.
\end{proof}

\subsubsection{Proof: TPRM is Two-Sided IC}

Next, we prove that TPRM is two-sided incentive compatible. For that purpose we need the following observation.
\begin{observation} \label{ob:move_mediator_bad_for_agent}
Fix the reports of all mediators except for $d_i$. Let us denote by $u_1, u_2, \ldots, u_n$ and $u'_1, u'_2, \ldots, u'_n$ the set $\{u_h\}_{h = 1}^n$ assuming mediator $d_i$ reports a set of locations whose median is $\ell_i$ and $\ell'_i$,
respectively, where $\ell'_i > \ell_i$ ($\ell'_i < \ell_i$). Then $u'_h \geq u_h$ ($u'_h \leq u_h$) for every $1 \leq h \leq n$. Moreover, if $u_h < \ell_i$ ($u_h > \ell_i$) then $u_h = u'_h$.
\end{observation}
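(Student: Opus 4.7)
The plan is to reduce the claim to a pointwise comparison between the counting functions (empirical CDFs) of the two sorted multisets. I would introduce $T$ as the multiset contributed by all mediators other than $d_i$, i.e., $n_j$ copies of $\ell_j$ for each $j \neq i$. Then the first-scenario multiset is $T$ augmented with $n_i$ copies of $\ell_i$, while the second-scenario multiset is $T$ augmented with $n_i$ copies of $\ell'_i$. Throughout, I would handle only the case $\ell'_i > \ell_i$; the case $\ell'_i < \ell_i$ is entirely symmetric, obtained by swapping the roles of the two scenarios.

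The key step is a three-way case analysis on the threshold. For every real $x$, let $N(x)$ and $N'(x)$ denote the number of elements at most $x$ in the first- and second-scenario multisets respectively, and set $K(x) = |\{q \in T : q \leq x\}|$. If $x < \ell_i$, then neither $\ell_i$ nor $\ell'_i$ contributes, so $N(x) = N'(x) = K(x)$. If $\ell_i \leq x < \ell'_i$, then the $n_i$ copies of $\ell_i$ contribute in the first scenario but the $n_i$ copies of $\ell'_i$ do not in the second, so $N(x) = K(x) + n_i$ and $N'(x) = K(x)$. If $x \geq \ell'_i$, then both contribute, so $N(x) = N'(x) = K(x) + n_i$. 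In every case $N(x) \geq N'(x)$, with equality whenever $x \notin [\ell_i, \ell'_i)$.

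Both claims then drop out cleanly. Using the identification $u_h = \min\{x : N(x) \geq h\}$ and $u'_h = \min\{x : N'(x) \geq h\}$, the pointwise inequality $N \geq N'$ implies $u_h \leq u'_h$ for every $h$, which is the first assertion. For the ``moreover'' part, if $u_h < \ell_i$, then $N(u_h) = K(u_h)$, and since $N(u_h) \geq h$ by definition, we get $N'(u_h) = K(u_h) \geq h$, forcing $u'_h \leq u_h$; combined with the first inequality this yields $u_h = u'_h$. I do not anticipate any real obstacle here, since the statement is essentially a pointwise dominance between two empirical CDFs that differ only by shifting $n_i$ units of mass from $\ell_i$ to the larger point $\ell'_i$; the only thing requiring mild care is handling ties, which is why I phrase the argument in terms of counts rather than trying to match up individual elements of the two sorted lists.
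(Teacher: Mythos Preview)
Your proposal is correct and follows essentially the same idea as the paper: both arguments rest on the observation that the only change between the two multisets is that $n_i$ copies of $\ell_i$ are replaced by $n_i$ copies of the larger value $\ell'_i$, which can only push each order statistic weakly to the right and leaves the portion of the sorted list below $\ell_i$ untouched. The paper states this informally by tracking how the sorted list changes, while you formalize the same monotonicity via the counting functions $N$ and $N'$ and the characterization $u_h=\min\{x:N(x)\ge h\}$; your version is more careful about ties but not conceptually different.
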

\begin{proof}
We prove the case of $\ell'_i > \ell_i$. The other case is symmetric. Assume the median of the reports of $d_i$ is $\ell_i$, and let us analyze the effect on the list $u_1, u_2, \ldots, u_n$ when this median changes to $\ell'_i$. Following the change some $u_h$ values that are equal to $\ell_i$ increase to $\ell'_i$, and possibly move to later places in the list.
From this description, it is clear that $u_h \leq u'_h$ for every $1 \leq h \leq n$. Moreover, the part of the list containing values smaller than $\ell_i$ is unaffect, and therefore, $u_h = u'_h$ whenever $u_h < \ell_i$.
\end{proof}

\begin{lemma}
The Two Percentiles Range Mechanism is agent-side IC.
\end{lemma}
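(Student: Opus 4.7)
The plan is to directly exploit Observations~\ref{ob:point_moves} and~\ref{ob:move_mediator_bad_for_agent}, which together pin down what a single-agent deviation can do to the distribution picked by TPRM. First I would fix an agent $a_{i,j}$ and a potential deviation from her truthful location $t_{i,j}$ to a false report $t'_{i,j}$, and denote by $\ell_i$ and $\ell'_i$ the medians of $d_i$'s input multi-set (closest to $z_i$) under the truthful and deviating reports, respectively. If $\ell_i=\ell'_i$, then the aggregated list $u_1,\dots,u_n$ is literally the same in both scenarios and TPRM induces the identical distribution over facility locations, so the deviation is neither beneficial nor harmful, and this reduces the question to the case $\ell_i\neq \ell'_i$.

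In that case I would apply Observation~\ref{ob:point_moves} to $d_i$'s multi-set of unit-weight reports: the only change between the truthful and deviating inputs is the single point at the position of $a_{i,j}$, so the observation guarantees that the path from $t_{i,j}$ to $t'_{i,j}$ passes through both $\ell_i$ and $\ell'_i$, with $\ell_i$ closer to $t_{i,j}$. On a line this forces $\ell_i$ to lie (weakly) between $t_{i,j}$ and $\ell'_i$, so without loss of generality I may assume $\ell'_i > \ell_i$, which yields $t_{i,j}\le \ell_i < \ell'_i$ (the opposite direction is handled symmetrically).

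Next, Observation~\ref{ob:move_mediator_bad_for_agent} tells me that when $\ell'_i>\ell_i$ one has $u'_h \ge u_h$ for every index $h$, with equality whenever $u_h < \ell_i$. Combining this with $t_{i,j}\le \ell_i$, I would verify $|t_{i,j}-u'_h|\ge |t_{i,j}-u_h|$ index by index: if $u_h<\ell_i$ the two values coincide; otherwise $u'_h\ge u_h\ge \ell_i\ge t_{i,j}$, and both distances are $u_h-t_{i,j}$ and $u'_h-t_{i,j}$, so the inequality reduces to $u'_h\ge u_h$.

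Finally, the probability TPRM assigns to each value in the sorted list depends only on the index $h$ and on $n$, not on the underlying numerical values. Hence the agent's expected distance is a convex combination of the $|t_{i,j}-u_h|$ with weights that are untouched by the deviation, and summing the per-index inequalities with this fixed weight vector shows that the expected cost under the false report is at least the expected cost under truth-telling. The main thing to be careful about is the passage from an arbitrary deviation to one that shifts the mediator's median in a single, known direction along the line; once Observation~\ref{ob:point_moves} delivers that reduction, the rest is a direct composition of the two cited observations with the value-independence of TPRM's index weights.
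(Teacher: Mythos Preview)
Your proposal is correct and follows essentially the same approach as the paper: reduce to the case $\ell_i\neq\ell'_i$, use Observation~\ref{ob:point_moves} to place $t_{i,j}$ on the ``$\ell_i$ side'' of the path, then invoke Observation~\ref{ob:move_mediator_bad_for_agent} together with the fact that TPRM's index weights are value-independent to compare costs term by term. If anything, your per-index case analysis for $|t_{i,j}-u'_h|\ge |t_{i,j}-u_h|$ is a bit more explicit than the paper's one-line justification.
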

\begin{proof}
Consider an arbitrary agent $a_{i, j}$. We would like to show that, given that $d_i$ is truthful, it is non-beneficial for $a_{i, j}$ to report a fake location $t'_{i, j}$ instead of its true location $t_{i, j}$. Let $\ell_i$ and $\ell'_i$ denote the median of the reports (breaking ties towards $z_i$) of mediator $d_i$ assuming $a_{i, j}$ reports its true location $t_i$ and the fake location $t'_i$, respectively. If $\ell_i = \ell'_i$, then reporting the fake location $t'_{i, j}$ is certainly non-beneficial for $a_{i, j}$. Otherwise, assume without loss of generality that $\ell_i < \ell'_i$. By Observation~\ref{ob:point_moves}, $\ell_i$ is on the path between $t_{i, j}$ and $\ell'_i$, \ie, $t_{i, j} \leq \ell_i$.

Let $u_1, u_2, \ldots, u_n$ and $u'_1, u'_2, \ldots, u'_n$ denote the $u_h$'s before and after the deviation of $a_{i, j}$. By Observation~\ref{ob:move_mediator_bad_for_agent}, $u'_h \geq u_h$ for every $1 \leq h \leq n$ and $u_h = u'_h$ when $u_h < \ell_i$. Hence, $\dist(u_h, t_{i, j}) \geq \dist(u'_h, t_{i, j})$ for every $1 \leq h \leq n$ because $t_{i, j} \leq \ell_i$. As TPRM selects $u_h$ and $u'_h$ as the facility location with the same probability, we get that the expected distance of $t_{i, j}$ from the facility cannot decrease following the deviation of $a_{i,j}$.
\end{proof}

\begin{lemma}
The Two Percentiles Range Mechanism is mediator-side IC.
\end{lemma}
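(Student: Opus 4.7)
The plan is to mirror the structure used in the mediator-side IC proof of WMM, replacing the single facility location analysis with an expectation over the positions selected by TPRM. The key ingredients will be Observation~\ref{ob:move_mediator_bad_for_agent} (monotone shift of the sorted list), Observation~\ref{ob:median_optimal} (median minimizes total distance), and the fact that $\sum_{j} \dist(\cdot, t_{i,j})$ is a convex function on the line.

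First, I would fix a mediator $d_i$ and note that the only quantity TPRM extracts from $d_i$'s report is the median $\ell_i$ closest to $z_i$; the rest of the individual locations reported by $d_i$ are irrelevant to the output. Hence every deviation of $d_i$ can be parameterized by the induced median $\ell'_i$ (reporting $n_i$ copies of $\ell'_i$ realizes any such value). If $\ell'_i = \ell_i$, the mechanism's output distribution is identical and the deviation is vacuously non-beneficial; otherwise, by symmetry I assume $\ell'_i > \ell_i$.

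Next, let $u_1 \leq \cdots \leq u_n$ and $u'_1 \leq \cdots \leq u'_n$ denote the sorted multiset $S$ under the truthful report and under the deviation, respectively. Observation~\ref{ob:move_mediator_bad_for_agent} gives $u'_h \geq u_h$ for every index $h$, with equality whenever $u_h < \ell_i$. Because the selection probability $p_h$ used by TPRM for the element at position $h$ depends only on $h$ and $n$ (both unaffected by $d_i$'s report), the change in expected cost to $d_i$'s agents can be written index-wise as
\[
\sum_{h=1}^{n} p_h \bigl(c(u'_h) - c(u_h)\bigr),
\qquad
c(y) := \sum_{j=1}^{n_i} \dist(y, t_{i,j}).
\]
The function $c$ is convex on the line, being a sum of one-dimensional distance functions, and by Observation~\ref{ob:median_optimal} it attains its minimum at the median $\ell_i$ of $d_i$'s truthful agents; consequently $c$ is non-decreasing on $[\ell_i, \infty)$.

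I would then finish by checking term-by-term non-negativity: indices with $u_h < \ell_i$ contribute $0$ since $u'_h = u_h$, while indices with $u_h \geq \ell_i$ satisfy $\ell_i \leq u_h \leq u'_h$, so $c(u'_h) \geq c(u_h)$ by monotonicity of $c$ on $[\ell_i,\infty)$. The sum is therefore non-negative, meaning the deviation weakly increases the expected cost to $d_i$'s agents. The main obstacle is essentially conceptual bookkeeping, namely recognizing that $p_h$ is a function of the \emph{position} $h$ rather than of the \emph{value} $u_h$, which is what allows the cost comparison to decompose into an index-aligned sum to which Observation~\ref{ob:move_mediator_bad_for_agent} applies directly.
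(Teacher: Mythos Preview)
Your proof is correct and follows essentially the same approach as the paper: reduce to the induced median $\ell'_i$, invoke Observation~\ref{ob:move_mediator_bad_for_agent} to get $u'_h \geq u_h$ with equality when $u_h < \ell_i$, and use convexity of $c(y)=\sum_j \dist(y,t_{i,j})$ together with its minimum at $\ell_i$ to compare terms index-by-index. The only cosmetic difference is that the paper phrases the last step as a process that swaps one $u_h$ for $u'_h$ at a time, whereas you write the change in expected cost directly as the sum $\sum_h p_h\bigl(c(u'_h)-c(u_h)\bigr)$; both rely on the same observation that $p_h$ depends only on the position $h$.
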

\begin{proof}
Consider an arbitrary mediator $d_i$. We should prove that it is always a best response for $d_i$ to be truthful given that the agents of $A_i$ are truthful. Let $\ell_i$ be the median of the locations reported by the agents to $d_i$, that is closest to $z_i$.
Since the agents of $A_i$ are truthful, $\ell_i$ is also the median of the true locations of the agents of $A_i$ that is closest to $z_i$. Every strategy of $d_i$ can be characterized by the median closest to $z_i$ of the locations reported by $d_i$ on behalf of his agents because this is the only information used by the center. Hence, we can characterize any deviation of $d_i$ from truthfulness by the median $\ell'_i$ of the locations it reports that is closest to $z_i$.
Our objective is to prove that a deviation represented by $\ell'_i$ is no better than being truthful.

Assume, without loss of generality, $\ell_i < \ell'_i$. Let $u_h$ and $u'_h$ be the value of the $h^{th}$ element in the list $u_1, u_2, \ldots, u_n$, assuming $d_i$ is truthful or chooses a deviation represented by $\ell'_i$. If $u_h = u'_h$, then $d_i$ clearly does not care whether the facility is located at $u_h$ or $u'_h$. Otherwise, by Observation~\ref{ob:move_mediator_bad_for_agent}, $u'_h > u_h \geq \ell_i$.

Let $p_x = \ell_i + x$, and let $c_x$ be the social cost of the agents represented by $d_i$ if the facility is located at $p_x$. Let $p$ be an arbitrary point, and let us observe the change in $\dist(p_x, p)$ as $x$ increases. Clearly, $\dist(p_x, p)$ initially decreases at a constant rate and then starts to increase again in a constant rate (possibly the decrease or increase part is of zero length). Hence, the derivative of $c_x$ is a non-decreasing function in $x$ (expect for a finite number of $x$ values for which it might not be defined). For $x = 0$, $p_x$ is a median of the locations of the agents of $d_i$, and therefore, $c_x$ is minimized for $x = 0$. Thus, the derivative of $c_x$ is non-negative for every $x > 0$, whenever it is defined. Let $y$ and $y'$ be the values of $x$ for which $u_h = p_y$ and $u'_h = p_{y'}$. Then, we have $y < y'$, and therefore, $c_y \leq c_{y'}$, \ie, locating the facility at $u'_h$ induces no better social cost for the agents of $A_i$ in comparison to locating it at $u_h$. Hence, mediator $d_i$ does not prefer that the facility will be located at $u'_h$ over locating it at $u_h$.

Consider the following process. We start with $u_1, u_2, \ldots, u_n$, and at each step pick some $u_h$ and replace it with $u'_h$ (\ie, the probability mass that was given to $u_h$ moves to $u'_h$). Notice two things. First, $u_h$ and $u'_h$ have the same probability to be selected by the center (since they have the same index), and therefore, no step improves the social cost of the agents represented by $d_i$. 
Second, the process described starts with the configuration resulting from a truthful $d_i$, and ends with the configuration resulting of a deviation represented by $\ell'_i$. Hence, this deviation is non-beneficial for $d_i$.
\end{proof}
\section{Lower Bounds on Two-Sided IC Mechanisms}
\label{sec:lower}

In this section we prove the lower bounds given in Sections~\ref{sc:deterministic_algorithm} and~\ref{sc:rand} on the competitive ratios that two-sided incentive compatible mechanisms can achieve on any tree metric.
All our lower bounds are based on a construction for the line metric on $[0,1]$ (\ie, on the simple tree $T$ that is just an interval) which is given as Example~\ref{exmple:lower-bound}. The lower bounds exploit the inability of mediator-side IC mechanisms to use information other than the location of the optimal point (median) for each mediator, due to the mediator incentive constraints. Example~\ref{exmple:lower-bound} is essentially identical to the construction used in~\cite{DekelFP2010,ProcacciaT13}. However, while these papers only provide a lower bound for the deterministic case, we show that the same construction can also be used to prove a lower bound for the randomized case.
Note that interval lower bounds automatically extend to any tree $T$, since the interval $[0,1]$ can be mapped to any path connecting two leaves of the tree.

\begin{example}
\label{exmple:lower-bound}
Fix $r$ to be an arbitrary positive integer.
For every pair of points $h$ and $l$ such that $0 \leq l < h \leq 1$ we define three instances $I_1^{l,h}$, $I_2^{l,h}$ and $I_3^{l,h}$.
Each instance has two mediators $d_1$ and $d_2$, with each mediator representing $2r+1$ agents.
In $I_1^{l,h}$, mediator $d_1$ represents $r+1$ agents located at $l$ and $r$ agents located at $h$, while $d_2$ represents $2r+1$ agents located at $h$. In $I_2^{l,h}$, mediator $d_1$ represents $2r+1$ agents located at $l$, while $d_2$ represents $r$ agents located at $l$, and $r+1$ at $h$.
Finally, in $I_3^{l,h}$ mediator $d_1$ represents $2r+1$ agents located at $l$, while $d_2$ represents $2r+1$ agents located at $h$.
\end{example}

To make use of Example~\ref{exmple:lower-bound}, we need to make the following observation.

\begin{observation}
In all instances $I_1^{l,h}$, $I_2^{l,h}$ and $I_3^{l,h}$ of Example~\ref{exmple:lower-bound}, each mediator represents agents with a unique median. Moreover, the unique median of the agents represented by $d_1$ is $l$, while the unique median of the agents represented by $d_2$ is $h$. The globally optimal location for $I_1^{l,h}$ is to locate the facility at $h$, and has a cost of $(h - l)(r+1)$. Similarly, the globally optimal location for $I_2^{l,h}$ is to locate the facility at $l$, and its cost is also $(h - l)(r+1)$.
\end{observation}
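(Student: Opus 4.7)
The plan is to verify each claim by direct inspection, relying only on the definition of median on a line (the middle entry in sorted order, when the multi-set has odd size) together with Observation~\ref{ob:median_optimal} for identifying globally optimal locations.

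First I would establish the per-mediator medians. Since each mediator represents $2r+1$ agents (an odd number), the multi-set of each mediator's agent locations on the line $[0,1]$ has a unique median, namely the $(r+1)$-th entry in non-decreasing order. I would then check the three instances in turn. In $I_1^{l,h}$, mediator $d_1$'s sorted locations are $r+1$ copies of $l$ followed by $r$ copies of $h$, so the $(r+1)$-th entry is $l$, while $d_2$'s locations are uniformly $h$. In $I_2^{l,h}$, $d_1$'s locations are uniformly $l$, and $d_2$'s sorted list is $r$ copies of $l$ followed by $r+1$ copies of $h$, whose $(r+1)$-th entry is $h$. In $I_3^{l,h}$ both sets are homogeneous, giving medians $l$ and $h$ immediately. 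This handles the first part of the observation.

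For the global optima I would appeal to Observation~\ref{ob:median_optimal}, which says that a point minimizes the social cost of a set of agents precisely when it is a median of the corresponding multi-set. In $I_1^{l,h}$ the combined multi-set of all $4r+2$ agent locations consists of $r+1$ copies of $l$ and $3r+1$ copies of $h$; both the $(2r+1)$-th and $(2r+2)$-th entries in sorted order equal $h$, so $h$ is the unique median, and only the $r+1$ agents at $l$ contribute to its social cost, yielding $(r+1)(h-l)$. The case $I_2^{l,h}$ is symmetric: the full multi-set has $3r+1$ copies of $l$ and $r+1$ copies of $h$, so $l$ is the unique median with cost $(r+1)(h-l)$. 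There is no real obstacle here; the entire argument is bookkeeping from the definitions, and the only subtlety is carefully counting multiplicities at $l$ versus $h$ in each instance so that the $(r+1)$-th and $(2r+1)$-th positions land where claimed.
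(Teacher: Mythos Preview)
Your proposal is correct and is exactly the straightforward verification one would expect; the paper itself gives no proof for this observation, treating it as immediate from the definitions in Example~\ref{exmple:lower-bound} and Observation~\ref{ob:median_optimal}.
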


Using the above observation, we can prove the following lemma.

\begin{lemma}
\label{lemma:IC-median}
Consider the two instances $I_1=I_1^{0,1}$ and $I_2=I_2^{0,1}$ defined as in Example~\ref{exmple:lower-bound}.
For any mediator-side IC mechanism, the expected location\footnote{Notice that a location in a line metric is just a number, hence, one can calculate expectation over locations.} picked given either $I_1$ or $I_2$ must be the same.
\end{lemma}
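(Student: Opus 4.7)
The plan is to introduce the third instance $I_3 = I_3^{0,1}$ from Example~\ref{exmple:lower-bound} as a pivot, and to show $\mathbb{E}_{F_1}[p] = \mathbb{E}_{F_3}[p] = \mathbb{E}_{F_2}[p]$, where $F_j$ denotes the distribution over the facility location $p$ returned by the mechanism when it is run on instance $I_j$. The key observation is that $I_1$ and $I_3$ share the same truthful report of $d_2$ (all $2r+1$ agents at $1$) and both assign $d_1$ the same number of agents ($2r+1$). Thus, when the agents of $d_1$ are truthful in $I_1$, $d_1$ can unilaterally make the mechanism see the input of $I_3$ by reporting $\{0^{2r+1}\}$ instead of the truthful $\{0^{r+1}, 1^r\}$; symmetrically, when the agents of $d_1$ are truthful in $I_3$, $d_1$ can make the mechanism see the input of $I_1$. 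An analogous statement holds for $d_2$ between $I_2$ and $I_3$, since $d_1$ reports $\{0^{2r+1}\}$ truthfully in both of those instances.

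Next, I would use the fact that every valid facility location lies in $[0,1]$, so each mediator's total cost is a monotone affine function of $p$. In $I_1$, the true cost to $d_1$ at location $p \in [0,1]$ is $(r+1)p + r(1-p) = p + r$, and in $I_3$ it is $(2r+1)p$; both are strictly increasing in $p$. Mediator-side IC applied to the deviation of $d_1$ from the $I_1$-report to the $I_3$-report (all other players being truthful) yields $\mathbb{E}_{F_1}[p + r] \leq \mathbb{E}_{F_3}[p + r]$, i.e.\ $\mathbb{E}_{F_1}[p] \leq \mathbb{E}_{F_3}[p]$. Applying IC to the reverse deviation (starting from the true instance $I_3$) yields $(2r+1)\,\mathbb{E}_{F_3}[p] \leq (2r+1)\,\mathbb{E}_{F_1}[p]$, and together these two inequalities force $\mathbb{E}_{F_1}[p] = \mathbb{E}_{F_3}[p]$.

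A symmetric argument applied to $d_2$ between $I_2$ and $I_3$ closes the loop: on $[0,1]$, the true cost to $d_2$ is $rp + (r+1)(1-p) = r+1 - p$ in $I_2$ and $(2r+1)(1-p)$ in $I_3$, both monotonically decreasing affine functions. The two directions of mediator-side IC then give $\mathbb{E}_{F_2}[p] = \mathbb{E}_{F_3}[p]$, and combining with the previous equality finishes the proof. The only technical subtlety, rather than a real obstacle, is recognizing that the affineness of the cost on $[0,1]$ reduces each IC inequality to a one-dimensional constraint on $\mathbb{E}[p]$, so that pairs of opposite IC constraints pin down equalities rather than mere inequalities.
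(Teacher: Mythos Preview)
Your proof is correct and follows essentially the same approach as the paper: both introduce $I_3=I_3^{0,1}$ as a pivot, observe that $d_1$ can unilaterally switch the mechanism's input between $I_1$ and $I_3$ (and $d_2$ between $I_2$ and $I_3$), compute that each mediator's expected cost is an affine monotone function of the expected facility location on $[0,1]$, and then apply mediator-side IC in both directions to force the expected locations to coincide. The only cosmetic difference is that the paper phrases the two IC applications as a case analysis by contradiction, whereas you extract them as a pair of opposite inequalities.
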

\begin{proof}
Denote $I_3=I_3^{0,1}$.
We show that the expected location picked given either $I_1$ or $I_2$ must be the same since in both cases the expected location picked is the expected location picked given $I_3$. We assume throughout the proof that the agents are truthful.

Let $x_j$ be the expected location of the picked location given input $I_j$ (for $j\in \{1,2,3\}$). For any facility location distribution $F$ whose support is within $[0,1]$, the expected cost for mediator $d_1$ in instance $I_1$ is $r+\mathbb{E}[x]$, while the expected cost for mediator $d_1$ in instance $I_3$ is $(2r+1)\mathbb{E}[x]$, where $\mathbb{E}[x]$ is the expected location picked according to $F$. We observe that for both instances and any distribution, the cost of $d_1$ is linearly increasing in the expected location of the facility. This means that for both inputs, mediator $d_1$ prefers a distribution with lower expected location.
That is, if $0\leq x_1< x_3$ then given $I_3$ mediator $d_1$ prefers to pretend that the input is $I_1$, and
if $0\leq x_3< x_1$ then given $I_1$ mediator $d_1$ prefers to pretend that the input is $I_3$.
We conclude that $x_1=x_3$. Similar arguments show that $x_2=x_3$, and thus, $x_1=x_2$.
\end{proof}

We are now ready to prove our lower bounds.

\begin{reptheorem}{th:lower-bound-det-mech}
Fix any constant $\ee > 0$.
Then, no direct revelation \emph{deterministic} mechanism that is mediator-side incentive compatible has a competitive ratio of $3-\ee$, even for line metrics.
\end{reptheorem}
\begin{proof}
Consider an arbitrary direct revelation deterministic mediator-side incentive compatible mechanism, and assume all players are truthful.
To prove the theorem we fix $r \geq 2\ee^{-1}$, and consider two instances $I_1=I_1^{0,1}$ and $I_2=I_2^{0,1}$ as defined in Example~\ref{exmple:lower-bound}.
By Lemma~\ref{lemma:IC-median} (with a single point distributions determined by the deterministic algorithm), the location picked given $I_1$ must be the same as the location picked given $I_2$. Let us denote this location by $p$.

If $p=0$ then for $I_1$ the cost is $3r+1$ while the optimal cost is $r+1$. The resulting competitive ratio is
$\frac{3r+1}{r+1}=3-\frac{2}{r+1}> 3-\frac{2}{r}\geq 3-\ee$ (as $r\geq 2\ee^{-1}$).
Similar arguments show that if $p=1$ then the competitive ratio is again worse than $3-\ee$ for $I_2$. This completes the proof for the case $p\in \{0,1\}$.

Assume now $p\in (0,1)$, and consider the location $x'_2\in [0,1]$ picked by the mechanism given the input $I'_2=I_2^{0,p}$.
If $x'_2< p$ then given $I_2$ mediator $d_2$ prefers to pretend that the input is $I'_2$, and
if $x'_2> p$ then given input $I'_2$ mediator $d_2$ prefers to pretend that the input is $I_2$. We conclude that $x'_2= p$.

The social cost of locating the facility at $p$ given input $I'_2$ is $(3r+1)p$, while the optimal social cost for this instance is $(r+1)p$. The competitive ratio achieved is, therefore, no more than:
$\frac{3r+1}{r+1}=3-\frac{2}{r+1}> 3-\frac{2}{r}\geq 3-\ee$ (as $r\geq 2\ee^{-1}$).
\end{proof}

The lower bound for the randomized case is very similar to the deterministic case, but it uses Lemma~\ref{le:single_better_than_distribution} which allows us to extend the result to randomized mechanisms. 

\begin{reptheorem}{th:lower-bound-rand-mech}
Fix any constant $\ee > 0$.
Then, no direct revelation randomized mechanism that is mediator-side incentive compatible and no mediator based algorithm has a competitive ratio of $2-\ee$, even for line metrics.
\end{reptheorem}
\begin{proof}
Consider an arbitrary direct revelation randomized mechanism which is mediator-side incentive compatible, and assume all players are truthful.
To prove the theorem we fix $r\geq\ee^{-1}$, and consider two instances $I_1=I_1^{0,1}$ and $I_2=I_2^{0,1}$ as defined in Example~\ref{exmple:lower-bound}. By Lemma~\ref{lemma:IC-median}, the expected location for $I_1$ must be the same as the expected location for $I_2$. Let $p$ be this expected location.

By Lemma~\ref{le:single_better_than_distribution}, for any set of agents, replacing a distribution whose expected location is $x_j$ by deterministically locating the facility at $x_j$ only decreases the cost. This means that the performance of the mechanism is no better than the worse performance (over $I_1$ and $I_2$) achieved by locating the facility deterministically at some $p\in [0,1]$. The cost of $I_1$ when the facility is located at $p$ is: $p(r + 1) + (1 - p)(3r + 1)$, and the cost of $I_2$ is $p(3r + 1) + (1 - p)(r + 1)$. The maximum of these costs is:
\begin{align*}
    \max \left\{p(r + 1) + (1 - p)(3r + 1), p(3r + 1) + (1 - p)(r + 1)\right\} \hspace{-3in} & \hspace{3in}\\
    \geq{} &
    \frac{[p(r + 1) + (1 - p)(3r + 1)] + [p(3r + 1) + (1 - p)(r + 1)]}{2}
    =
    2r + 1
    \enspace,
\end{align*}
which induces a competitive ratio of at least:
$\frac{2r+1}{r+1}=2-\frac{1}{r+1}> 2-\frac{1}{r}\geq 2-\ee$ (as $r\geq\ee^{-1}$).

This completes the proof for direct revelation randomized mechanisms that are mediator-side incentive compatible. The theorem holds also for mediator based algorithms because such algorithms cannot distinguish between the inputs $I_1^{0,1}$ and $I_2^{0,1}$.
\end{proof}
\section{Multiple Levels of Mediation}
\label{sec:hierarchy}

In this section we present results for the case of multiple levels of mediation. 
We can represent a hierarchy of mediators by a tree (note that this tree is {\em not} the same tree as the one defining the metric). The root of the tree represents the center, each internal node represents a mediator, and each leaf represents an agent.  The tree is common knowledge. Let $s$ be the {\em depth} of the tree, or the maximal number of edges between the root and a leaf.
The case we have studied so far is thus represented by a tree with three levels ($s$ = 2), the root represents the center, the internal level represents the mediators, and the leaves represents the agents.

As before, a {\em player} is either an agent or a mediator. For each player, another player is a {\em direct kin} if that other player is either a descendent in the tree, or an ancestor in the tree.
Recall that a direct revelation mechanism is a mechanism in which each agent reports her location, and each mediator reports all of the locations of the agents below him. In such a mechanism, we say that a player is {\em truthful} if she is an agent and she reports her location truthfully, or he is a mediator and he reports all of the locations of the agents below him truthfully (as received from his direct descendants).
While our results from the previous sections show the existence of competitive mechanisms that are two-sided incentive compatible for trees with a single layer of mediators, the next theorem proves that even for a much weaker solution concept, ex-post incentive compatibility, competitive mechanisms with multiple layers of mediators are impossible.

\begin{definition}
A direct revelation mechanism is {\em ex-post incentive compatible} if for every player, being truthful always maximizes the player's utility, assuming that all other players are truthful.
\end{definition}
We note that this solution concept is weaker than being two-sided IC (or the natural generalization of this notion to more levels) as it does not require any player to ever have a dominant strategy in an induced game with only part of the other players (his direct kins) being truthful. Unfortunately, it is impossible to construct competitive ex-post IC mechanisms for $s > 2$.

\begin{theorem}
\label{thm:no-ex-post-ic-hier}
No mechanism with a finite competitive ratio is ex-post IC for instances with $s > 2$.
\end{theorem}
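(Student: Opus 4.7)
The plan is to prove impossibility by exhibiting an instance with $s = 3$ where ex-post IC and finite competitive ratio are jointly unsatisfiable. The argument strategy adapts the core idea of Observation~\ref{ob:no_dominant}---that a mediator has an incentive to ``correct'' misinformation passed to him---to the hierarchical setting, with the twist that ex-post IC is weaker than dominant strategy truthfulness and so requires additional leverage to yield impossibility.

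First, I fix a tree structure containing a chain of depth three from the center through a top mediator $d_1$, a sub-mediator $d_2$, down to an agent $a$, augmented with additional agents (for instance a second agent directly under $d_1$, or a second top-level mediator beside $d_1$) chosen so that $d_1$'s locally optimal facility location can differ from the globally optimal one. By finite competitiveness, under truthful reporting the center must place the facility at (or near) the global optimum; this pins down the behavior of the mechanism on all truthful-reporting inputs up to a small competitive slack.

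Second, I analyze the ex-post IC constraint at the intermediate mediator $d_2$. Under ex-post IC, $d_1$ is truthful given that his children are truthful, so $d_1$ merely propagates what $d_2$ sends. From $d_2$'s perspective, the composite of $d_1$'s relay and the center's mechanism therefore acts as a single fixed function of $d_2$'s report (with the reports of $d_1$'s other children held fixed). This effectively reduces $d_2$'s strategic situation to that of a single mediator facing a composite ``center'' as in the $s = 2$ problem, but with the added constraint that the outcome must be competitive over the full tree (not just over $d_2$'s subtree).

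The main obstacle is to derive the contradiction. I expect this to proceed by combining the ex-post IC constraints at both $d_1$ and $d_2$ simultaneously with the global finite-competitiveness constraint. The ex-post IC for $d_1$ forces the mechanism's response to $d_1$'s report to be aligned with $d_1$'s aggregate cost, the ex-post IC for $d_2$ imposes an analogous alignment at the lower level, and finite competitiveness forces the outcome to match the global median. Following the spirit of Observation~\ref{ob:no_dominant} applied at the $d_1$--$d_2$ interface, the hard step is to exhibit two truthful-compatible configurations that the composite (truthful $d_1$ + center) cannot distinguish, but whose optimal facility locations differ arbitrarily, forcing unbounded loss in the competitive ratio.
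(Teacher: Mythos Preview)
Your proposal has the right high-level shape (construct an $s=3$ instance and derive a contradiction from the IC constraints at two levels plus finite competitiveness), but it does not identify the actual mechanism that makes the paper's proof work, and the direction you sketch in the last paragraph---indistinguishability in the spirit of Observation~\ref{ob:no_dominant}---does not apply here. Under ex-post IC you \emph{assume} all other players are truthful, so a truthful $d_1$ faithfully relays $d_2$'s report; the center therefore \emph{can} distinguish different reports from $d_2$, and there is no pair of ``truthful-compatible configurations that the composite cannot distinguish.'' Observation~\ref{ob:no_dominant} exploited that a dominant-strategy mediator must cope with an \emph{arbitrary} report from his child; ex-post IC removes exactly that leverage.

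The paper's argument is different and sharper. It uses an instance with a \emph{single} top mediator $C$ who represents \emph{all} agents (so the center sees only $C$'s report), and two sibling sub-mediators $A,B$ below $C$. Finite competitiveness forces the center to place the facility at $p$ whenever $C$ reports ``everyone is at $p$.'' Ex-post IC at $C$ then pins down the center completely: since $C$ could deviate by reporting all agents at the true global median, the center must already be placing the facility at the median of whatever $C$ reports. Now apply ex-post IC to the lower mediator $A$ (with $B$ and $C$ truthful): $A$ can shift the global median toward his own optimum by collapsing his agents' reports onto his local median, and the paper's concrete numbers ($A$'s agents at $0,0,1$; $B$'s at $2,2$) make this deviation strictly profitable. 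The contradiction comes not from indistinguishability but from the conflict between the top mediator forcing global-median behavior and a lower mediator being able to manipulate that median. Your outline does not isolate this step, and your suggested option of ``a second top-level mediator beside $d_1$'' would actually break the argument that pins the center to the median.
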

\begin{proof}
Consider the following instance on the metric interval $[0, 2]$ with $s=3$.
\begin{compactitem}
	\item Level $3$ contains the center.
	\item Level $2$ contains a single mediator $C$, reporting to the center.
	\item Level $1$ contains two mediators, which we will call $A$ and $B$, both reporting to $C$.
	\item Level $0$ contains five agents:
	\begin{compactitem}
		\item Agents $a, b$ and $c$ are represented by mediator $A$, and located in $0, 0$ and $1$, respectively.
		\item Agents $d$ and $e$ are represented by mediator $B$, and are both located at $2$.
	\end{compactitem}
\end{compactitem}

Assume for the sake of contradiction that there exists a finite competitive ratio mechanism which is ex-post IC for the above instance. Observe that if the center gets a report stating that all five agents are located at a point $p$, then the mechanism must locate the facility at $p$ to have a bounded competitive ratio. 
Assuming all players are truthful, except maybe for $C$. Then the following observations hold:
\begin{compactitem}
	\item Let $S$ be the multi-set of the positions of the agents as reported to $C$. Then, $S$ represents the true locations of all the agents.
	\item The center observes the report of $C$.
\end{compactitem}
Let $m$ be $S$'s median. By Observation~\ref{ob:median_optimal}, $m$ is the optimal facility location for $C$. If $C$ deviates and reports that all agents are located at $m$, the center will have to locate the facility at $m$, due to the above discussion. Thus, for $C$ to have no incentive to deviate, the center must always locate the facility at the median of the locations it gets.

Consider now the situation that all players are truthful, except maybe for $A$. Let us considers $A$'s situation.
$A$ is reported, correctly, that his agents are located at $0, 0$ and $1$. Thus, the optimal facility location for $A$ is $0$. If $A$ reports truthfully, the center will get the reports $0, 0, 1, 2$ and $2$, and, by the above discussion, will locate the facility at $1$. On the other hand, if $A$ deviates, and reports that all his agents are located at $0$, the center will get the reports $0, 0, 0, 2$ and $2$, and will locate the facility at $0$ (which is better from $A$'s point of view). Thus, the mechanism considered is not ex-post IC.
\end{proof}

It seems that in any mechanism that satisfies ``minimal'' incentive properties, the only ``useful'' part of the information reported by a player to its ancestor is the optimal location (from the perspective of that player) with respect to that player's input. That is, the mechanism's output cannot change if the locations reported by a player are all replaced by the optimal location of that player (with respect to that player's input). This observation naturally suggests the following mechanism which generalizes the Weighted Median Mechanism.
The mechanism iteratively computes weighted medians (breaking ties in a report-independent way) for each mediator, bottom up,
and outputs the final location.
Consider the example presented in the proof of Theorem~\ref{thm:no-ex-post-ic-hier}. The suggested mechanism would behave as if $A$ reports that all his agents are located at $0$, and $C$ reports that all his agents are located at $0$ (as this is the median of his input which is $0, 0, 0, 2, 2$). The final outcome would be to locate the facility at $0$. Notice that this outcome, while not optimal, is not too far from optimality.

This raises few questions which we address next. First, assuming that all players are truthful, what is the approximation that this mechanism achieves? Second, what kind of incentive property does this  mechanism satisfy? And finally, for this incentive property, is there any other mechanism that is substantially better?

We begin by establishing some notation.
Let $d_{i, j}$ be the $j^{th}$ node of the $i^{th}$ level in the tree, where the level of the leaves that are furthest away from the root is $i = 0$ (notice that $d_{s, 1}$ is the root). For every $0 \leq i \leq s$, let $m_i$ denote the number of nodes appearing in level $i$ of the tree. For every mediator $d_{i, j}$ we denote by $C_{i, j}$ the set of children of $d_{i, j}$ in the tree and by $A_{i, j}$ the set of leaves (agents) that descent from $d_{i, j}$. For consistency, if $d_{i, j}$ is an agent, then $C_{i, j} = \varnothing$ and $A_{i, j} = \{d_{i, j}\}$. Finally, for every agent $a$, let $t_a$ denote the location of $a$.

We can now formally define the mechanism. We note that this is not a direct revelation mechanism, but rather, each player is asked to report a single location.

In the {\em Iterative Weighted Median Mechanism (IWMM)}:
\begin{compactitem}
	\item An agent reports her location.
	\item A mediator reports the weighted median of the reports he gets. More formally, let $\ell_{i, j}$ be the report of player $d_{i, j}$ and let $z_{i, j}$ be an arbitrary point selected independently of any reports. Then, $\ell_{i, j}$ is the weighted median of the multi-set $\{(\ell_{i - 1, j'}, |A_{i - 1, j'}|) ~|~ d_{i - 1, j'} \in C_{i, j}\}$ closest to $z_{i, j}$.
	\item The center locates the facility at the point it would have reported according to the above rule, if it were a mediator.
\end{compactitem}

Notice that IWMM is well defined by Observation~\ref{ob:closest_median_exists}. The next proposition summarizes the competitiveness of IWMM when viewed as an algorithm (with respect to its input).

\begin{proposition}\label{prop:IWMM-comp}
Assume that in Iterative Weighted Median Mechanism every agent is truthful and every mediator follows the protocol, then the mechanism has a competitive ratio of $2^s-1$ for any tree metric.
\end{proposition}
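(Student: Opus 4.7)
The plan is to prove this by induction on the depth $s$ of the hierarchy tree, establishing the stronger quantitative recurrence $\alpha_s = 2\alpha_{s-1} + 1$ with base case $\alpha_1 = 1$, which immediately yields $\alpha_s = 2^s - 1$. For the base case $s = 1$, the tree has only the center directly above the agents; each agent reports her location truthfully and the center locates the facility at the weighted median of these locations, which by Observation~\ref{ob:median_optimal} is an optimal location, giving ratio $1$.

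For the inductive step, consider a tree of depth $s$ and let $d_{s-1, 1}, \ldots, d_{s-1, m_{s-1}}$ be the children of the root. For each $j$, let $\ell_j$ be the report produced by $d_{s-1, j}$ under IWMM, let $\ell_j^*$ be an optimal facility location for the agents $A_{s-1,j}$, let $p$ be the final location output by IWMM, and let $p^*$ be the globally optimal facility location. Write $c_A(x) = \sum_{a \in A} \dist(x, t_a)$. The plan has three ingredients I would combine in sequence. First, by the inductive hypothesis applied to the sub-hierarchy rooted at $d_{s-1, j}$, the output $\ell_j$ satisfies $c_{A_{s-1,j}}(\ell_j) \leq \alpha_{s-1} \cdot c_{A_{s-1,j}}(\ell_j^*) \leq \alpha_{s-1} \cdot c_{A_{s-1,j}}(p^*)$, where the second inequality uses optimality of $\ell_j^*$ for $A_{s-1,j}$. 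Second, since $p$ is by construction a weighted median of $\{(\ell_j, |A_{s-1,j}|)\}_j$, Observation~\ref{ob:median_optimal} gives $\sum_j |A_{s-1,j}| \cdot \dist(p, \ell_j) \leq \sum_j |A_{s-1,j}| \cdot \dist(p^*, \ell_j)$. Third, two applications of the triangle inequality on the tree metric yield $c_{A_{s-1,j}}(p) \leq c_{A_{s-1,j}}(\ell_j) + |A_{s-1,j}| \cdot \dist(p, \ell_j)$ and $|A_{s-1,j}| \cdot \dist(p^*, \ell_j) \leq c_{A_{s-1,j}}(p^*) + c_{A_{s-1,j}}(\ell_j)$.

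Chaining these three inequalities (sum over $j$ of the first, apply the weighted median bound, then apply the second triangle inequality) produces $c(p) \leq 2 \sum_j c_{A_{s-1,j}}(\ell_j) + c(p^*) \leq (2\alpha_{s-1} + 1) \cdot c(p^*)$, which closes the induction. I expect the calculation itself to be routine; the main conceptual step is the second ingredient, namely recognizing that the weighted median chosen by the root is optimal for the \emph{modified} instance in which every agent in $A_{s-1,j}$ has been relocated to $\ell_j$ with unit weight. This lets one compare the IWMM location to $p^*$ on the modified instance, while the two triangle inequalities translate between the modified instance and the true instance, at the cost of an additive $2\sum_j c_{A_{s-1,j}}(\ell_j)$ that the inductive hypothesis bounds in terms of $c(p^*)$.
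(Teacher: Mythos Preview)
Your proof is correct and uses the same three ingredients as the paper: the weighted-median optimality (Observation~\ref{ob:median_optimal}), the two triangle inequalities relating the true agent locations to the reported medians, and the recurrence $\alpha_s = 2\alpha_{s-1}+1$. The only difference is organizational: the paper inducts on the level $i$ within a fixed hierarchy, directly proving $\sum_{j} c_{A_{i,j}}(\ell_{i,j}) \leq (2^i-1)\cdot OPT$ level by level against the global optimum, whereas you induct on the depth $s$ and apply the hypothesis separately to each sub-hierarchy rooted at a child of the center (comparing first to the local optimum $\ell_j^*$ and then to $p^*$). The underlying inequality chain is identical, so this is essentially the same argument with a top-down rather than bottom-up framing.
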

{
\begin{proof}
Let $F_{OPT}$ and $OPT$ denote an optimal location for the facility and the corresponding social cost, respectively. We prove by induction that for every level $0 \leq i \leq s$, $\sum_{j = 1}^{m_i} \sum_{a \in A_{i, j}} \dist(\ell_{i, j}, t_a) \leq (2^i - 1) \cdot OPT$, assuming all the players follow the mechanism. Notice that plugging $i = s$ implies the lemma since the center locates the facility at $\ell_{s, 1}$.

Let us start with the base of the induction. Every agent $d_{0, j}$ of level $0$ reports its location. Thus, $\ell_{0, j} = t_{d_{0, j}}$ and $A_{0, j} = \{d_{i, j}\}$. Using these observations we get:
\[
	\sum_{j = 1}^{m_0} \sum_{a \in A_{0, j}} \dist(\ell_{0, j}, t_a)
	=
	\sum_{j = 1}^{m_0} \dist(\ell_{0, j}, t_{d_{0, j}})
	=
	0
	=
	(2^0 - 1) \cdot OPT
	\enspace.
\]

This completes the proof of the induction base. Assume the claim holds for $i' \geq 0$, and let us prove it for $i = i' + 1$. Given an agent $a$ of level $i'$ or lower, let $j(i', a)$ be the (single) index of a player of level $i'$ for which  $a \in A_{i', j(i', a)}$. If $d_{i, j}$ is a mediator, then, by Observation~\ref{ob:median_optimal}:
\[
	\sum_{a \in A_{i, j}} \dist(\ell_{i, j}, \ell_{i - 1, j(i - 1, a)})
  \leq
  \sum_{a \in A_{i, j}} \dist(F_{OPT}, \ell_{i - 1, j(i - 1, a)})
  \enspace.
\]
Let $M_i$ be the set of mediators on level $i$, then summing the above inequality over all $j \in M_i$ gives:
\[
    \sum_{j \in M_i} \sum_{a \in A_{i, j}} \dist(\ell_{i, j}, \ell_{i - 1, j(i - 1, a)})
    \leq
    \sum_{j \in M_i} \sum_{a \in A_{i, j}} \dist(F_{OPT}, \ell_{i - 1, j(i - 1, a)})
    \enspace.
\]
Using the triangle inequality, we get:
{
\allowdisplaybreaks
\begin{align*}
		&
		\sum_{j \in M_i} \sum_{a \in A_{i, j}} \left[\dist(\ell_{i, j}, t_a) - \dist(t_a, \ell_{i - 1, j(i - 1, a)})\right]\\
		\leq{} &
    \sum_{j \in M_i} \sum_{a \in A_{i, j}} \dist(\ell_{i, j}, \ell_{i - 1, j(i - 1, a)})
		\leq
    \sum_{j \in M_i} \sum_{a \in A_{i, j}} \dist(F_{OPT}, \ell_{i - 1, j(i - 1, a)})\\
    \leq{} &
    \sum_{j \in M_i} \sum_{a \in A_{i, j}} \left[\dist(F_{OPT}, t_a) + \dist(t_a, \ell_{i - 1, j(i - 1, a)})\right]
    \enspace.
\end{align*}
}
Rearranging, gives:
\begin{align*}
	\sum_{j \in M_i} \sum_{a \in A_{i, j}} \dist(\ell_{i, j}, t_a)
	\leq{} &
	\sum_{j \in M_i} \sum_{a \in A_{i, j}} \left[\dist(F_{OPT}, t_a) + 2 \cdot \dist(t_a, \ell_{i - 1, j(i - 1, a)})\right]\\
	\leq{} &
	OPT + 2 \cdot \sum_{j \in M_i} \sum_{a \in A_{i, j}} \dist(t_a, \ell_{i - 1, j(i - 1, a)})\\
	={} &
	OPT + 2 \cdot \sum_{j = 1}^{m_{i - 1}} \sum_{a \in A_{i - 1, j}} \dist(t_a, \ell_{i - 1, j})\\
	\leq{} &
	OPT + 2 \cdot (2^{i - 1} - 1) \cdot OPT
	=
	(2^i - 1) \cdot OPT
	\enspace,
\end{align*}
where the first equality holds since both double sums sum over all the agents appearing below level $i$, and the third inequality is due to the induction hypothesis. The induction step follows since for every $j \not \in M_i$, $d_{i, j}$ is an agent, and thus, $\dist(\ell_{i, j}, t_{d_{i, j}}) =  0$, which implies:
\[
	\sum_{j = 1}^{m_i} \sum_{a \in A_{i, j}} \dist(\ell_{i, j}, t_a)
	=
	\sum_{j \in M_i} \sum_{a \in A_{i, j}} \dist(\ell_{i, j}, t_a)
	\leq
	(2^i - 1) \cdot OPT
	\enspace.
	\qedhere
\]
\end{proof}
}

By Theorem~\ref{thm:no-ex-post-ic-hier}, the direct revelation implementation of IWMM is not ex-post IC. Yet, this mechanism satisfies the following, much weaker, incentive property. Informally, every mediator, assuming that all his ascendants follow the protocol, and that the input he received from each of his direct descendants represents the true location of all agents of that direct descendant, will optimize his perceived utility by following the protocol.

\begin{definition}
A {\em single-location mechanism} is a non-direct revelation mechanism in which each player is reporting a single location to its direct ascendant.

A player of a single-location mechanism is {\em straightforward} if she is an agent and is truthful, or if he is a mediator and the location he reports is optimal with respect to his utility function assuming every agent represented by a direct descendant is located at the location reported by that direct descendent.

For a player in a single-location mechanism, we say that being straightforward is {\em naively optimal},
if being straightforward maximizes the above utility function under the assumption that his ascendants are straightforward

A single-location mechanism is {\em naively incentive compatible} if for every player being straightforward is naively optimal.
\end{definition}

We note that the term {\em naive} comes to emphasize that the players do not form Bayesian beliefs regarding the true locations of their agents and they do not try to optimize with respect to that belief, but rather (naively) assume that the reports they receive represent the true locations of the agents. Such a naive behavior is consistent with a mediator that never allows himself to harm his agents in the case when the reported locations are actually the true locations of his agents.

In Section~\ref{ssc:IWMM-IC}, we prove that IWMM is naively incentive compatible. Combining this with Proposition~\ref{prop:IWMM-comp} we derive the following theorem which summarizes the properties of IWMM.
\begin{theorem}
The Iterative Weighted Median Mechanism (IWMM) is a deterministic naively incentive compatible mechanism with a competitive ratio of $2^s-1$ for any tree metric.
\end{theorem}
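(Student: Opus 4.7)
The competitive ratio bound of $2^s-1$ is already supplied by Proposition~\ref{prop:IWMM-comp}, so the plan is to focus entirely on establishing naive incentive compatibility.

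The heart of the argument will be a \emph{propagation lemma} that lifts Observation~\ref{ob:point_moves} through the hierarchy. Concretely, I would prove: fix a non-root player $P$ at level $i$ and assume every ancestor of $P$ follows the protocol; if $P$ changes its report from $\ell$ to $\ell'$ and the facility location correspondingly changes from $F$ to $F'$, then both $F$ and $F'$ lie on the tree-path from $\ell$ to $\ell'$, with $F'$ at least as close to $\ell'$ along this path as $F$ is. The proof will proceed by induction on the levels between $P$ and the root. The base case $h = i$ is trivial. For the step, the inductive hypothesis places the report of $P$'s ancestor $P_h$, in both scenarios, on the path from $\ell$ to $\ell'$. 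Going one level up, $P_{h+1}$ recomputes the weighted median closest to $z_{h+1,\cdot}$ with only a single weighted input changing—namely the report of $P_h$, whose weight $|A_{h,\cdot}|$ is structural and therefore preserved. Observation~\ref{ob:point_moves} then applies and forces $P_{h+1}$'s new report to either coincide with its old report or to lie on the sub-path between $P_h$'s old and new reports, closer to the new one; since this sub-path is itself a sub-path of the overall path from $\ell$ to $\ell'$, the inductive invariant is maintained. Iterating up to the root yields the lemma for the final facility location.

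Given this lemma, the two incentive properties will follow by mimicking the WMM arguments of Section~\ref{ssc:wmm}. For an agent $a$ with true location $t_a$, any deviation to $t'_a$ either leaves $F$ unchanged or moves it along the path from $t_a$ to $t'_a$ in the direction of $t'_a$; either way $\dist(F, t_a)$ can only weakly increase, so truth-telling is naively optimal. For a mediator $d_{i,j}$, the naive cost
\[
c(F) \;=\; \sum_{d_{i-1,j_k}\in C_{i,j}} |A_{i-1,j_k}|\cdot\dist\!\bigl(F,\ell_{i-1,j_k}\bigr)
\]
is, by Observation~\ref{ob:median_optimal}, minimized precisely at the weighted median $\ell_{i,j}$ prescribed by the mechanism. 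Parameterizing the path from $\ell_{i,j}$ to any alternative $\ell'_{i,j}$ as $p_x$, the function $x\mapsto c(p_x)$ is a sum of convex piecewise-linear one-dimensional functions, so its right-derivative at $x=0$ is non-negative and non-decreasing in $x$. By the propagation lemma, $F$ and $F'$ sit on this path with $F'$ at least as far along as $F$, giving $c(F')\ge c(F)$.

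The main obstacle I anticipate is not conceptual but bookkeeping: at each inductive step I must verify that Observation~\ref{ob:point_moves} genuinely applies—only one weighted input changes, and its weight is preserved—and track which points lie on which sub-paths of the (potentially long) overall path from $\ell$ to $\ell'$. Once those invariants are written out carefully, the incentive arguments reduce to the one-dimensional convexity argument already used for WMM, and the theorem follows by combining naive incentive compatibility with Proposition~\ref{prop:IWMM-comp}.
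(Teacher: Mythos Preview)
Your proposal is correct and follows essentially the same route as the paper: the paper proves exactly your propagation lemma (their Lemma~\ref{le:change_multilevel}) by the same level-by-level induction invoking Observation~\ref{ob:point_moves} at each step, and then derives naive incentive compatibility for an arbitrary player via the same one-dimensional convexity argument along the path from $\ell_{i,j}$ to $\hat{\ell}_{i,j}$, combining this with Proposition~\ref{prop:IWMM-comp} for the competitive ratio. The only cosmetic difference is that the paper treats agents and mediators in a single unified argument rather than splitting into two cases as you do.
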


Note that although our incentive property is extremely weak, the competitive ratio of IMWW degrades exponentially in $s$. Can such bad performance be avoided? We conclude this section with a lower bound showing that up to constant factors, it cannot.

\begin{theorem} \label{th:hardness_hierarchy}
Fix any constant $\ee > 0$ and tree depth $s\geq 3$.
Then, no mechanism (possibly randomized)
that is naively incentive compatible has a competitive ratio of $2^{s-2} - 1-\ee$ even for line metrics.
\end{theorem}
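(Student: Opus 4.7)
The plan is to construct, for each $s \geq 3$, a pair of inputs $(J_1^s, J_2^s)$ on a line-metric tree of depth $s$ such that (a) under straightforward play, every internal mediator issues the same report in $J_1^s$ as in $J_2^s$, and (b) the optima of $J_1^s$ and $J_2^s$ lie far apart in a quantifiable way. Property (a) forces any naively IC mechanism to produce the same (distribution over) facility location on both inputs, because the center sees the same reports in both cases under straightforward play and the solution concept evaluates the mechanism only at such play. Property (b) then forces a ratio of at least $r_s = 2^{s-2}-1-\epsilon$ on at least one of the two inputs. Randomization is handled exactly as in the proof of Theorem~\ref{th:lower-bound-rand-mech}: by Lemma~\ref{le:single_better_than_distribution}, once the two expected outputs are equal, the worst-of-two expected social costs is at least the worst-of-two costs of placing the facility deterministically at a single point, so the bound reduces to a deterministic calculation.

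I would build the pair by induction on $s$. The base case $s=3$ gives $r_3 = 1$, which is trivial. For the inductive step, take two copies of the depth-$(s-1)$ pair $(J_1^{s-1}, J_2^{s-1})$, one scaled into $[0,1/2]$ and one reflected into $[1/2,1]$, and join them under a fresh top mediator $M$ below a new center, so that the longest root-to-leaf path grows by one. To maintain property (a), I cross the two sub-instances between $J_1^s$ and $J_2^s$: the left half is $J_1^{s-1}$ in $J_1^s$ but $J_2^{s-1}$ in $J_2^s$, and symmetrically on the right. Agent counts under each sub-mediator are tuned (together with the fixed tie-breaking rule for weighted medians) so that $M$'s straightforward report is identical across $J_1^s$ and $J_2^s$, and inductively every report below $M$ already agrees by (a) for the sub-pair.

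The cost analysis exploits that, by construction, the optima of $J_1^s$ and $J_2^s$ sit on opposite sides of $1/2$. Since any naively IC mechanism must commit to a single expected location for both, its cost on the "wrong" input decomposes into two pieces: the inductive hardness contribution inside the correctly-covered half, giving a factor $r_{s-1}$ times that half's optimum; and the unavoidable cost of covering the entire far half, giving an additive contribution of $+1$ in the normalized ratio. Combining yields exactly the recurrence $r_s = 2\,r_{s-1} + 1$, which telescopes from $r_3 = 1$ to $r_s = 2^{s-2}-1$ with the $\epsilon$ slack absorbed by using a sufficiently large integer parameter $r$ in the sub-example (as in Example~\ref{exmple:lower-bound} and the proof of Theorem~\ref{th:lower-bound-rand-mech}).

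The main technical obstacle will be carrying out the recursive construction so that property (a) is literally preserved: one must line up the weights under each sub-mediator so that every weighted-median computation along the tree returns the same point in $J_1^s$ as in $J_2^s$, including consistent behavior of the report-independent tie-breaking rule, despite the reflection and rescaling used to place the right copy into $[1/2, 1]$. A secondary subtlety is that the two sub-instances being "crossed" between $J_1^s$ and $J_2^s$ must have equal agent counts and symmetric optima, which requires the inductive pair to be stated with enough structural information (not only the ratio $r_{s-1}$ but also symmetric optima at opposite endpoints of the rescaled interval) so that the recursion can actually be carried out.
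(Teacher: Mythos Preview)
Your high-level plan---force identical outputs on two inputs by arranging identical straightforward reports to the center---is right and matches the paper. But the recursive construction is far more complicated than needed, and the cost analysis as written does not go through.

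The paper's construction is not recursive. It builds a single pair directly: below the center sits one mediator at level $s-1$, which splits into a complete binary hierarchy down to $2^{s-2}$ level-$1$ mediators; each of these has $r$ agents except $d_{1,1}$, which has $r+1$. In $I_1$ every agent is at $0$; in $I_2$ every agent is at $1$ except the $r+1$ agents of $d_{1,1}$, still at $0$. A short induction on the \emph{level} $i$ (not on the depth of a recursive construction) shows that $d_{i,1}$ reports $0$ in both instances: it receives a report of $0$ carrying weight $r\cdot 2^{i-2}+1$ and another report carrying weight $r\cdot 2^{i-2}$, so the unique optimum for its naive objective is $0$. The center therefore sees the single report $0$ on both inputs, and since $I_1$ forces the facility to $0$ with probability $1$ for any finite-ratio mechanism, the same holds on $I_2$---no appeal to Lemma~\ref{le:single_better_than_distribution} is needed. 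The ratio on $I_2$ is then $r(2^{s-2}-1)/(r+1)\to 2^{s-2}-1$.

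The genuine gap in your argument is the cost step. You write that the cost on the ``wrong'' input includes an ``inductive hardness contribution inside the correctly-covered half, giving a factor $r_{s-1}$ times that half's optimum.'' But the induction hypothesis is a statement about the competitive ratio of depth-$(s-1)$ naively IC \emph{mechanisms} on standalone depth-$(s-1)$ instances; once you have pinned the depth-$s$ mechanism to a single expected location on $J_1^s$ and $J_2^s$, what remains is a direct cost computation at that point, and the inductive ratio bound simply does not apply (the restriction of the depth-$s$ mechanism to one half is not a depth-$(s-1)$ naively IC mechanism). Your base case already signals the problem: declaring $r_3=1$ ``trivial'' gives you no concrete pair $(J_1^3,J_2^3)$ with structured optima on which to anchor the recursion, so the recurrence $r_s=2r_{s-1}+1$ has nothing to start from.
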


The proof of Theorem~\ref{th:hardness_hierarchy} is based on the following construction.

\begin{example}
\label{exmple:lower-bound-hierarchy}
Consider the metric on the interval $[0,1]$.
Fix $r$ to be an arbitrary positive integer, and fix a depth $s\geq 3$. We next describe a pair of instances. In both instances, the center has a single child mediator. Every mediator of level $i > 1$ represents $2$ mediators of level $i - 1$, and every mediator of level $1$ represents $r$ agents, except for $d_{1, 1}$ which represents $r + 1$ agents. Notice that there are in total $r \cdot 2^{s - 2} + 1$ agents in each instance of the pair.

The first instance, denoted by $I_1$, has all the agents located at point $0$. The second instance, denoted by $I_2$, has all the agents located at $1$, except for the agents of $d_{1, 1}$ which are located at $0$.
\end{example}

\begin{lemma}
Any (possibly randomized) mechanism
that is naively incentive compatible and has a finite competitive ratio must locate the facility at $0$ with probability $1$
given either $I_1$ or $I_2$.
\end{lemma}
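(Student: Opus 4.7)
The plan is to apply naive incentive compatibility in two stages: first use $I_1$ to force the mechanism to output the point mass at $0$ whenever its child reports $0$, then show by an upward induction on the hierarchy that in $I_2$ the center still receives the report $0$ under straightforward play. Since the mechanism's output depends only on the single report that reaches the center, the two induced distributions over facility locations must coincide.

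First I would handle $I_1$. Every agent is at $0$, so a trivial upward induction on the levels of the hierarchy shows that every straightforward mediator reports the weighted median of a multi-set supported only at $0$, namely $0$, and in particular the center's child reports $0$. Because the optimal social cost on $I_1$ is zero, any positive expected distance of the facility from $0$ would blow up the competitive ratio; hence the mechanism, viewed as a (possibly randomized) map from the single report the center receives to a distribution over facility locations, must send the value $0$ to the point mass at $0$.

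Second, for $I_2$ I would induct along the unique hierarchy-tree path $\pi$ from $d_{1,1}$ to the center, proving that every mediator on $\pi$ straightforwardly reports $0$ while every mediator off $\pi$ reports $1$. For the base case at level $1$, mediator $d_{1,1}$ has $r+1$ agents at $0$ and reports $0$, while every other level-$1$ mediator has $r$ agents at $1$ and reports $1$. For the inductive step at level $i \geq 2$, an off-path mediator has all descendant agents at $1$ and thus, by induction, both its children report $1$, so it reports $1$; an on-path mediator has one on-path child reporting $0$ with weight $r \cdot 2^{i-2}+1$ (its descendant-agent count) and one off-path child reporting $1$ with weight $r \cdot 2^{i-2}$. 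Since the on-path weight $r \cdot 2^{i-2}+1$ strictly exceeds half of the total weight $r \cdot 2^{i-1}+1$, the unique weighted median is $0$ and the straightforward report is $0$. Taking $i = s-1$ shows the center's unique child also reports $0$ in $I_2$. Combining with the first stage, the mechanism's output distribution on $I_2$ coincides with that on $I_1$, namely the point mass at $0$.

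I do not anticipate any serious technical obstacle; the proof amounts to careful bookkeeping of weights along $\pi$. The key feature is that the single extra agent under $d_{1,1}$ gives each on-path subtree \emph{strictly} more than half of the total weight at its parent, so the weighted median is forced to $0$ uniquely at every level, leaving no freedom to tie-breaking or to the probability distribution of a randomized mechanism.
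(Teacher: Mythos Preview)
Your proposal is correct and follows essentially the same approach as the paper's proof: use the finite competitive ratio on $I_1$ to pin down the center's behavior on the report $0$, then perform an upward induction along the hierarchy in $I_2$ showing that the strict majority carried by the $d_{1,1}$-branch forces each on-path mediator to straightforwardly report $0$, so the center again receives $0$ and must output the same distribution. The only cosmetic difference is that you explicitly track the off-path reports as $1$, whereas the paper simply notes that the off-path child carries strictly less weight and hence cannot shift the weighted median away from $0$.
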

\begin{proof}
The claim is trivial for $I_1$ since all the agents of $I_1$ are located at $0$. Notice that the center of $I_1$ has a single child mediator, and therefore, gets only a single report (which is equal to $0$). Let us prove that the center gets the same report also under $I_2$, and therefore, the center must behave in the same way for both instances.

Formally, we need to prove that $d_{s - 1, 1}$ reports $0$. We do so, by proving inductively that $d_{i, 1}$ reports $0$ for every $1 \leq i \leq s - 1$. Notice that all the agents of $d_{1, 1}$ are located at $0$, and therefore, $d_{1,1}$ must report $0$ to be straightforward. This completes the base of the induction. Assume $d_{i - 1, 1}$ reports $0$, and let us prove that so does $d_{i, 1}$ for $2 \leq i \leq s - 1$. Notice that $d_{i, 1}$ gets only two reports. One report from $d_{i - 1, 1}$ of value $0$ which represents $r \cdot 2^{i - 2} + 1$ agents, and another report which represents only $r \cdot 2^{i - 2}$ agents. Clearly, the optimal location for $d_{i, 1}$'s utility function (with respect the reports it gets) is $0$, because this is the reported location of the majority of the agents. Therefore, $d_{i, 1}$ must report $0$ to be straightforward.
\end{proof}

We are now ready to prove Theorem~\ref{th:hardness_hierarchy}.

\begin{proof}[Proof of Theorem~\ref{th:hardness_hierarchy}]
Fix a given naively IC mechanism with a bounded competitive ratio. The mechanism must locate the facility at $0$ with probability $1$ given $I_2$. However, there are only $r+1$ agents located at $0$ while there are $r \cdot (2^{s - 2} - 1)$ agents located at $1$. Thus, the mechanism has a cost of: $r \cdot (2^{s - 2} - 1)$, while the optimal facility location has a cost of no more than $r + 1$. The competitive ratio of the mechanism is, therefore, no better than:
\[
	\frac{r \cdot (2^{s-2} - 1)}{r+1}
	\enspace,
\]
which tends to $2^{s-2}-1$ as $r$ goes to infinity.
\end{proof}

\subsection{Proof: IWMM is Naively IC} \label{ssc:IWMM-IC}

In this section we prove that IWMM is Naively IC. This proof tends to follow the same ideas used in the analysis of WMM, but there are some subtle differences due to the hierarchical natural IWMM and the different solution concepts of the two algorithms (more specifically, WMM is a direct revelation mechanism, while IWMM is a single location mechanism). Let us start with a generalization of Observation~\ref{ob:point_moves} for hierarchical settings.

\begin{lemma} \label{le:change_multilevel}
Fix two mediators $d_{i, j}$ and $d_{i', j'}$ for which $d_{i', j'}$ is a descendent of $d_{i, j}$ in the tree. Assume $\ell_{i', j'}$ is changed to be $\hat{\ell}_{i', j'}$ by some change in the strategies of $d_{i', j'}$ and its descendents. Let $\hat{\ell}_{i, j}$ be the new location of $\ell_{i, j}$ after the change. If every direct ascendant of $d_{i', j'}$ is straightforward and $\ell_{i, j} \neq \hat{\ell}_{i,j}$, then $\ell_{i, j}$ and $\hat{\ell}_{i, j}$ are both on the path between $\ell_{i', j'}$ and $\hat{\ell}_{i', j'}$, and $\ell_{i', j'}$ is closer to $\ell_{i, j}$ than to $\hat{\ell}_{i, j}$.
\end{lemma}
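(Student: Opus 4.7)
The plan is to induct on the length of the chain of direct ascendants from $d_{i',j'}$ to $d_{i,j}$, using Observation~\ref{ob:point_moves} at each level as the sole structural tool. Specifically, let $v_0 = d_{i',j'}, v_1, v_2, \ldots, v_m = d_{i,j}$ be the direct ascendants of $d_{i',j'}$ up to $d_{i,j}$, and for each $k$, let $\ell_{v_k}$ and $\hat{\ell}_{v_k}$ denote the report of $v_k$ before and after the change, respectively. The inductive claim to prove is: if $\ell_{v_k} \neq \hat{\ell}_{v_k}$, then both $\ell_{v_k}$ and $\hat{\ell}_{v_k}$ lie on the path between $\ell_{v_0}$ and $\hat{\ell}_{v_0}$, and traversing this path starting from $\ell_{v_0}$ one encounters $\ell_{v_k}$ strictly before $\hat{\ell}_{v_k}$. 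This claim with $k=m$ is exactly the lemma.

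The base case $k=0$ is immediate since $\ell_{v_0}$ and $\hat{\ell}_{v_0}$ are each trivially on the path between themselves and are visited in the correct order. For the inductive step, suppose the claim holds for $k-1$. If $\ell_{v_{k-1}} = \hat{\ell}_{v_{k-1}}$, then since $v_k$ is straightforward and the only report that might have changed among $v_k$'s children is the one coming from $v_{k-1}$ (the change at $v_0$ can propagate to $v_k$ only through the chain), the multi-set of weighted reports received by $v_k$ is unchanged, and so $\ell_{v_k} = \hat{\ell}_{v_k}$, making the hypothesis of the claim vacuous. Otherwise, $\ell_{v_{k-1}} \neq \hat{\ell}_{v_{k-1}}$, and the induction hypothesis places both $\ell_{v_{k-1}}$ and $\hat{\ell}_{v_{k-1}}$ on the path between $\ell_{v_0}$ and $\hat{\ell}_{v_0}$, with $\ell_{v_{k-1}}$ encountered first when traveling from $\ell_{v_0}$.

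Now I invoke Observation~\ref{ob:point_moves} at $v_k$. The multi-set of weighted reports to $v_k$ before the change differs from that after the change only in the single weighted point coming from $v_{k-1}$: its location changes from $\ell_{v_{k-1}}$ to $\hat{\ell}_{v_{k-1}}$ while its weight $|A_{v_{k-1}}|$ is determined by the fixed tree structure and hence unchanged. Since $v_k$ is straightforward, $\ell_{v_k}$ and $\hat{\ell}_{v_k}$ are the weighted medians of these two multi-sets closest to $z_{v_k}$. Assuming $\ell_{v_k} \neq \hat{\ell}_{v_k}$, the observation yields that both $\ell_{v_k}$ and $\hat{\ell}_{v_k}$ lie on the path between $\ell_{v_{k-1}}$ and $\hat{\ell}_{v_{k-1}}$, with $\ell_{v_k}$ closer to $\ell_{v_{k-1}}$. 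Combining with the induction hypothesis, the path from $\ell_{v_0}$ to $\hat{\ell}_{v_0}$ passes through $\ell_{v_{k-1}}$, $\ell_{v_k}$, $\hat{\ell}_{v_k}$, and $\hat{\ell}_{v_{k-1}}$ in that order, which completes the inductive step.

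The proof is essentially routine bookkeeping once Observation~\ref{ob:point_moves} is in hand, and the only subtlety is verifying that a single-report change propagates one level at a time (so that the observation's single-point-difference hypothesis is satisfied at each step) and that the straightforward behavior of every intermediate mediator is exactly what ensures this. The main conceptual point I expect to double-check is the order in which the four points $\ell_{v_{k-1}}, \ell_{v_k}, \hat{\ell}_{v_k}, \hat{\ell}_{v_{k-1}}$ appear on the path from $\ell_{v_0}$ to $\hat{\ell}_{v_0}$, but this is immediate from the tree structure since any two points on a tree path are connected by a sub-path of that path.
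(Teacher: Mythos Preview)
Your proof is correct and follows essentially the same approach as the paper: induction along the chain of mediators from $d_{i',j'}$ to $d_{i,j}$, applying Observation~\ref{ob:point_moves} at each level and noting that if the intermediate median is unchanged then so is the next one. The only difference is cosmetic notation (you index by chain position $k$, the paper by the level difference $i-i'$), and you are slightly more explicit than the paper in verifying that the single-point-difference hypothesis of Observation~\ref{ob:point_moves} (same weight, one location changed) is satisfied.
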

\begin{proof}
We prove the lemma by induction on $i - i'$. For $i - i' = 0$ the lemma holds trivially, so
let us now assume the claim holds for $i - i' < h$, and let us prove it for $i - i' = h$. We also assume every ascendor of $d_{i', j'}$ is straightforward and $\ell_{i, j} \neq \hat{\ell}_{i,j}$ (otherwise, there is nothing to prove).

Let $d_{i-1, j''}$ be the (single) direct descendent of $d_{i, j}$ which is an ascendant of $d_{i', j'}$. Let $\hat{\ell}_{i-1, j''}$ be the new value of $\ell_{i-1, j''}$ after the change in $\ell_{i', j'}$. If $\ell_{i-1, j''} = \hat{\ell}_{i-1, j''}$, then we get $\ell_{i,j} = \hat{\ell}_{i, j}$ which contradicts our previous assumptions. Thus, we must have $\ell_{i-1, j''} \neq \hat{\ell}_{i-1, j''}$. By the induction hypothesis, this implies that $\ell_{i-1, j''}$ and $\hat{\ell}_{i-1, j''}$ are both on the path between $\ell_{i', j'}$ and $\hat{\ell}_{i', j'}$, and $\ell_{i - 1, j''}$ is the one closer to $\ell_{i', j'}$ among the two.

By Observation~\ref{ob:point_moves}, $\ell_{i, j}$ and $\hat{\ell}_{i, j}$ are both on the path between $\ell_{i - 1, j''}$ and $\hat{\ell}_{i - 1, j''}$, and $\ell_{i, j}$ is closer to $\ell_{i - 1, j''}$ than $\hat{\ell}_{i, j}$. Combining this with the previous observation, we get that the path from $\ell_{i', j'}$ to $\hat{\ell}_{i', j'}$ goes through the points $\ell_{i-1, j''}, \ell_{i, j}, \hat{\ell}_{i, j}$ and $\hat{\ell}_{i-1, j''}$, in that order, which completes the proof of the induction step.
\end{proof}

Using the above lemma, we can now prove that IWMM is naively IC.

\begin{lemma}
The Iterative Weighted Median Mechanism is naively incentive compatible.
\end{lemma}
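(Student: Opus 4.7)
The plan is to fix an arbitrary player $p$, assume every ancestor of $p$ in the hierarchy tree is straightforward, and show that any deviation by $p$ from its straightforward report $\ell_p^*$ to some $\hat{\ell}_p$ cannot strictly reduce $p$'s naive cost. Since each ancestor propagates reports deterministically via the weighted-median rule, $p$'s control over the final facility location $F = \ell_{s,1}$ is exercised entirely through $\ell_p$. If $\hat{\ell}_p = \ell_p^*$, or if the resulting facility location $\hat{F}$ equals $F^*$ (the facility location under straightforward play), then there is nothing to prove, so I may assume $\ell_p^* \neq \hat{\ell}_p$ and $F^* \neq \hat{F}$. I would then apply Lemma~\ref{le:change_multilevel} with $d_{i',j'} = p$ and $d_{i,j}$ equal to the center (whose top-level computation follows exactly the same weighted-median rule used by any mediator), using that every strict ancestor of $p$ is straightforward. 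The conclusion is that $F^*$ and $\hat{F}$ both lie on the tree path from $\ell_p^*$ to $\hat{\ell}_p$, with $\ell_p^*$ closer to $F^*$ than to $\hat{F}$; equivalently, walking from $\ell_p^*$ to $\hat{\ell}_p$ along the path one meets $F^*$ before $\hat{F}$.

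From this path structure I would conclude in two cases. If $p$ is an agent, then $\ell_p^* = t_p$ is $p$'s true location and $p$'s naive cost at facility $F$ is simply $\dist(F, t_p)$; the path structure gives $\dist(t_p, \hat{F}) = \dist(t_p, F^*) + \dist(F^*, \hat{F}) \geq \dist(t_p, F^*)$, so the deviation is non-beneficial. If $p$ is a mediator, writing $C_p$ for $p$'s direct descendants, its naive cost at facility $F$ is $\sum_{c \in C_p} |A_c| \cdot \dist(F, \ell_c)$. By construction of IWMM, $\ell_p^*$ is a weighted median of $\{(\ell_c, |A_c|) : c \in C_p\}$, so by Observation~\ref{ob:median_optimal} it is a global minimizer of this cost. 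Parameterizing the path from $\ell_p^*$ to $\hat{\ell}_p$ by $x \in [0,1]$ and letting $c(x)$ denote the induced naive cost, each summand $\dist(\cdot, \ell_c)$ is a convex, piecewise-linear function of $x$, hence so is $c$; combined with the minimum being at $x = 0$, this forces $c$ to be non-decreasing on $[0,1]$. Since $F^*$ corresponds to a smaller parameter than $\hat{F}$, we obtain $c(F^*) \leq c(\hat{F})$, completing the argument.

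The main obstacle I anticipate is the clean application of Lemma~\ref{le:change_multilevel} in the agent case (the lemma is formally stated only for pairs of mediators), and the bookkeeping required to make the weighted-median-along-a-tree-path convexity argument precise in the mediator case. Both issues are mild: the inductive step in the proof of Lemma~\ref{le:change_multilevel} uses only Observation~\ref{ob:point_moves} at strict ancestors of $d_{i',j'}$ and is indifferent to whether $d_{i',j'}$ itself is a mediator or a leaf, while the convexity-along-a-path argument is essentially the same as the one used in the mediator-side IC proof of WMM in Section~\ref{ssc:wmm}. Overall, the proof is a direct hierarchical lifting of the WMM mediator-side IC argument, with Lemma~\ref{le:change_multilevel} playing the role that Observation~\ref{ob:point_moves} plays in the single-layer setting.
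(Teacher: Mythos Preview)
Your proposal is correct and follows essentially the same approach as the paper: invoke Lemma~\ref{le:change_multilevel} to place $F^*$ and $\hat{F}$ on the path from $\ell_p^*$ to $\hat{\ell}_p$ (with $F^*$ closer to $\ell_p^*$), then use the convexity-along-a-path argument together with optimality of $\ell_p^*$ to conclude. The only cosmetic difference is that the paper treats agents and mediators uniformly with a single convexity argument (an agent's naive cost $\dist(\cdot,t_p)$ is trivially convex along the path and minimized at $\ell_p^* = t_p$), whereas you split into two cases; your explicit acknowledgment of the agent-case applicability of Lemma~\ref{le:change_multilevel} is in fact slightly more careful than the paper's own proof.
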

\begin{proof}

Consider an arbitrary player $d_{i, j}$. We should prove that being straightforward is naively incentive compatible for $d_{i, j}$. In the rest of this proof let $\ell_{i, j}$ denote the report of $d_{i, j}$ assuming he follows the mechanism. Let $\hat{\ell}_{i, j} \neq \ell_{i, j}$ be an arbitrary possible deviation of $d_{i, j}$.
Our objective is to prove that assuming all of $d_{i, j}$ ascendants are straightforward, then reporting $\hat{\ell}_{i, j}$ is no better for $d_{i, j}$ than being straightforward, with respect to the utility function induced by the reports of $d_{i, j}$'s direct descendents.
Let $\ell_{s, 1}$ and $\hat{\ell}_{s, 1}$ be the locations picked by the center given that $d_{i, j}$ reports $\ell_{i, j}$ and $\hat{\ell}_{i, j}$, respectively. If $\ell_{s, 1} = \hat{\ell}_{s, 1}$ then the deviation is clearly non-beneficial. Otherwise, by Lemma~\ref{le:change_multilevel}, the path from $\ell_{i, j}$ to $\hat{\ell}_{i, j}$ goes through $\ell_{s, 1}$ and $\hat{\ell}_{s, 1}$ in that order.

For every $x \in [0, 1]$, let $p_x$ be the point along the path between $\ell_{i, j}$ and $\hat{\ell}_{i, j}$ obeying $\dist(\ell_{i, j}, p_x) = x \cdot \dist(\ell_{i, j}, \hat{\ell}_{i, j})$, and let $c_x$ be the value of the $d_{i, j}$'s objective function (with respect to the reports he gets from his direct descendents) if the facility is located at $p_x$. Let $p$ be an arbitrary point, and let us observe the change in $\dist(p_x, p)$ as $x$ increases. Clearly, $\dist(p_x, p)$ initially decreases at a constant rate and then starts to increase again in a constant rate (possibly the decrease part or increase part is of zero length).
Hence, the derivative of $c_x$ is a non-decreasing function of $x$ (except for a finite number of $x$ values for which it might not be defined). We know that for $x = 0$, $p_x$ is an optimal point for $d_{i, j}$. Hence, $c_x$ is minimized for $x = 0$. Thus, the derivative of $c_x$ is non-negative for every $x > 0$, whenever it is defined. Let $y$ and $y'$ be the values of $x$ for which $\ell_{s, 1} = p_y$ and $\hat{\ell}_{s, 1} = p_{y'}$. Then, we have $y < y'$, and therefore, $c_y \leq c_{y'}$, \ie, locating the facility at $\hat{\ell}_{s, 1}$ does not decrease the objective function of $d_{i, j}$ in comparison to locating it at $\ell_{s, 1}$. Hence, deviating to $\hat{\ell}_{i, j}$ is non-beneficial from $d_{i, j}$'s perspective.
\end{proof}
\section{Conclusion}

We studied the impact of strategic mediators on the competitive ratio of IC mechanisms in a facility location setting. Our results show that a single layer of mediation cause a moderate degradation in the competitive ratio, which becomes much worse as additional layers of mediation are introduced. We also showed that randomized mechanisms perform better than deterministic ones. 

Strategic mediators appear in many real world scenarios, and we believe it is important to study the implications of their behavior in various settings. For example, in display advertising, one common practice is for a mediator to buy advertisement space on a web page and split it between multiple advertisers he represents. Assume the mediator gets bids from potential advertisers, and based on these bids decides how to bid for the space (in the ad exchange auction). If he wins the space he also need to decide how to split the newly bought space among his advertisers. The mediator needs a strategy that will incentivize his advertisers to be truthful, which is not a trivial task even if the ad exchange uses a second price auction. Studying the effect on the social welfare and revenue of such mediators is an interesting open problem.

\bibliographystyle{plain}
\bibliography{bib}

\appendix
\section*{APPENDIX}
\section{Missing Proofs of Section~\ref{sec:model}} \label{ap:missing_model_proofs}

\subsection{Proof of Observation~\ref{ob:median_optimal}}

In this subsection we prove the following observation.

\begin{repobservation}{ob:median_optimal}
For every non empty finite multi-set $S$ of weighted points, there is at least one weighted median. Moreover, a point $p$ is a weighted median of $S$ if and only if locating the facility at $p$ minimizes the weighted total cost of a set of agents located at the points of $S$ (\ie, the sum $\sum_{\tilde{q} \in S} w(\tilde{q}) \cdot f(\tilde{q}, S) \cdot \dist(p, \tilde{q})$).
\end{repobservation}

The observation follows immediately from two lemmata. The first lemma shows that any multi-set $S$ of weighted points always have a weighted median. This lemma is a simple extension of a well-known result of Jordan~\cite{Jordan1869}.
\begin{lemma}
For every non empty finite multi-set $S$ of weighted points there is at least one weighted median.
\end{lemma}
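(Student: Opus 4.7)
My plan is to establish existence via a minimization argument on the tree. Define the weighted total cost function
\[
g(p) = \sum_{\tilde{q} \in S} w(\tilde{q}) \cdot f(\tilde{q}, S) \cdot \dist(p, \tilde{q}).
\]
Viewed as a metric space, the tree $T$ is a finite union of bounded intervals glued at the vertices, so it is compact; and $g$, being a finite sum of continuous weighted distance functions, is continuous. Hence $g$ attains its minimum at some point $p^* \in T$. The strategy is to show that any such minimizer is a weighted median of $S$, which will yield the desired existence result.

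To this end, I will assume towards a contradiction that $p^*$ is not a weighted median. Then there exists a sub-multi-set $S' \subseteq S \setminus S_{p^*}$ with $w(S') > w(S)/2$ such that the path between any two weighted points of $S'$ avoids $p^*$. Since $T$ is a tree, this forces all weighted points of $S'$ to lie in a single connected component $C$ of $T \setminus \{p^*\}$. I will then consider the effect of perturbing $p^*$ a small distance $\epsilon > 0$ into $C$: along an incident edge leading into $C$ if $p^*$ is a vertex, or along the current edge in the direction of $C$ if $p^*$ is an interior point of an edge. Each weighted point of $S'$ moves closer by exactly $\epsilon$, while each weighted point outside $C$ (including those in $S_{p^*}$) moves farther by at most $\epsilon$. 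The net change in $g$ is therefore at most
\[
\epsilon \cdot \bigl(w(S) - w(S')\bigr) - \epsilon \cdot w(S') = \epsilon \cdot \bigl(w(S) - 2 w(S')\bigr) < 0,
\]
contradicting the minimality of $p^*$.

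I do not foresee any substantial obstacle. The only mild subtlety is verifying that a valid perturbation direction exists when $p^*$ is a vertex, but this is immediate: since $S' \neq \varnothing$, the component $C$ contains at least one weighted point distinct from $p^*$, and so at least one edge incident to $p^*$ must enter $C$.
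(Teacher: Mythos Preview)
Your proof is correct and takes a genuinely different route from the paper's. The paper argues combinatorially: it restricts attention to the finite set $R$ consisting of the tree vertices together with the locations appearing in $S$, picks any $r \in R$, and if $r$ is not a median, identifies the unique heavy part of the induced partition and steps to the next point $r' \in R$ in that direction. A potential (the total distance from the current candidate to the points of the heavy part) strictly decreases at each step and takes only finitely many values, so the walk terminates at a median. You instead invoke compactness of the metric tree and continuity of the weighted cost $g$ to obtain a global minimizer $p^*$ directly, and then use a local perturbation into the heavy component to certify that any minimizer must be a median.

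Each approach buys something. Your argument is shorter and, as a byproduct, already proves one direction of the second half of Observation~\ref{ob:median_optimal} (every minimizer of $g$ is a weighted median), whereas the paper establishes existence first and only afterwards characterises minimizers. The paper's argument, on the other hand, is purely discrete---it never appeals to topological properties of the ambient space---and it makes explicit that a median can always be found inside the finite set $R$, a fact that is algorithmically convenient for the mechanisms developed later. One small point worth making explicit in your write-up: the claim that every point of $S'$ moves closer by \emph{exactly} $\epsilon$ requires $\epsilon$ to be smaller than both the length of the edge segment from $p^*$ into $C$ and the minimum distance from $p^*$ to any point of $S'$; since $S'$ is finite and disjoint from $S_{p^*}$, such an $\epsilon$ exists.
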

\begin{proof}
Given a point $p$, let $\PP_p$ denote the partition of $S$ into maximal sub multi-sets such that every path between two weighted points of different sub multi-sets of $\PP_p$ goes through $p$. Observe that a point $p$ is a weighted median of $S$ if and only if all the sub multi-sets of $\PP_p$ are of weight at most $w(S) / 2$. Let $R = \{p ~|~ \exists_w (p, w) \in S\} \cup V$. Pick any point $r$ in $R$. If $r$ is a weighted median, we are done. If not, there is exactly one sub multi-set $P \in \PP_r$ with a weight larger than $w(S)/2$. By the definition of $R$ there must be a point
$r'\in R$ ($r' \neq r$) such that the path between every weighted point of $P$ and $r$ goes through $r'$. We claim that if $r'$ is not a weighted median, then there must be a sub multi-set $P' \in \PP_{r'}$ that is a subset of $P \setminus S_r$ and has weight larger than $w(S)/2$.
If this is so, we can replace $r$ by $r'$ and $P$ by $P'$ and continue in the same way till we find a weighted median (the sum $\sum_{\tilde{q} \in P} \dist(\tilde{q}, r)$ strictly decreases with each iteration, and can take only a finite set of values).

Next we show that if $r'$ is not a weighted median, then there must be a sub multi-set $P' \in \PP_{r'}$ that is a subset of $P$ and has weight larger than $w(S)/2$. Since $r'$ is not a weighted median, there exists a sub multi-set $P'' \in \PP_{r'}$ with weight over $w(S)/2$. If $P'' \setminus P \neq \varnothing$, then by the definition of $r'$ it holds that
 $P'' \cap P = \varnothing$. Therefore, $w(P'') \leq w(S) - w(P) < w(S) - w(S)/2 = w(S) / 2$, which is a contradiction. Thus, we must have $P'' \subseteq P$.
\end{proof}

The next lemma proves the second part of Observation~\ref{ob:median_optimal}, \ie, that the weighted median points are exactly the optimal facility locations for minimizing the weighted total cost of a set of agents located at the weighted points of $S$.

\begin{lemma}
Given a non empty finite multi-set $S$ of weighted points, a point $p$ is a weighted median of $S$ if and only if locating the facility at $p$ minimizes the weighted total cost of a set of agents located at the weighted points of $S$ (\ie, the sum $\sum_{\tilde{q} \in S} w(\tilde{q}) \cdot f(\tilde{q}, S) \cdot \dist(p, \tilde{q})$).
\end{lemma}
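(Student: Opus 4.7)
Let $C(p) := \sum_{\tilde{q} \in S} w(\tilde{q}) \cdot f(\tilde{q}, S) \cdot \dist(p, \tilde{q})$ denote the weighted social cost of placing the facility at $p$. The plan is to characterize the minimizers of $C$ via a simple directional-derivative calculation and then use convexity of $C$ along paths in $T$ to upgrade ``local minimum'' to ``global minimum.''

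First, I would set up the branch decomposition. For any point $p$, removing $p$ from $T$ breaks the tree into finitely many connected components (``branches''); call them $B_1, \dots, B_k$, and for each $j$ let $W_j$ be the total weight contributed to branch $B_j$ by the points of $S \setminus S_p$ (that is, $W_j = \sum_{\tilde{q} \in S \setminus S_p,\ \tilde{q} \in B_j} w(\tilde{q}) \cdot f(\tilde{q}, S)$). A sub multi-set $S' \subseteq S \setminus S_p$ whose points are pairwise connected without going through $p$ must lie entirely within a single branch, so $m_p = \max_j W_j$. In particular, $p$ is a weighted median of $S$ iff $W_j \le w(S)/2$ for every $j$.

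Next I would compute the one-sided directional derivative of $C$ at $p$ in the direction of each branch $B_j$. Moving $p$ by a small $\varepsilon > 0$ into $B_j$ decreases the distance to every weighted point in $B_j$ by exactly $\varepsilon$, increases the distance to every weighted point in $B_i$ with $i \ne j$ by exactly $\varepsilon$, and increases the distance to every weighted point of $S_p$ (which lies at $p$) by $\varepsilon$ as well. Hence the change in $C$ is
\[
    \varepsilon \cdot \bigl[(w(S_p) + \textstyle\sum_{i \ne j} W_i) - W_j\bigr]
    \;=\;
    \varepsilon \cdot (w(S) - 2W_j)\enspace,
\]
so the right derivative of $C$ in the direction of $B_j$ equals $w(S) - 2W_j$. (If $p$ lies in the interior of an edge then there are exactly two such directions; if $p$ is a vertex there are $\deg(p)$ of them. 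If $p$ is not in $V$ and not a location of any point of $S$, the branch decomposition and the formula are still well-defined.)

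Combining the two, the right derivative of $C$ at $p$ is non-negative in every direction iff $W_j \le w(S)/2$ for all $j$, iff $p$ is a weighted median of $S$. Finally I would invoke convexity: for any fixed weighted point $\tilde{q}$, the function $p \mapsto \dist(p, \tilde{q})$ is piecewise linear with slope in $\{-1, +1\}$ along every path in $T$, with the slope non-decreasing as one moves away from $\tilde{q}$; hence it is convex along every path, and so is $C$ as a non-negative combination of such functions. Convex functions on a path are minimized precisely where the right derivative is non-negative, and any point in $T$ can be reached from $p$ by a path, so ``all right derivatives at $p$ are non-negative'' is equivalent to ``$p$ is a global minimizer of $C$.'' Chaining the equivalences gives the observation.

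The main obstacle I anticipate is purely a bookkeeping one: justifying the derivative computation uniformly across the cases where $p$ is a vertex, an interior point of an edge, and whether or not $p$ coincides with locations of points of $S$ (so that $S_p$ may or may not be empty). The key observation that the contribution of $S_p$ to every directional derivative is $+w(S_p)$, regardless of direction, is what makes the single formula $w(S) - 2W_j$ valid in every case.
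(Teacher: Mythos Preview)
Your proof is correct and takes a genuinely different route from the paper's. The paper argues more combinatorially: it fixes a weighted median $p$ (whose existence is the preceding lemma), picks an arbitrary point $q$, and compares $C(p)$ to $C(q)$ directly by partitioning $S$ into the points whose path to $q$ avoids $p$ (call this $S'$) and the rest. Because $w(S') \le w(S)/2$, moving the facility from $q$ to $p$ changes the cost by at most $(2w(S')-w(S))\cdot\dist(p,q) \le 0$, giving the ``median $\Rightarrow$ minimizer'' direction. For the converse, the paper assumes $q$ is not a median, exhibits a heavy set $S''$ with $w(S'')>w(S)/2$, and does a short case analysis (on whether $S''\cap S'=\varnothing$) to make the above inequality strict.

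Your approach replaces this global comparison by a local one plus convexity: you compute the one-sided derivative of $C$ at $p$ in each branch direction as $w(S)-2W_j$, identify ``all directional derivatives non-negative'' with the median condition $\max_j W_j \le w(S)/2$, and then use that $C$ is convex along every path (since each $\dist(\cdot,\tilde{q})$ restricted to a path is $|t-t^*|+\text{const}$) to promote local optimality to global optimality. This is cleaner in that it handles both directions of the ``iff'' symmetrically and avoids the case split in the paper's strictness argument; it also does not need to invoke the existence of a median separately. The paper's argument, on the other hand, is slightly more elementary in that it never speaks of derivatives or convexity, just the triangle inequality on the tree.
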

\begin{proof}
Consider two points $p$ and $q$, where $p$ is a median of $S$ (such a point exists due to the previous lemma). Assume the facility is located at $q$, and let us analyze the effect on the cost paid by the agents of $S$ if the facility is moved to $p$. Let $S' \subseteq S$ be the multi-set of weighted points of $S$ that are connected to $q$ via a path that does not go through $p$. By the definition a median, $w(S') \leq m_p \leq w(S) / 2$.

The path in $T$ from every weighted point of $S \setminus S'$ to $q$ goes through $p$. Hence, the move of the facility from $q$ to $p$ decreases the cost of agents located at these weighted points by $\dist(p, q)$. On the other hand, the move can increase the cost of an agent located at any other weighted point by at most $\dist(p, q)$. Hence, the change in the weighted total cost following the move of the facility is at most:
\begin{align} \label{eq:median_optimal_facility_move_change}
    w(S') \cdot \dist(p, q) - w(S \setminus S') \cdot \dist(p, q)
    ={} &
    (2w(S') - w(S)) \cdot \dist(p, q) \\ \nonumber
    \leq{} &
    (2 \cdot (w(S) / 2) - w(S)) \cdot \dist(p, q)
    =
    0
    \enspace.
\end{align}

Thus, we proved that locating the facility at a median point $p$ of $S$ induces as low a weighted total cost for the agents of $S$ as locating it in every other point. In other words, any median point $p$ of $S$ minimizes the weighted total cost incurred by a set of agents located at the weighted points of $S$. To prove the other direction, let us tighten the above analysis. Assume $q$ is not a median of $S$. Then, there exists a set $S'' \subseteq S$ such that the weighted points of $S''$ can be connected via paths that do not go through $q$, and $w(S'') > w(S) / 2$.

If $S'' \cap S' = \varnothing$, then $w(S') \leq w(S) - w(S'') < w(S) - w(S) / 2 = w(S)/2$. Plugging this into (\ref{eq:median_optimal_facility_move_change}) shows that the change in the total weighted cost paid by the agents of $S$ following the move of the facility from $q$ to $p$ is strictly negative. Otherwise, we note that $S'' \not \subseteq S'$ because $w(S'') > w(S) / 2 \geq w(S')$. Hence, there exist two weighted points $\tilde{q}' \in S'' \cap S'$ and $\tilde{q}'' \in S'' \setminus S'$. Since $\tilde{q}' \in S'$ and $\tilde{q}'' \not \in S'$, we learn that the path connecting $\tilde{q}'$ to $\tilde{q}''$ goes through $p$. However, this path cannot go through $q$ because both $\tilde{q}'$ and $\tilde{q}''$ are in $S''$. This implies that the additional cost of $\tilde{q}'$ from the move of the facility from $q$ to $p$ is lower than $\dist(p, q)$. However, in the analysis of (\ref{eq:median_optimal_facility_move_change}) we upper bounded this additional cost by $\dist(p, q)$, hence, $0$ is a strict upper bound on the change in the weighted total cost paid by the agents of $S$.

In conclusion, we proved that the assumption that $q$ is not a median of $S$ implies that the total weighted cost of a set of agents located at the weighted points of $S$ strictly decreases if the facility is moved from $q$ to $p$. Hence, any non-median point like $q$ does not minimize the total weighted cost of such a set of agents.
\end{proof}

\subsection{Proof of Observation~\ref{ob:agent_side_optimal}}
\begin{repobservation}{ob:agent_side_optimal}
There exists a deterministic, agent-side incentive compatible mechanism that is optimal ($1$-competitive).
\end{repobservation}
\begin{proof}
This proof uses some results from Section~\ref{sc:deterministic_algorithm}. We would like to stress that this does not introduce circular references between our proofs. Consider the following mechanism. Let $z$ be an arbitrary point chosen independently of the reports of the agents, and let $S$ be the multi-set of the reports of all agents (as reported by the mediators to the center). We consider the mechanism which chooses as the facility location the unique median of $S$ closest to $z$.

The existence of a unique median point closest to $z$ (\ie, closer to $z$ than any other median point) is guaranteed by Observation~\ref{ob:closest_median_exists}. The optimality of the above mechanism follows immediately from Observation~\ref{ob:median_optimal}. Thus, we are only left to show that the mechanism is agent-side incentive compatible.

Consider an arbitrary agent $a_{i, j}$. Assuming mediator $d_i$ is truthful, we would like to show that $a_{i, j}$ has no incentive to report a fake location $t'_{i, j} \neq t_{i, j}$. Let $F$ and $F'$ be the facility locations picked by the mechanism assuming $a_{i, j}$ reports $t_{i, j}$ and $t'_{i, j}$, respectively. Similarly, let $S$ and $S'$ be the multi-sets of the reports of all agents (as reported by the mediators to the center) assuming $a_{i, j}$ reports $t_{i, j}$ and $t'_{i, j}$, respectively. If $F = F'$, then reporting the fake location $t'_{i, j}$ is clearly non-beneficial for $a_{i, j}$. Otherwise, observe that since $d_i$ is truthful, $S' = S \cup \{t'_{i, j}\} \setminus \{t_{i, j}\} $. Hence, by Observation~\ref{ob:point_moves}, $F$ and $F'$ are both on the path connecting $t_{i, j}$ and $t'_{i, j}$, and $F$ is closer to $t_{i, j}$ than $F'$. Thus, $\dist(t_{i, j}, F) < \dist(t_{i, j}, F')$, \ie, it is strictly non-beneficial for $a_{i, j}$ to report the fake location $t'_{i, j}$.
\end{proof}

\subsection{Proof of Lemma~\ref{le:single_better_than_distribution}}

In this subsection we prove the following lemma.

\begin{replemma}{le:single_better_than_distribution}
For any distribution over points $F$ there exists a point $p(F)$ such that for any finite multi-set $S$ of points:
\[
    \underset{p' \sim F}{\mathbb{E}} \left[\sum_{q \in S} \dist(p', q) \cdot f(q, S)\right]
    \geq
    \sum_{q \in S} \dist(p(F), q) \cdot f(q, S)
    \enspace.
\]
Moreover, for the euclidian metric on $[a, b]$ (for arbitrary $a$ and $b$) the expected location according to $F$ can serve as such a point $p(F)$.
\end{replemma}

The proof of the lemma is done by defining a series of probability distributions $F_1, F_2, \ldots, F_\ell$ that have two properties:
\begin{compactitem}
    \item For every $1 \leq i < \ell$, \[\underset{p' \sim F_i}{\mathbb{E}} \left[\sum_{q \in S} \dist(p', q) \cdot f(q, S)\right] \geq \underset{p' \sim F_{i + 1}}{\mathbb{E}} \left[\sum_{q \in S} \dist(p', q) \cdot f(q, S)\right] \enspace.\]
    \item When sampling according to $F_\ell$ one always gets the same point (a point distribution).
\end{compactitem}

Let us begin by defining $F_1$. Map each node to an arbitrary edge hitting it. For every edge $e \in E$, let $A(e)$ denote the set of points that are located on $e$ or a node mapped to $e$. Observe that $\{A(e)\}_{e \in E}$ is a disjoint partition of the metric. The probability distribution $F_1$ is identical to $F$ except that a point $p$ chosen according to $F_1$ never belongs to a set $A(e)$ if $\Pr_{p' \sim F}[p' \in A(e)] = 0$. Formally, let $E(F) = \{e \in E ~|~ \Pr_{p' \sim F}[p' \in A(e)] > 0\}$. Given a set $S$ of points:
\[
    \Pr_{p' \sim F_1}[p' \in S]
    =
    \Pr_{p' \sim F}[p' \in S \cap \cup_{e \in E(F)} A(e)]
    \enspace.
\]

\begin{lemma} \label{le:F_1_equal_F}
$F_1$ is a probability distribution, and $\underset{p' \sim F}{\mathbb{E}} \left[\dist(p', q)\right] = \underset{p' \sim F_1}{\mathbb{E}} \left[\dist(p', q)\right]$ for every point $q$.
\end{lemma}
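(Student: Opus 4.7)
The plan is to show that $F_1$ is essentially $F$ itself, ``trimmed'' to its effective support $\bigcup_{e \in E(F)} A(e)$. The key observation driving both parts of the lemma is that this trimmed-away region $\bigcup_{e \in E \setminus E(F)} A(e)$ already has $F$-measure zero. Since $E$ is a finite set (the tree has finitely many edges) and each $A(e)$ with $e \notin E(F)$ satisfies $\Pr_{p' \sim F}[p' \in A(e)] = 0$ by definition of $E(F)$, finite subadditivity gives $\Pr_{p' \sim F}[\bigcup_{e \in E \setminus E(F)} A(e)] = 0$, and since $\{A(e)\}_{e \in E}$ partitions the metric, we get $\Pr_{p' \sim F}[\bigcup_{e \in E(F)} A(e)] = 1$.

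For the first claim, I would then verify the distribution axioms for $F_1$ directly from the defining identity $\Pr_{p' \sim F_1}[p' \in S] = \Pr_{p' \sim F}[p' \in S \cap \bigcup_{e \in E(F)} A(e)]$: non-negativity is immediate, total mass equals $\Pr_{p' \sim F}[\bigcup_{e \in E(F)} A(e)] = 1$ by the observation above, and countable additivity transfers from $F$ because intersection with a fixed set commutes with countable disjoint unions.

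For the expectation claim, I would split the $F$-expectation according to whether $p'$ lies in $\bigcup_{e \in E(F)} A(e)$ or in its complement. The complement's contribution vanishes because $\dist(\cdot, q)$ is bounded (the tree has finitely many edges, each of finite length, so its diameter is finite) and the complement is an $F$-null set. The remaining contribution equals $\mathbb{E}_{p' \sim F_1}[\dist(p', q)]$ directly from the defining identity of $F_1$, giving the desired equality.

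The main obstacle: there really is no substantive obstacle, as this is essentially measure-theoretic bookkeeping. The only point worth being careful about is the appeal to finiteness of $E$, which is what licenses the step from ``each bad $A(e)$ has measure zero'' to ``the union of the bad $A(e)$'s has measure zero.'' This lemma is a preparatory step; the genuine work of the proof of Lemma~\ref{le:single_better_than_distribution} will occur in constructing and analyzing the subsequent distributions $F_2, \ldots, F_\ell$.
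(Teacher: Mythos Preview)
Your proposal is correct and follows essentially the same approach as the paper: both arguments hinge on the observation that $\bigcup_{e \in E \setminus E(F)} A(e)$ is an $F$-null set (using finiteness of $E$), from which the distribution axioms and the expectation equality follow. The only cosmetic difference is that the paper establishes the slightly stronger fact $\Pr_{p' \sim F_1}[p' \in S] = \Pr_{p' \sim F}[p' \in S]$ for every $S$ (so $F_1 = F$ as measures), which yields the expectation equality immediately without appealing to boundedness of $\dist(\cdot, q)$.
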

\begin{proof}
Clearly $\Pr_{p' \sim F_1}[p' \in S] \geq 0$ for every set $S$ of points. For any countable collection of disjoint sets $S_1, S_2, \ldots$, we have:
\begin{align*}
    \Pr_{p' \sim F_1}\left[p' \in \biguplus_{i = 1}^\infty S_i\right]
    ={} &
    \Pr_{p' \sim F}\left[p' \in \left(\biguplus_{i = 1}^\infty S_i\right) \cap \cup_{e \in E(F)} A(e)\right]\\
    ={} &
    \Pr_{p' \sim F}\left[p' \in \biguplus_{i = 1}^\infty \left(S_i \cap \cup_{e \in E(F)} A(e) \right) \right] \\
    ={} &
    \sum_{i = 1}^\infty \Pr_{p' \sim F}\left[p' \in S_i \cap \cup_{e \in E(F)} A(e) \right]
    =
    \sum_{i = 1}^\infty \Pr_{p' \sim F_1}\left[p' \in S_i \right]
    \enspace.
\end{align*}
Finally,
\begin{align*}
    \Pr_{p' \sim F_1}[p' \in \cup_{e \in E} A(e)]
    ={} &
    \sum_{e \in E} \Pr_{p' \sim F_1} [p' \in A(e)]
    =
    \sum_{e \in E(F)} \Pr_{p' \sim F} [p' \in A(e)] \\
    ={} &
    \sum_{e \in E} \Pr_{p' \sim F} [p' \in A(e)] - \sum_{e \not \in E(F)} \Pr_{p' \sim F} [p' \in A(e)]
    =
    1 - \sum_{e \not \in E(F)} 0
    =
    1
    \enspace.
\end{align*}

Observe that for every set $S$ of points, we have:
\begin{align*}
    \Pr_{p' \sim F_1}[p' \in S]
    ={} &
    \Pr_{p' \sim F}[p' \in S \cap \cup_{e \in E(F)} A(e)]
    =
    \Pr_{p' \sim F}[p' \in S] - \sum_{e \not \in E(F)} \Pr_{p' \sim F}[p' \in S \cap A(e)]\\
    ={} &
    \Pr_{p' \sim F}[p' \in S] - \sum_{e \not \in E(F)} 0
    =
    \Pr_{p' \sim F}[p' \in S]
    \enspace,
\end{align*}
and therefore, for any point $q$, $\underset{p' \sim F}{\mathbb{E}} \left[\dist(p', q)\right] = \underset{p' \sim F_1}{\mathbb{E}} \left[\dist(p', q)\right]$.
\end{proof}

Before presenting the other probability distributions ($F_2, F_3, \ldots F_\ell$), we need the following notation.
Given a probability distribution $F'$, and a path $P$ in the tree $T$ such that $P \cap E(F') \neq \varnothing$, let $A(P) = \cup_{e \in P} A(e)$ and let $F'|_P$ be a probability distribution in which the probability mass of the path $P$ is concentrated at the average of this path. More formally, let $u,v$ be the end nodes of $P$, then for every $x \in [0, 1]$, we denote by $u(x)$ the point along $P$ obeying $\dist(u, u(x)) = x \cdot \dist(u, v)$. For a point $p' \in A(P)$ we denote by $x(p')$ the value for which $p' = u(x)$. Also, let $\mathbb{E}_{F'}[P] = u\left(\mathbb{E}_{p' \sim F'}[x(p') ~|~ p' \in A(P)]\right)$, then $F'|_P$ is a probability distribution which is identical to $F'$ except that it outputs $\mathbb{E}_{F'}[P]$ whenever $F'$ outputs a point along the path $P$.

\begin{lemma} \label{le:path_expecation_imporve}
Assume $P$ is a path in the tree and $F'$ is a probability distribution in which a point $p' \sim F'$ can belong to $A(e)$ if and only if $e \in E(F')$. Then $F'|_P$ is a probability distribution, and $\underset{p' \sim F'}{\mathbb{E}} \left[\dist(p', q)\right] \geq \underset{p' \sim F'|_P}{\mathbb{E}} \left[\dist(p', q)\right]$ for every point $q$.
\end{lemma}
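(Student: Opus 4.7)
Proof plan for Lemma~\ref{le:path_expecation_imporve}:

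The plan is to reduce the inequality to Jensen's inequality applied to a convex function along the path $P$. First, I would verify that $F'|_P$ is a valid probability distribution: let $\alpha = \Pr_{p'\sim F'}[p'\in A(P)]$. Since $P\cap E(F') \neq \varnothing$, there is some $e\in P\cap E(F')$, and by the hypothesis on $F'$, $\Pr_{p'\sim F'}[p'\in A(e)] > 0$, so $\alpha > 0$. Thus the conditional expectation $\mathbb{E}_{p'\sim F'}[x(p') \mid p'\in A(P)]$ is well defined, and hence so is the point $\mathbb{E}_{F'}[P] = u\!\left(\mathbb{E}_{p'\sim F'}[x(p') \mid p'\in A(P)]\right)$. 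The construction $F'|_P$ just removes the probability mass $\alpha$ of $F'$ from $A(P)$ and places it on this single point; probabilities clearly still sum to $1$.

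The key analytic step is to show that for every point $q$, the map $g(x) := \dist(u(x),q)$ is convex on $[0,1]$. Because $T$ is a tree, there is a unique point $w$ on $P$ nearest to $q$, and for every $x\in[0,1]$ the unique path from $u(x)$ to $q$ in $T$ passes through $w$. Hence $g(x) = \dist(u(x), w) + \dist(w, q)$. Writing $w = u(x_w)$ for some $x_w\in[0,1]$, we have $\dist(u(x), w) = |x - x_w|\cdot \dist(u, v)$, which is piecewise linear and V-shaped in $x$, and therefore convex. Adding the constant $\dist(w, q)$ preserves convexity.

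Now I would condition on whether $p' \in A(P)$. For the portion of probability outside $A(P)$, the distributions $F'$ and $F'|_P$ agree, so their contributions to the expectations cancel. For the $A(P)$ portion, Jensen's inequality applied to the convex function $g$ and the conditional distribution of $x(p')$ given $p'\in A(P)$ yields
\[
  \mathbb{E}_{p'\sim F'}\bigl[\dist(p',q) \,\big|\, p'\in A(P)\bigr]
  \;=\;
  \mathbb{E}_{p'\sim F'}\bigl[g(x(p')) \,\big|\, p'\in A(P)\bigr]
  \;\geq\;
  g\!\left(\mathbb{E}_{p'\sim F'}[x(p') \mid p'\in A(P)]\right)
  \;=\;
  \dist(\mathbb{E}_{F'}[P], q).
\]
Multiplying by $\alpha$ and adding the common contribution $(1-\alpha)\cdot \mathbb{E}_{p'\sim F'}[\dist(p',q)\mid p'\notin A(P)]$ to both sides gives the desired inequality $\mathbb{E}_{p'\sim F'}[\dist(p',q)] \geq \mathbb{E}_{p'\sim F'|_P}[\dist(p',q)]$.

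The only delicate part is the convexity argument: one must be careful that $w$ is well defined and that the decomposition $\dist(u(x),q) = \dist(u(x),w) + \dist(w,q)$ holds for \emph{every} $x\in[0,1]$. This follows from the tree structure (unique paths), but should be spelled out, including the degenerate cases where $w$ coincides with an endpoint of $P$ (in which case $g$ is linear, still convex). Once convexity is established, the rest is a routine application of Jensen's inequality, so the convexity step is the main content of the proof.
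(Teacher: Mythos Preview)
Your proposal is correct and follows essentially the same route as the paper's own proof. Both arguments project $q$ to its nearest point on $P$, decompose $\dist(u(x),q)$ as a constant plus $|x-x_w|\cdot\dist(u,v)$, and then exploit the convexity of this expression in $x$; the paper just writes out the Jensen step explicitly as $\mathbb{E}[\max\{\cdot,\cdot\}]\geq\max\{\mathbb{E}[\cdot],\mathbb{E}[\cdot]\}$ rather than invoking Jensen by name, and both finish by conditioning on $\{p'\in A(P)\}$.
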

\begin{proof}
Given a set $S$ of points:
\[
    \Pr_{p' \sim F'|_P}[p' \in S]
    =
    \Pr_{p' \sim F'}[p' \in S \setminus A(P)] + 1_{\mathbb{E}_{F'}[P] \in S} \cdot \Pr_{p' \sim F'}[p' \in A(P)]
    \enspace,
\]
where $1_{\mathbb{E}_{F'}[P] \in S}$ is an indicator taking the value $1$ when $\mathbb{E}_{F'}[P] \in S$, and the value $0$ otherwise. Lets prove that $F'|_P$ is a probability distribution. It is clear that $\Pr_{p' \sim F'|_P}[p' \in S] \geq 0$ for every set $S$ of points. Consider an arbitrary countable collection of disjoint sets $S_1, S_2, \ldots$. If $\mathbb{E}_{F'}[P]$ does not belong to any of the sets $S_i$, then:
\begin{align*}
    \Pr_{p' \sim F'|_P}\left[p' \in \biguplus_{i = 1}^\infty S_i\right]
    ={} &
    \Pr_{p' \sim F'}\left[p' \in \left(\biguplus_{i = 1}^\infty S_i\right) \setminus A(P)\right]
    =
    \Pr_{p' \sim F'}\left[p' \in \biguplus_{i = 1}^\infty \left(S_i \setminus A(P)\right)\right]\\
    ={} &
    \sum_{i = 1}^\infty \Pr_{p' \sim F'}\left[p' \in S_i \setminus A(P)\right]
    =
    \sum_{i = 1}^\infty \Pr_{p' \sim F'|_P}\left[p' \in S_i\right]
    \enspace.
\end{align*}
On the other hand, if the point $\mathbb{E}_{F'}[P]$ belongs to one of the sets $S_1, S_2, \ldots$, then it belongs to exactly one of them, and therefore:
\begin{align*}
    \Pr_{p' \sim F'|_P}\left[p' \in \biguplus_{i = 1}^\infty S_i\right]
    &=
    \Pr_{p' \sim F'}\left[p' \in \left(\biguplus_{i = 1}^\infty S_i\right) \setminus A(P)\right] + \Pr_{p' \sim F'}[p' \in A(P)]\\
    ={} &
    \Pr_{p' \sim F'}\left[p' \in \biguplus_{i = 1}^\infty \left(S_i \setminus A(P)\right)\right] + \Pr_{p' \sim F'}[p' \in A(P)]\\
    ={} &
    \sum_{i = 1}^\infty \Pr_{p' \sim F'}\left[p' \in S_i \setminus A(P)\right] + \Pr_{p' \sim F'}[p' \in A(P)]
    =
    \sum_{i = 1}^\infty \Pr_{p' \sim F'|_P}\left[p' \in S_i\right]
    \enspace.
\end{align*}
Finally, notice that $\mathbb{E}_{F'}[P] \in \cup_{e \in E} A(e)$. Therefore,
\begin{align*}
    \Pr_{p' \sim F'|_P}[p' \in \cup_{e \in E} A(e)]
    ={} &
    \Pr_{p' \sim F'|_P}\left[p' \in \left(\cup_{e \in E} A(e)\right) \setminus A(P)\right] + \Pr_{p' \sim F'}[p' \in A(P)]\\
    ={} &
    \Pr_{p' \sim F'|_P}\left[p' \in \cup_{e \in E} A(e) \right] - \Pr_{p' \sim F'}[p' \in A(P)] + \Pr_{p' \sim F'}[p' \in A(P)]\\
    ={} &
    \Pr_{p' \sim F'|_P}\left[p' \in \cup_{e \in E} A(e) \right]
    =
    1
    \enspace.
\end{align*}

Consider an arbitrary point $q$, and let $q'$ be the point on $P$ closest to $q$. Then,
\begin{align*}
    \underset{p' \sim F'}{\mathbb{E}} \left[\dist(p', q) ~|~ p' \in A(P)\right] \hspace{-1.3in} & \hspace{1.3in}
    =
    \dist(q, q') + \underset{p' \sim F'}{\mathbb{E}} \left[\dist(p', q') ~|~ p' \in A(P)\right]\\
    ={} &
    \dist(q, q') + \underset{p' \sim F'}{\mathbb{E}} \left[|x(p') - x(q')| ~|~ p' \in A(P)\right] \\
    ={} &
    \dist(q, q') + \underset{p' \sim F'}{\mathbb{E}} \left[\max\{x(p') - x(q'), x(q') - x(p')\} ~|~ p' \in A(P)\right]\\
    \geq{} &
    \dist(q, q') + \max \left\{\underset{p' \sim F'}{\mathbb{E}}
     \left[x(p') - x(q') ~|~ p' \in A(P)\right], \underset{p' \sim F'}{\mathbb{E}} \left[x(q') - x(p') ~|~ p' \in A(P)\right] \right\}\\
    ={} &
    \dist(q, q') + \max \left\{\underset{p' \sim F'}{\mathbb{E}} \left[x(p') ~|~ p' \in A(P)\right] - x(q'), x(q') - \underset{p' \sim F'}{\mathbb{E}} \left[x(p') ~|~ p' \in A(P)\right] \right\}\\
    ={} &
    \dist(q, q') + \max \left\{x(\mathbb{E}_{F'}[P]) - x(q'), x(q') - x(\mathbb{E}_{F'}[P]) \right\}\\
    ={} &
    \dist(q, q') + \dist(\mathbb{E}_{F'}[P], q')
    =
    \dist(\mathbb{E}_{F'}[P], q)
    \enspace.
\end{align*}
If $E(F') \subseteq P$, then:
\begin{align*}
    \underset{p' \sim F'}{\mathbb{E}} \left[\dist(p', q)\right]
    ={} &
    \underset{p' \sim F'}{\mathbb{E}} \left[\dist(p', q) ~|~ p' \in A(P)\right]
    \geq
    \dist(\mathbb{E}_{F'}[P], q) \\
    ={} &
    \underset{p' \sim F'|_P}{\mathbb{E}} \left[\dist(p', q) ~|~ p' \in A(P)\right]
    =
    \underset{p' \sim F'|_P}{\mathbb{E}} \left[\dist(p', q)\right]
    \enspace.
\end{align*}
Otherwise,
\begin{align*}
    \underset{p' \sim F'}{\mathbb{E}} \left[\dist(p', q)\right]
    ={} &
    \underset{p' \sim F'}{\mathbb{E}} \left[\dist(p', q) ~|~ p' \in A(P)\right] \cdot \Pr_{p' \sim F'}[p' \in A(P)]\\
    &+ \underset{p' \sim F'}{\mathbb{E}} \left[\dist(p', q) ~|~ p' \not \in A(P)\right] \cdot \Pr_{p' \sim F'}[p' \not \in A(P)]\\
    \geq{} &
    \dist(\mathbb{E}_{F'}[P], q) \cdot \Pr_{p' \sim F'}[p' \in A(P)]\\
    &+ \underset{p' \sim F'}{\mathbb{E}} \left[\dist(p', q) ~|~ p' \not \in A(P)\right] \cdot \Pr_{p' \sim F'}[p' \not \in A(P)]\\
    ={} &
    \underset{p' \sim F'|_P}{\mathbb{E}} \left[\dist(p', q) ~|~ p' \in A(P)\right] \cdot \Pr_{p' \sim F'}[p' \in A(P)]\\
    &+ \underset{p' \sim F'|_P}{\mathbb{E}} \left[\dist(p', q) ~|~ p' \not \in A(P)\right] \cdot \Pr_{p' \sim F'}[p' \not \in A(P)]\\
    ={} &
    \underset{p' \sim F'|_P}{\mathbb{E}} \left[\dist(p', q)\right]
    \enspace.
    \qedhere
\end{align*}
\end{proof}

Using the machinery developed above, we can now define the following process which constructs the distributions $F_2, F_2, \ldots, F_\ell$. For a probability distribution $F'$, let $Q(F')$ denote the set of points that can result from $F'$.

\begin{tabbing}
    1. Let $i \leftarrow 1$. \\
    2. If \= $E(F_i) > 1$, then: \\
    3. \> Let $P_i$ be a path containing at least two edges of $E(F_i)$.\\
    4. \> Let $F_{i + 1} \leftarrow F_i|_{P_i}$.\\
    5. Else if $|Q(F')| > 1$:\\
    6. \> Let $P_i$ be the path containing only the single edge of $E(F_i)$.\\
    7. \> Let $F_{i + 1} \leftarrow F_i|_{P_i}$.\\
    8. Else:\\
    9. \> Terminate the process.\\
    10. $i \leftarrow i + 1$.\\
    11. Return to line $2$.
\end{tabbing}

\begin{observation}
The above process always terminates.
\end{observation}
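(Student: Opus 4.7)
The plan is to track a simple integer potential, namely $|E(F_i)|$, and show that every iteration of the process either strictly decreases this quantity, or terminates the process within one additional step. Since the tree $T$ is finite, $|E(F_1)| \le |E|$, and this will bound the total number of iterations by $|E|+1$.

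First I would analyze the effect of the replacement $F_{i+1} = F_i|_{P_i}$ on the set $E(F_{i+1})$. The operation moves all probability mass lying on $A(P_i)$ to the single point $\mathbb{E}_{F_i}[P_i]$, and leaves the mass on points outside $A(P_i)$ untouched. Consequently: (a) for any edge $e \notin P_i$, membership in $E(F_{i+1})$ coincides with membership in $E(F_i)$; and (b) among the edges of $P_i$, the only edge that can belong to $E(F_{i+1})$ is the unique edge containing $\mathbb{E}_{F_i}[P_i]$. Therefore
\[
    |E(F_{i+1})| \;\le\; |E(F_i)\setminus P_i| + 1 \;=\; |E(F_i)| - |E(F_i)\cap E(P_i)| + 1.
\]

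Now I treat the two cases of the process. In Case 1, the path $P_i$ contains at least two edges of $E(F_i)$, so $|E(F_i)\cap E(P_i)| \ge 2$, and the inequality above gives $|E(F_{i+1})| \le |E(F_i)| - 1$: the potential strictly decreases. Because $|E(F_i)|$ is a non-negative integer starting from at most $|E|$, Case 1 can be executed at most $|E|$ times in total. After Case 1 is no longer triggered, we have $|E(F_i)| \le 1$. If additionally $|Q(F_i)| = 1$ (or $|E(F_i)|=0$, which cannot occur since $F_i$ is a probability distribution), the process terminates on line 8-9.

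The only remaining possibility is Case 2, with $|E(F_i)| = 1$ and $|Q(F_i)| > 1$. In this case $P_i$ is the unique edge in $E(F_i)$, so $A(P_i)$ carries all the probability mass of $F_i$, and $F_{i+1}$ concentrates all of this mass at the single point $\mathbb{E}_{F_i}[P_i]$. Thus $|Q(F_{i+1})| = 1$, and in the next iteration both branching conditions fail, forcing termination. Combining the two cases, the process halts after at most $|E(F_1)| + 2 \le |E| + 2$ iterations. The main (minor) obstacle is simply the bookkeeping in the bound on $|E(F_{i+1})|$; there is no conceptual difficulty beyond identifying the right monovariant.
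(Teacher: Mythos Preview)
Your proposal is correct and follows essentially the same approach as the paper: both track $|E(F_i)|$ as a strictly decreasing integer potential in the first branch, and observe that the second branch collapses $F_{i+1}$ to a single point so that the process terminates one step later. The paper's proof is a one-line version of your argument; your version simply spells out the bookkeeping on $|E(F_{i+1})|$ explicitly.
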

\begin{proof}
Observe that $|E(F_i)| > |E(F_{i + 1})|$ unless $|E(F_{i + 1})| = 1$ (in which case the process terminates after producing $F_{i + 1}$).
\end{proof}

Let $\ell$ be the index of the last probability distribution produced by the process, and let $p(F)$ be the single point that can be outputted by $F_\ell$. Let $q$ be an arbitrary point. From Lemmata~\ref{le:F_1_equal_F} and \ref{le:path_expecation_imporve}, we get:
\begin{align*}
    \underset{p' \sim F}{\mathbb{E}} \left[\dist(p', q)\right]
    =
    \underset{p' \sim F_1}{\mathbb{E}} \left[\dist(p', q)\right]
    \geq
    \underset{p' \sim F_2}{\mathbb{E}} \left[\dist(p', q)\right]
    \geq
    \ldots
    \geq{} &
    \underset{p' \sim F_\ell}{\mathbb{E}} \left[\dist(p', q)\right]\\
    ={} &
    \dist(p(F), q)
    \enspace.
\end{align*}

Consider now an arbitrary finite multi-set $S$ of points. Then, by the linearity of the expectation:
\[
    \underset{p' \sim F}{\mathbb{E}} \left[\sum_{q \in S} \dist(p', q) \cdot f(q, S)\right]
    =
    \sum_{q \in S} \underset{p' \sim F}{\mathbb{E}} \left[\dist(p', q) \right] \cdot f(q, S)
    \geq
    \sum_{q \in S} \dist(p(F), q) \cdot f(q, S)
    \enspace,
\]
which completes the proof of the first part of the Lemma~\ref{le:single_better_than_distribution}. Observe that if the metric is the euclidian metric on $[a, b]$, then there are two cases.
\begin{compactitem}
    \item $Q(F) = \{q\}$ for some point $q$. In this case $\ell = 1$, and $p(F) = q = \underset{p' \sim F}{\mathbb{E}} [p']$.
    \item $|Q(F)| > 1$. In this case, one possible execution of the above process chooses $P_1$ which is equal to the entire interval $[a, b]$. For this execution, we get $\ell = 2$ and $p(F) = \underset{p' \sim F_1}{\mathbb{E}} [p'] = \underset{p' \sim F}{\mathbb{E}} [p']$.
\end{compactitem}
Thus, in both cases $\underset{p' \sim F}{\mathbb{E}} [p']$ can serve as $p(F)$.

{
	\allowdisplaybreaks
	\section{Analysis of TRM} \label{app:rand_proofs}

In this appendix we analyze TRM and prove the following theorem.

\begin{reptheorem}{th:general_metrics_mechanism}
For any tree metric, the \emph{Tree Randomized Mechanism} (TRM) is a randomized two-sided IC mechanism with a competitive ratio of $2$.
\end{reptheorem}

The next two lemmata prove that $p(u)$ defines a legal probability distribution. Subsections~\ref{ssc:approximation_general} and \ref{th:truthfulness_general} prove the competitive ratio and the two-sided IC of TRM, respectively.

\begin{lemma}
For every node $u \in X$, $p(u) \in [0, 1]$.
\end{lemma}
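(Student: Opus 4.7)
The plan is to show the two inequalities $p(u) \geq 0$ and $p(u) \leq 1$ separately, both driven by the key structural fact that $r$ was chosen in Part~$1$ as a weighted median of $L$. Because removing $r$ from $T$ splits it into the subtrees rooted at the children of $r$, and each piece has total weight at most $w(L)/2 = n/2$, I can conclude that $\treesize(u') \leq n/2$ for every $u' \in \children(r)$, and hence by monotonicity $\treesize(u) \leq n/2$ for every non-root node $u$. This bound on subtree sizes is the workhorse for everything.

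For $p(u) \geq 0$, rearranging the definition I need $\treesize(u) - c(u) \geq \sum_{u' \in X \cap \children(u)}(\treesize(u')-c(u'))$. For a non-root $u \in X$, writing $k = |X \cap \children(u)|$, the right-hand side equals $\sum \treesize(u') - k\cdot n/4$, and since $\sum_{u' \in X\cap \children(u)} \treesize(u') \leq \sum_{u' \in \children(u)} \treesize(u') \leq \treesize(u)$, the inequality reduces to $\treesize(u) - n/4 \geq \treesize(u) - k\cdot n/4$, i.e.\ $k \geq 1$; while if $k = 0$, the right-hand side is $0$ and the left is $\treesize(u) - n/4 \geq 0$ since $u \in X$. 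For the root $u = r$, the left-hand side is $n - n/2 = n/2$. I split on $k := |X \cap \children(r)|$: if $k \leq 2$, then each child contributes at most $\treesize(u') - n/4 \leq n/4$ (using $\treesize(u') \leq n/2$ from the median property), so the sum is $\leq k\cdot n/4 \leq n/2$; and if $k \geq 3$, I instead use $\sum_{u' \in X \cap \children(r)} \treesize(u') \leq \sum_{u' \in \children(r)} \treesize(u') \leq n$, which gives $\sum(\treesize(u')-n/4) \leq n - k\cdot n/4 \leq n/4 \leq n/2$.

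For $p(u) \leq 1$, the subtracted sum $\sum_{u' \in X \cap \children(u)}(\treesize(u')-n/4)$ is non-negative because each $u'$ lies in $X$. So for a non-root $u$ I get $p(u) \leq (\treesize(u)-n/4)/(n/2) \leq (n/4)/(n/2) = 1/2$ by the subtree bound $\treesize(u) \leq n/2$; and for the root $p(r) \leq (\treesize(r)-n/2)/(n/2) = (n/2)/(n/2) = 1$.

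The only step that is genuinely subtle is the root case of the lower bound: the naive estimate $\treesize(u') \leq n/2$ per child is too loose once three or more children lie in $X$, and the correct argument has to branch on the number of ``heavy'' children and invoke the global bound $\sum_{u' \in \children(r)} \treesize(u') \leq n$ in that regime. Everything else follows by bookkeeping from the definitions of $X$, $c(\cdot)$, and $\treesize(\cdot)$.
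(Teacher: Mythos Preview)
Your proposal is correct and follows essentially the same approach as the paper: both proofs hinge on the fact that $r$ being a weighted median of $L$ forces $\treesize(u) \leq n/2$ for every non-root node, and both split the lower-bound analysis into root versus non-root, with a further case split on the number of children in $X$. The only cosmetic difference is that for the root lower bound the paper splits on $|X \cap \children(r)| \leq 1$ versus $\geq 2$, whereas you split on $\leq 2$ versus $\geq 3$; both thresholds work, and the underlying estimates are the same.
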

\begin{proof}
First, let us consider a node $u \in X \setminus \{r\}$. For such a node $c(u) = n/4$ and $n/2 \geq \treesize(u) \geq n/4$ because $r$ is a weighted median of $L$. Hence, by the definition of $X$:
\[
    \treesize(u) - \sum_{u' \in X \cap \children(u)} \mspace{-36mu} (\treesize(u') - n/4)
    \leq
    \treesize(u)
    \leq
    n/2
    \enspace,
\]
which implies:
\[
    p(u)
    =
    \frac{\treesize(u) - c(u)}{n/2} - \frac{\sum_{u' \in X \cap \children(u)}(\treesize(u') - c(u'))}{n/2}
    \leq
    \frac{n/2 - n/4}{n/2}
    =
    1/2
    \enspace.
\]

Also, $\treesize(u) - \sum_{p' \in X \cap \children(u)}(\treesize(p') - n/4) \geq n/4$ because, either $|X \cap \children(u)| = 0$ or:
\begin{align*}
    &
    \treesize(u) - \sum_{u' \in X \cap \children(u)} \mspace{-36mu} (\treesize(u') - n/4)
    \geq
    n/4 + \treesize(u) - \sum_{u' \in X \cap \children(u)} \mspace{-36mu} \treesize(u')\\
    \geq{} &
    n/4 + \treesize(u) - \sum_{u' \in \children(p)} \mspace{-18mu} \treesize(u')
    =
    n/4 + \size(u)
    \geq
    n/4
    \enspace.
\end{align*}
Thus,
\[
    p(u)
    =
    \frac{\treesize(u) - c(u)}{n/2} - \frac{\sum_{u' \in X \cap \children(u)}(\treesize(u') - c(u'))}{n/2}
    \geq
    \frac{n/4 - n/4}{n/2}
    =
    0
    \enspace.
\]

We are now left to prove that $0 \leq p(r) \leq 1$. Clearly, $c(r) = n/2$ and $\treesize(r) = n$. By the definition of $X$ we get:
\[
    \treesize(r) - \sum_{u' \in X \cap \children(r)} \mspace{-36mu} (\treesize(u') - n/4)
    \leq
    \treesize(r)
    =
    n
    \enspace,
\]
which implies:
\[
    p(r)
    =
    \frac{\treesize(r) - c(r)}{n/2} - \frac{\sum_{u' \in X \cap \children(r)}(\treesize(u') - c(u'))}{n/2}
    \leq
    \frac{n - n/2}{n/2}
    =
    1
    \enspace.
\]

Also, $\treesize(r) - \sum_{u' \in X \cap \children(r)}(\treesize(u') - n/4) \geq n/2$ because:
\begin{compactitem}
    \item If $|X \cap \children(r)| \leq 1$:
    \begin{align*}
        \treesize(r) - \sum_{u' \in X \cap \children(r)} \mspace{-36mu} (\treesize(u') - n/4)
        \geq{} &
        \treesize(r) - \sum_{u' \in X \cap \children(r)} \mspace{-18mu} \treesize(u')\\
        \geq{} &
        \treesize(r) - |X \cap \children(r)| \cdot n/2
        \geq
        n/2
        \enspace.
    \end{align*}
    \item If $|X \cap \children(r)| \geq 2$:
    \begin{align*}
        &
        \treesize(r) - \sum_{u' \in X \cap \children(r)} \mspace{-36mu} (\treesize(u') - n/4)
        \geq
        n/2 + \treesize(r) - \sum_{u' \in X \cap \children(r)} \mspace{-36mu} \treesize(u')\\
        \geq{} &
        n/2 + \treesize(r) - \sum_{u' \in \children(r)} \mspace{-18mu} \treesize(u')
        =
        n/2 + \size(r)
        \geq
        n/2
        \enspace.
    \end{align*}
\end{compactitem}

Thus,
\[
    p(r)
    =
    \frac{\treesize(r) - c(r)}{n/2} - \frac{\sum_{u' \in X \cap \children(r)}(\treesize(u') - c(u'))}{n/2}
    \geq
    \frac{n/2 - n/2}{n/2}
    =
    0
    \enspace.
    \qedhere
\]
\end{proof}

\begin{lemma} \label{le:tree_probability}
For every node $u \in X$, $\sum_{u' \in X \cap \subtree(u)} p(u') = \frac{\treesize(u) - c(u)}{n/2}$, hence, for $u = r$, we get $\sum_{u' \in X} p(u') = 1$.
\end{lemma}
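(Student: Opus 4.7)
The plan is to establish the identity by a clean telescoping induction on the structure of $X$ inside $\subtree(u)$. First I would introduce the shorthand $f(u) = (\treesize(u) - c(u))/(n/2)$ for $u \in X$, so that the definition of $p(u)$ from the second part of TRM rewrites compactly as
\[
p(u) = f(u) - \sum_{v \in X \cap \children(u)} f(v).
\]
This form makes clear that what we want is a telescoping cancellation: when we sum $p(u')$ over all $u' \in X \cap \subtree(u)$, the $-f(v)$ contribution of a parent should cancel the $+f(v)$ contribution produced by the child $v$ itself.

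The key structural observation needed to make this cancellation work is that $X$ is closed under taking ancestors in the rooted tree $T$. Indeed, by definition $\treesize(u) = \size(u) + \sum_{u' \in \children(u)} \treesize(u')$, with all summands non-negative, so $\treesize$ is non-increasing along root-to-leaf paths. Hence if $v \in \children(u) \setminus X$ then $\treesize(v') \leq \treesize(v) < n/4$ for every $v' \in \subtree(v)$, so the entire subtree of such a $v$ lies outside $X$ and contributes nothing to either side of the identity. This means that in the recursion it is safe to push the sum over children only through children that lie in $X$.

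With this in hand I would proceed by induction on $|X \cap \subtree(u)|$. In the base case $X \cap \subtree(u) = \{u\}$, no child of $u$ lies in $X$, so $p(u) = f(u)$ and the identity holds. For the inductive step, split the sum into $u$'s own contribution and the contributions of the subtrees rooted at the children of $u$ lying in $X$:
\[
\sum_{u' \in X \cap \subtree(u)} p(u')
\;=\;
p(u) + \sum_{v \in X \cap \children(u)} \;\sum_{u' \in X \cap \subtree(v)} p(u').
\]
By the inductive hypothesis the inner sum equals $f(v)$, so substituting the definition of $p(u)$ gives
\[
p(u) + \sum_{v \in X \cap \children(u)} f(v)
\;=\;
\Bigl(f(u) - \sum_{v \in X \cap \children(u)} f(v)\Bigr) + \sum_{v \in X \cap \children(u)} f(v)
\;=\; f(u),
\]
which is precisely the claim. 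The ``hence'' clause then follows by specializing to $u = r$: since $c(r) = n/2$ and $\treesize(r) = n$, we get $f(r) = (n - n/2)/(n/2) = 1$, and thus $\sum_{u' \in X} p(u') = 1$.

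The only mildly subtle point in this proof is the ancestor-closure of $X$; once that is noted, the rest is a routine telescoping induction, so I do not anticipate any real obstacle.
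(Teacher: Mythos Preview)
Your proposal is correct and follows essentially the same telescoping induction on $|X \cap \subtree(u)|$ that the paper uses. The only difference is that you make the ancestor-closure of $X$ (equivalently, the monotonicity of $\treesize$ along root-to-leaf paths) explicit, whereas the paper uses it silently when decomposing $\sum_{u' \in X \cap \subtree(u),\, u' \neq u} p(u')$ over the children of $u$ that lie in $X$.
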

\begin{proof}
We prove the lemma by induction on the size of $|X \cap \subtree(u)|$. If $|X \cap \subtree(u)| = 1$, then $X \cap \children(u) = \varnothing$, and therefore,
\[
    p(u)
    =
    \frac{\treesize(u) - c(u)}{n/2} - \frac{\sum_{u' \in X \cap \children(u)}(\treesize(u') - c(u))}{n/2}
    =
    \frac{\treesize(u) - c(u)}{n/2}
    \enspace.
\]
Assume now that the claim hold for $|X \cap \subtree(u)| < h$, and let us prove it for $|X \cap \subtree(u)| = h$. Every point $u' \in X \cap \children(u)$ obeys: $X \cap \subtree(u') \subseteq X \cap \subtree(u) \setminus \{u\}$, and therefore, $|X \cap \subtree(u')| < |X \cap \subtree(u)|$, \ie, the induction hypothesis holds for $u'$. Thus,
\begin{align*}
    \sum_{u' \in X \cap \subtree(u)} p(u')
    ={} &
    p(u) + \sum_{\substack{u' \in X \cap \subtree(u) \\ u' \neq u}} \mspace{-27mu} p(u') \\
    = {} &
    \frac{\treesize(u) - c(u)}{n/2} - \frac{\sum_{u' \in X \cap \children(u)}(\treesize(u') - c(u'))}{n/2}\\
     & + \sum_{u' \in X \cap \children(u)} \frac{\treesize(u') - c(u')}{n/2}\\
    ={} &
    \frac{\treesize(u) - c(u)}{n/2}
    \enspace.
		\qedhere
\end{align*}
\end{proof}

\subsection{Proof: TRM is \texorpdfstring{$2$}{2}-Competitive} \label{ssc:approximation_general}

In this subsection we prove that TRM is $2$-competitive as an algorithm (when players are truthful). Let $V' = V \cup \{t_{i, j} | 1 \leq i \leq k \wedge 1 \leq j \leq n_i\}$, where $V$ is the set of nodes of the tree produced by the first part of TRM. Intuitively, $V'$ contains $V$ and the \emph{real} locations of all agents. The definition of $V'$ calls for the following extensions of $\treesize$, $\subtree$, $X$ and $c$:
\begin{compactitem}
    \item For a point $q$ outside of $V$, let $u_q$ be the first node of $V$ encountered when moving from $q$ away from $r$. Then, $\treesize(q) = \treesize(u_q)$ and $\subtree(q) = \subtree(u_q)$.
    \item Let $X' = \{q \in V' ~|~ \treesize(q) \geq n/4\}$.
    \item For every point $q \in X' \setminus X$, let $c(q) = n/4$.
\end{compactitem}

We also define $S$ as the set of segments connecting pairs of points $q$ and $q'$ of $V'$ that are not separated by any other point of $V'$. Observe that since $V \subseteq V'$, the interiors of the sections of $S$ are disjoint.

\begin{observation} \label{ob:optimal_in_u_prime}
There exists an optimal facility location $F$ in $V'$.
\end{observation}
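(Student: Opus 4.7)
The plan is to combine Observation~\ref{ob:median_optimal} with a piecewise-linearity argument. By Observation~\ref{ob:median_optimal}, the set of optimal facility locations coincides with the set of weighted medians of the multi-set $\{(t_{i,j},1)\}$, which is non-empty. So it suffices to prove that at least one optimal location lies in $V'$, which I will do by arguing that the social cost is affine on every segment of $S$ and invoking the elementary fact that an affine function on a closed segment attains its minimum at an endpoint.

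First I would recall that, by construction, $V' = V \cup \{t_{i,j}\}$ is a finite point set, and the segments in $S$ tile the metric induced by $T$ with disjoint interiors. The crucial property of $S$ is that the interior of every $s \in S$ contains no point of $V'$; in particular it contains no branching node of $T$ and no agent location.

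Second, I would show that for any fixed agent location $t_{i,j}$, the function $F \mapsto \dist(F, t_{i,j})$ is affine along every $s \in S$. Fix a segment $s \in S$ with endpoints $q, q'$. Since the interior of $s$ lies entirely inside a single edge of $T$ (no branching node appears inside) and does not contain $t_{i,j}$, the unique path in $T$ from any interior point $F \in s$ to $t_{i,j}$ leaves $s$ through the \emph{same} endpoint (either always through $q$ or always through $q'$); call this endpoint $q^\star$. Hence $\dist(F,t_{i,j}) = \dist(q^\star, F) + \dist(q^\star, t_{i,j})$, which is affine in the position of $F$ along $s$. Summing over all agents, the social cost $C(F) = \sum_{i,j} \dist(F, t_{i,j})$ is affine on $s$.

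Finally, let $F^\star$ be any optimal facility location. If $F^\star \in V'$ we are done. Otherwise $F^\star$ lies in the relative interior of some segment $s \in S$ with endpoints $q, q' \in V'$. Since $C$ is affine on $s$, its minimum over $s$ is attained at one of the endpoints, so $\min\{C(q), C(q')\} \leq C(F^\star)$. That endpoint is in $V'$ and is itself optimal, completing the proof. There is no real obstacle here: the only point requiring care is the verification that the interior of each $s \in S$ contains neither branching nodes of $T$ nor agent locations, which is immediate from the definitions of $V'$ and $S$.
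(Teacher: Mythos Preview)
Your proof is correct but follows a different route than the paper. The paper argues directly at the level of medians: it takes any median $F$ of the multi-set $A = \{t_{i,j}\}$, and if $F \notin V'$ it picks the closest point $F' \in V'$ to $F$ and verifies from the definition that $F'$ is again a median (any path between agent locations that passes through $F$ must also pass through $F'$, so the ``half the weight on each side'' condition transfers). Your argument instead works analytically: you observe that the social cost is affine on each segment of $S$ (since no branching node and no agent lies in the interior) and conclude that an interior minimizer can be slid to an endpoint without increasing cost. Both arguments are short; the paper's stays inside the combinatorial median framework used elsewhere in the analysis and directly yields a \emph{median} in $V'$, while yours is more elementary and would apply verbatim to any cost function that is affine on the segments of $S$, not just the sum-of-distances objective.
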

\begin{proof}
Observe that every median of the multi-set $A = \{t_{i, j} | 1 \leq i \leq k \wedge 1 \leq j \leq n_i\}$ is an optimal facility location, and let $F$ be an arbitrary median of $A$. If $F \in V'$, then we are done. Otherwise, we also have $F \not \in A$. Let $F'$ be a point of $V'$ which is closest to $F$ among all points of $V'$. Consider an arbitrary multi-set $S'$ of points from $A \setminus \{F'\}$ that are connected by paths that do not pass through $F'$. By the definition of $F'$, every path between points of $A$ that passes through $F$ also passes through $F'$. Thus, the points of $S'$ are connected by paths that do not pass through $F$. Hence, $|S'| \leq |A| / 2$ because $F$ is a median of $A$. Since $S'$ is an arbitrary multi-set of points from $A \setminus \{F'\}$ that are connected by paths that do not pass through $F'$, $F'$ is a median of $A$ that belongs to $V'$.
\end{proof}

Observation~\ref{ob:optimal_in_u_prime} implies that for every agent $a_{i, j}$ the path between $F$ and $t_{i, j}$ is a
union of segments from $S$ with disjoint interiors.
It can be easily seen that the same is true also for each facility location that TRM might select. Let us focus on an arbitrary segment $s \in S$ connecting two points $q, q' \in V'$, where $q'$ is the point closer to $r$ in the pair (there must be such a point since $V \subseteq V'$). We say that an agent $a_{i,j}$ uses $s$ under the optimal solution if the path from $t_{i, j}$ to $F$ goes through $s$ (by Observation~\ref{ob:optimal_in_u_prime} we know that the path must either go through all of $s$ or though no part of it). Similarly, the probability that $a_{i, j}$ uses $s$ under TRM is the probability that the path from $a_{i, j}$ to the facility location chosen by TRM goes through $s$. Observe that in order to prove the approximation ratio of TRM, it is enough to prove that the number of agents using $s$ under the optimal solution is at least half of the expected number of agents using $s$ under TRM. Let $\overline{\treesize}(s)$ denote the number of agents whose location is in the subtree of $q$. More formally,
\[
    \overline{\treesize}(s)
    =
    |\{a_{i, j} ~|~ \mbox{the path from $t_{i, j}$ to $r$ goes through $q$}\}|
    \enspace.
\]

The rest of this subsection is split between two cases: $\overline{\treesize}(s) \leq n/2$ and $\overline{\treesize}(s) \geq n/2$.

\subsubsection{The case of \texorpdfstring{$\overline{\treesize}(s) \leq n/2$}{treesize(s) <= n/2}}

Let us first consider the case $\overline{\treesize}(s) \leq n/2$.

\begin{lemma} \label{le:twice}
$\treesize(q) \leq 2 \cdot \overline{\treesize}(s)$.
\end{lemma}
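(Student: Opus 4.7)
The plan is to show that every mediator $d_i$ whose median $\ell_i$ lies in $\subtree(u_q)$ must contribute at least half of its agents' locations to $\overline{\treesize}(s)$. Then summing over such mediators gives the bound, since $\treesize(q) = \treesize(u_q) = \sum_{i : \ell_i \in \subtree(u_q)} n_i$ by unfolding the recursive definition of $\treesize$.

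First I would fix any mediator $d_i$ with $\ell_i \in \subtree(u_q)$ and let $A_i^c \subseteq A_i$ denote the subset of its agents whose path to $r$ does \emph{not} pass through $q$, i.e., those whose locations lie on the $r$-side of $q$. The key geometric observation is that $\ell_i$ lies in $\subtree(u_q)$, hence on the opposite side of $q$ from $r$, and in particular $\ell_i \neq t_{i,j}$ for every $a_{i,j} \in A_i^c$. Moreover, any two points on the $r$-side of $q$ are connected by a path in $T$ that stays entirely on the $r$-side of $q$ and therefore avoids $\ell_i$.

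This lets me apply the median property. Form the multi-set $S' = \{(t_{i,j},1) : a_{i,j} \in A_i^c\}$; it is a sub multi-set of $S \setminus S_{\ell_i}$, and its points can be connected by paths not passing through $\ell_i$. Since $\ell_i$ is a weighted median of the weight-$1$ multi-set of $d_i$'s agent locations, Observation~\ref{ob:median_optimal} (or directly the definition of a weighted median) gives $|A_i^c| = w(S') \leq n_i/2$. Hence at least $n_i/2$ agents of $d_i$ contribute to $\overline{\treesize}(s)$.

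Summing over all mediators with $\ell_i \in \subtree(u_q)$ yields
\[
    \overline{\treesize}(s) \;\geq\; \sum_{i : \ell_i \in \subtree(u_q)} \frac{n_i}{2} \;=\; \frac{\treesize(u_q)}{2} \;=\; \frac{\treesize(q)}{2},
\]
which rearranges to the claim. There is no real obstacle here; the only thing to verify carefully is the connectivity assertion, which follows from $\ell_i$ lying strictly on the $\subtree(u_q)$-side of $q$. Interestingly, the hypothesis $\overline{\treesize}(s) \leq n/2$ is not used in this particular inequality; it is simply the case under which the lemma is invoked.
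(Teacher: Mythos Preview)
Your proof is correct and follows essentially the same approach as the paper: both argue that for any mediator $d_i$ with $\ell_i \in \subtree(u_q)$, the agents of $A_i$ lying on the $r$-side of $q$ form a set connected by paths avoiding $\ell_i$, so by the median property at most $n_i/2$ of them are there, and the summation over such mediators gives the bound. Your side remark that the case hypothesis $\overline{\treesize}(s) \leq n/2$ is not actually used in this inequality is also correct; the paper places the lemma in that subsection only because that is where it is invoked.
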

\begin{proof}
For an agent to be counted by $\treesize(q)$, it must belong to a mediator $d_i$ obeying $\ell_i \in \subtree(u_q)$. Since $\ell_i$ is the median of the (reported) locations of the agents, $\ell_i \in \subtree(u_q)$ implies that least half of the agent in $A_i$ are located in the subtree rooted at $q$ (because all the agents of $A_i$ located outside of this subtree can be connected via paths that does not go through $\ell_i$). Thus, whenever the agents of $A_i$ are counted by $\treesize(q)$, at least half of the agents in $A_i$ are also counted by $\overline{\treesize}(s)$.
\end{proof}

We can now upper bound the expected number of agents using $s$ under TRM.

\begin{observation} \label{ob:tree_probability_extended}
Lemma~\ref{le:tree_probability} holds for every point $s \in X'$.
\end{observation}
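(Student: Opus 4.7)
The plan is to reduce the extended statement directly to Lemma~\ref{le:tree_probability} by unpacking the definitions of $\treesize$, $\subtree$, $c$, and $X'$. The key observation is that for points of $X$ the claim is exactly Lemma~\ref{le:tree_probability}, so the entire content is in the case $q \in X' \setminus X$.

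First, I would note that if $q \in X' \setminus X$ then $q \notin V$ (otherwise $q$ together with the condition $\treesize(q) \geq n/4$ would place $q$ in $X$ by definition). Hence $q \in V' \setminus V$, which means $q$ is the location $t_{i,j}$ of some agent that lies strictly in the interior of an edge of the tree produced by Part~$1$ of TRM. In this case the extensions of $\treesize$ and $\subtree$ give $\treesize(q) = \treesize(u_q)$ and $\subtree(q) = \subtree(u_q)$, where $u_q$ is the first node of $V$ met when moving from $q$ away from $r$.

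Next, I would check two small facts about $u_q$. Since $u_q$ is reached by moving away from $r$ along the edge containing $q$, the node $u_q$ is a strict descendant of one of the endpoints of that edge, so in particular $u_q \neq r$; therefore $c(u_q) = n/4$, which coincides with the value $c(q) = n/4$ prescribed for $q \in X' \setminus X$. Moreover, $\treesize(u_q) = \treesize(q) \geq n/4$, so $u_q \in X$, and Lemma~\ref{le:tree_probability} is applicable to $u_q$.

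Finally, using $\subtree(q) = \subtree(u_q)$ we have $X \cap \subtree(q) = X \cap \subtree(u_q)$, so Lemma~\ref{le:tree_probability} applied at $u_q$ yields
\[
\sum_{u' \in X \cap \subtree(q)} p(u') \;=\; \sum_{u' \in X \cap \subtree(u_q)} p(u') \;=\; \frac{\treesize(u_q) - c(u_q)}{n/2} \;=\; \frac{\treesize(q) - c(q)}{n/2},
\]
which is the desired identity. There is no real obstacle here; the whole content is verifying that the ad hoc extensions of $\treesize$, $\subtree$, and $c$ to points of $V' \setminus V$ were chosen precisely so that the formula in Lemma~\ref{le:tree_probability} transports verbatim along the map $q \mapsto u_q$. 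The only subtlety worth stating explicitly is that $u_q \neq r$, so we do not accidentally pick up the larger value $c(r) = n/2$ on the right-hand side.
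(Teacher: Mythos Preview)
Your proposal is correct and follows essentially the same approach as the paper's proof: both reduce the case $q\in X'\setminus X$ to Lemma~\ref{le:tree_probability} applied at the node $u_q$, using $\subtree(q)=\subtree(u_q)$, $\treesize(q)=\treesize(u_q)$, and $c(q)=c(u_q)=n/4$. Your version is more explicit in verifying that $u_q\in X$ and $u_q\neq r$ (a step the paper leaves implicit), and your justification that $u_q\neq r$ is correct though slightly awkwardly phrased---$u_q$ is in fact the endpoint of the edge containing $q$ that is farther from $r$, hence has a parent in the rooted tree.
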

\begin{proof}
If $s \in X$ there is nothing to prove. Otherwise, recall that $u_s$ is the first node of $V$ encountered when moving from $s$ away from $r$. Then,
\[
    \sum_{u' \in \subtree(s)} p(u')
    =
    \sum_{u' \in \subtree(u_s)} p(u')
    =
    \frac{\treesize(u_s) - c(u_s)}{n/2}
    =
    \frac{\treesize(s) - c(s)}{n/2}
    \enspace,
\]
where the last equality holds because neither $s$ nor $u_s$ can be $r$.
\end{proof}

\begin{lemma} \label{le:TRM_pays}
The expected number of agents using $s$ under TRM is at most: $\max\{6 \cdot \overline{\treesize}(s) - 8n^{-1} \cdot (\overline{\treesize}(s))^2 - n/2, \overline{\treesize}(s)\}$.
\end{lemma}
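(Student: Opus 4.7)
The plan is to express the expected number of agents using $s$ in terms of the single quantity $P$, the probability that TRM places the facility inside $\subtree(q)$ (equivalently, inside $\subtree(u_q)$), and then upper bound $P$ via Observation~\ref{ob:tree_probability_extended} combined with Lemma~\ref{le:twice}. Since an agent $a_{i,j}$ uses $s$ if and only if exactly one of $t_{i,j}$ and the selected facility location lies in $\subtree(u_q)$, and since $\overline{\treesize}(s)$ agents sit inside this subtree while the remaining $n-\overline{\treesize}(s)$ sit outside, linearity of expectation gives
\[
\mathbb{E}[\text{agents using $s$}] \;=\; P\cdot(n-\overline{\treesize}(s)) + (1-P)\cdot\overline{\treesize}(s) \;=\; P\cdot(n - 2\overline{\treesize}(s)) + \overline{\treesize}(s).
\]
Because $\overline{\treesize}(s)\le n/2$, the coefficient of $P$ is non-negative, so it suffices to upper bound $P$.

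Next I would do a case analysis on whether $q\in X'$. If $q\notin X'$, then $\treesize(q)<n/4$; as $\treesize$ is monotone along the subtree ($\treesize(u)=\sum_{\tilde u\in\subtree(u)}\size(\tilde u)$), no node of $X$ lies in $\subtree(u_q)$, hence $P=0$ and the expected usage equals $\overline{\treesize}(s)$, matching the second entry of the max. If instead $q\in X'$, note first that $q\neq r$: $q$ is the endpoint of $s$ farther from $r$, and $q'\in V'$ lies strictly between $q$ and $r$ (or $q'=r$), so we always have $c(q)=n/4$. Observation~\ref{ob:tree_probability_extended} then gives $P=(\treesize(q)-n/4)/(n/2)$, and Lemma~\ref{le:twice} yields $\treesize(q)\le 2\overline{\treesize}(s)$; consistency with $q\in X'$ forces $\overline{\treesize}(s)\ge n/8$, so this upper bound on $P$ is non-negative. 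Substituting and simplifying:
\begin{align*}
P\cdot(n - 2\overline{\treesize}(s)) + \overline{\treesize}(s)
&\le \frac{(2\overline{\treesize}(s)-n/4)(n-2\overline{\treesize}(s))}{n/2} + \overline{\treesize}(s) \\
&= \frac{5n\,\overline{\treesize}(s) - 8(\overline{\treesize}(s))^{2} - n^{2}/2}{n} + \overline{\treesize}(s) \\
&= 5\overline{\treesize}(s) - 8n^{-1}(\overline{\treesize}(s))^{2} - n/2 + \overline{\treesize}(s) \\
&= 6\overline{\treesize}(s) - 8n^{-1}(\overline{\treesize}(s))^{2} - n/2,
\end{align*}
which matches the first entry of the max. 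Taking the maximum over the two cases yields the stated bound.

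The main technical obstacle is the bookkeeping: one must carefully distinguish $\treesize(q)$ (agents counted through their mediator's median) from $\overline{\treesize}(s)$ (actual agent locations below $q$), verify that $q\neq r$ so that $c(q)=n/4$ rather than $n/2$ in the formula of Observation~\ref{ob:tree_probability_extended}, and confirm that the case $q\notin X'$ really forces $P=0$ via monotonicity of $\treesize$. Once those points are nailed down, the arithmetic collapses exactly to the two entries of the max, and no case-within-a-case is needed.
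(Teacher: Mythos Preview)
Your proposal is correct and follows essentially the same approach as the paper: both arguments write the expected number of agents using $s$ as $P\cdot(n-2\overline{\treesize}(s))+\overline{\treesize}(s)$, split on whether $q\in X'$, invoke Observation~\ref{ob:tree_probability_extended} together with Lemma~\ref{le:twice} to bound $P$ in the first case, and observe $P=0$ in the second. The only cosmetic differences are that you front-load the expectation formula before the case split and add the (harmless but unnecessary) consistency remark $\overline{\treesize}(s)\ge n/8$.
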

\begin{proof}
Clearly $q \neq r$, and therefore, $c(q) = n/4$. We have to distinguish between two cases. The first case is $q \in X'$. In this case, by Observation~\ref{ob:tree_probability_extended}, the probability that TRM will locate the facility below $s$ in the tree is $(\treesize(q) - c(q)) / (n/2)$. If the facility is indeed located below $s$, then the number of agents that will use $s$ under TRM is $n - \overline{\treesize}(s)$. However, if the facility is located above $s$, then $\overline{\treesize}(s)$ agents will use $s$. In conclusion, the expected number of agents that use $s$ under TRM is:
\begin{align*}
    &
    \frac{\treesize(q)-c(q)}{n/2} \cdot \left(n - \overline{\treesize}(s)\right) + \left(1 - \frac{\treesize(q)-c(q)}{n/2}\right) \cdot \overline{\treesize}(s) \\
    ={} &
    \frac{\treesize(q)-c(q)}{n/2} \cdot \left(n - 2 \cdot \overline{\treesize}(s)\right) + \overline{\treesize}(s) \\
    \leq{} &
    \frac{2 \cdot \overline{\treesize}(s)-c(q)}{n/2} \cdot \left(n - 2 \cdot \overline{\treesize}(s)\right) + \overline{\treesize(s)}\\
    ={} &
    \frac{\left(2 \cdot \overline{\treesize}(s)-n/4\right) \cdot \left(n - 2 \cdot \overline{\treesize}(s)\right)  + \overline{\treesize(s)} \cdot (n/2)}{n/2}\\
    ={} &
    \frac{3n \cdot \overline{\treesize(s)} - 4 \cdot (\overline{\treesize}(s))^2 - n^2/4}{n/2}
    =
    6 \cdot \overline{\treesize(s)} - 8n^{-1} \cdot (\overline{\treesize}(s))^2 - n/2
    \enspace.
\end{align*}

Let us now consider the second case, \ie, the case that $q \not \in X'$. Notice that in this case $\treesize(q) < n/4$, and therefore, no point of $X'$ is located in the subtree rooted at $q$. This implies that the probability that the facility will be located below $q$ is $0$, \ie, exactly $\overline{\treesize}(s)$ agents use $s$ under TRM.
\end{proof}

On the other hand, we can also characterize the number of agents using $s$ under the optimal solution.

\begin{lemma} \label{le:opt_pays}
The number of agents using $s$ under optimal solution is: $\overline{\treesize}(s)$.
\end{lemma}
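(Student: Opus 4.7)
The plan is to case-split on whether the optimal facility location $F$ lies in $\subtree(q)$ or not, where $q, q'$ are the endpoints of $s$ with $q'$ the one closer to $r$. By Observation~\ref{ob:optimal_in_u_prime} we may take $F \in V'$, and by Observation~\ref{ob:median_optimal} $F$ is a median of the multi-set of true agent locations $\{t_{i,j}\}$. Since $s$ connects adjacent points of $V'$, any $F \in V'$ falls on exactly one side of $s$ (allowing the cases $F = q$ or $F = q'$).

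In the first case, $F \notin \subtree(q)$: the path from any agent located in $\subtree(q)$ to $F$ must leave the subtree, and $q'$ is the unique exit, so the path traverses $s$; conversely the path from any agent located outside $\subtree(q)$ to $F$ never enters $\subtree(q)$, hence avoids $s$. This gives exactly $\overline{\treesize}(s)$ agents using $s$, matching the claim.

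In the second case, $F \in \subtree(q)$ (possibly $F = q$): the $n - \overline{\treesize}(s)$ agents located outside $\subtree(q)$ can be pairwise connected via paths routed through $q'$, each of which avoids $F$. The median characterization therefore forces $n - \overline{\treesize}(s) \leq n/2$, and combining with the subsection hypothesis $\overline{\treesize}(s) \leq n/2$ yields equality $\overline{\treesize}(s) = n/2$. Symmetrically to the first case, the agents using $s$ are exactly those outside $\subtree(q)$, numbering $n - \overline{\treesize}(s) = n/2 = \overline{\treesize}(s)$.

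The only subtle point is the boundary case $\overline{\treesize}(s) = n/2$, where the median condition does not by itself pin $F$ to one side of $s$; fortunately the count on both sides equals $n/2 = \overline{\treesize}(s)$, so the lemma holds regardless of which admissible optimal $F$ is chosen. No deeper technical obstacle is present; the argument rests entirely on the median characterization of optimal facility locations from Observation~\ref{ob:median_optimal}.
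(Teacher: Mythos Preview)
Your proposal is correct and follows essentially the same approach as the paper's proof: both case-split on whether the optimal location $F$ lies above or below $s$, handle the ``above'' case by direct counting, and in the ``below'' case invoke the median property of $F$ to deduce $\overline{\treesize}(s) \geq n/2$, which together with the standing hypothesis $\overline{\treesize}(s) \leq n/2$ forces equality at $n/2$. Your write-up simply spells out in slightly more detail why the agents outside the subtree form a set connected by paths avoiding $F$, and explicitly addresses the boundary case $\overline{\treesize}(s) = n/2$.
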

\begin{proof}
If $F$ is above $s$ in the tree, then the number of agents using $s$ under the optimal solution is clearly $\overline{\treesize}(s)$.

Otherwise, $\overline{\treesize}(s) \geq n/2$ because $F$ is a median of $A$. Hence, the number of agents using $s$ under the optimal solution is $n - \overline{\treesize}(s) \leq n/2$. Recall that we assumed $\overline{\treesize}(s) \leq n/2$, thus, we must have: $\overline{\treesize}(s) = n/2$, which implies: $n - \overline{\treesize}(s) = \overline{\treesize}(s) = n/2$.
\end{proof}

We can now conclude the proof for the case $\overline{\treesize}(s) \leq n/2$.

\begin{corollary} \label{co:segment_ratio}
The number of agents using $s$ under TRM is at most twice the number of agents using $s$ under the optimal solution.
\end{corollary}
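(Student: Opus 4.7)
My plan is to combine Lemma~\ref{le:TRM_pays} and Lemma~\ref{le:opt_pays} directly and reduce the corollary to a simple quadratic inequality in $\overline{\treesize}(s)$. Since we are in the case $\overline{\treesize}(s) \leq n/2$, Lemma~\ref{le:opt_pays} tells us that the optimal solution charges exactly $\overline{\treesize}(s)$ agents to $s$, while Lemma~\ref{le:TRM_pays} gives an upper bound of $\max\{6\overline{\treesize}(s) - 8n^{-1}(\overline{\treesize}(s))^2 - n/2,\ \overline{\treesize}(s)\}$ on the expected number of agents charged to $s$ by TRM. The corollary will thus follow once I verify that this maximum is at most $2\overline{\treesize}(s)$.

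The second branch of the maximum is trivial since $\overline{\treesize}(s) \leq 2\overline{\treesize}(s)$. For the first branch, writing $x = \overline{\treesize}(s)$, I need to show that $6x - 8x^{2}/n - n/2 \leq 2x$, which rearranges to $16x^{2}/n - 8x + n \geq 0$, i.e., $(4x/\sqrt{n} - \sqrt{n})^{2} \geq 0$. This inequality holds unconditionally as a perfect square (with equality exactly when $x = n/4$, which is consistent with the critical threshold appearing in the definition of $X'$).

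I therefore expect no genuine obstacle in this step, as all of the work has been carried out in the preceding lemmata; the only thing to do is to assemble the pieces and check the algebraic inequality. The main subtlety, already handled inside the proof of Lemma~\ref{le:TRM_pays}, is the need to split on whether $q \in X'$ or not, and to invoke Lemma~\ref{le:twice} to bound $\treesize(q)$ by $2\overline{\treesize}(s)$ in the first case; these are exactly what ensure that the upper bound on TRM's expected charge to $s$ admits the clean quadratic form above.
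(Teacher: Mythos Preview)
Your proposal is correct and follows essentially the same approach as the paper: both combine Lemma~\ref{le:TRM_pays} and Lemma~\ref{le:opt_pays} and then verify the resulting quadratic inequality in $\overline{\treesize}(s)$. The only (cosmetic) difference is that the paper bounds the ratio $6 - 8n^{-1}\overline{\treesize}(s) - n/(2\overline{\treesize}(s))$ by noting it is maximized at $\overline{\treesize}(s)=n/4$ with value $2$, whereas you rearrange and complete the square to get $(4x-n)^2 \ge 0$; these are equivalent verifications of the same inequality, with equality at the same point $x=n/4$.
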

\begin{proof}
If $\overline{\treesize}(s)$ agents or less use $s$, in expectation, under TRM, then there is nothing to prove. Otherwise, by Lemmata~\ref{le:TRM_pays} and \ref{le:opt_pays}, the ratio between the number of agents using $s$ under TRM and under the optimal solution is at most:
\[
    \frac{6 \cdot \overline{\treesize(s)} - 8n^{-1} \cdot (\overline{\treesize}(s))^2 - n/2}{\overline{\treesize}(s)}
    =
    6 - 8n^{-1} \cdot \overline{\treesize}(s) - \frac{n}{2 \cdot \overline{\treesize}(s)}
    \enspace.
\]
It can be easily checked that the last expression is maximized for $\overline{\treesize}(s) = n/4$, and the value of this maximum is:
\[
    6 - 8n^{-1} \cdot (n/4) - \frac{n}{2 \cdot (n/4)}
    =
    6 - 2 - 2
    =
    2
    \enspace.
    \qedhere
\]
\end{proof}

\subsubsection{The case of \texorpdfstring{$\overline{\treesize}(s) \geq n/2$}{treesize(s) >= n/2}}

Let us now consider the case $\overline{\treesize}(s) \geq n/2$.

\begin{lemma} \label{le:twice_case_2}
$n - \treesize(q) \leq 2 \cdot (n - \overline{\treesize}(s))$.
\end{lemma}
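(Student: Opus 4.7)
The plan is to argue in complete parallel with Lemma~\ref{le:twice}, exchanging the roles of ``inside'' and ``outside'' the subtree rooted at $q$. The first step is to translate the hypothesis $\ell_i \notin \subtree(u_q)$ into the statement that the path from $\ell_i$ to $r$ does not pass through $q$ (which is the sense of being outside the subtree of $q$ used by $\overline{\treesize}(s)$). When $q \in V$ this is immediate since then $u_q = q$. When $q \notin V$, the only points of the subtree rooted at $q$ that fail to lie in $\subtree(u_q)$ are points on the open segment from $q$ to $u_q$; by the definition of $u_q$ as the first $V$-vertex encountered when moving from $q$ away from $r$, this segment contains no vertex of $V$, but after part~$1$ of TRM every median $\ell_i$ is a vertex of $V$. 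Hence $\ell_i \notin \subtree(u_q)$ forces $\ell_i$ to lie outside the subtree rooted at $q$.

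Next, I would apply the median property directly. Fix any mediator $d_i$ with $\ell_i$ outside the subtree rooted at $q$. The members of $A_i$ whose locations lie in that subtree can all be connected to each other by paths staying entirely inside the subtree, since a subtree of a rooted tree is a connected subset of the metric; none of these paths passes through $\ell_i$. The definition of a (weighted) median then forces the number of such agents to be at most $|A_i|/2$, and hence at least $|A_i|/2$ agents of $A_i$ are located outside the subtree rooted at $q$.

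Summing this lower bound over all mediators satisfying $\ell_i \notin \subtree(u_q)$ yields at least $(n - \treesize(q))/2$ agents whose true locations lie outside the subtree rooted at $q$. This quantity is bounded above by the total $n - \overline{\treesize}(s)$ of such agents, so $(n - \treesize(q))/2 \leq n - \overline{\treesize}(s)$, which rearranges to the claim. The only step deserving any care is the equivalence between $\ell_i \notin \subtree(u_q)$ and $\ell_i$ lying outside the subtree rooted at $q$; once this is set up, the rest is a mechanical symmetric copy of the Lemma~\ref{le:twice} argument, and in particular the subsection hypothesis $\overline{\treesize}(s) \geq n/2$ is not actually needed for this specific lemma (it will enter elsewhere, in the comparison between the TRM cost and the optimum on the segment $s$).
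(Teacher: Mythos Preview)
Your argument is correct and matches the paper's own proof, which simply reads ``The proof of this lemma is analogous to the proof of Lemma~\ref{le:twice}.'' You have faithfully carried out that analogy. The one point where your write-up adds value over a bare ``symmetric'' is your explicit treatment of the case $q \notin V$: for Lemma~\ref{le:twice} one only needs the easy inclusion $\subtree(u_q) \subseteq \{\text{subtree rooted at } q\}$, whereas for this complementary direction one needs that any $\ell_i$ outside $\subtree(u_q)$ is also outside the subtree rooted at $q$, which indeed relies on $\ell_i \in V$ after Part~1 of TRM. That is exactly the right detail to check, and your observation that the hypothesis $\overline{\treesize}(s) \geq n/2$ plays no role here is also accurate.
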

\begin{proof}
The proof of this lemma is analogous to the proof of Lemma~\ref{le:twice}.
\end{proof}

We can now upper bound the expected number of agents using $s$ under TRM.

\begin{lemma} \label{le:TRM_pays_case_2}
If $\overline{\treesize}(s) \geq 5n/8$, then the expected number of agents using $s$ under TRM is upper bounded by: $10 \cdot \overline{\treesize}(s) - 8n^{-1} \cdot (\overline{\treesize}(s))^2 - 5n/2$. Otherwise, it is upper bounded by $\max\{10 \cdot \overline{\treesize}(s) - 8n^{-1} \cdot (\overline{\treesize}(s))^2 - 5n/2, \overline{\treesize}(s)\}$.
\end{lemma}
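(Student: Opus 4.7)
The plan is to mirror the argument of Lemma~\ref{le:TRM_pays} but using the lower bound on $\treesize(q)$ from Lemma~\ref{le:twice_case_2} in place of the upper bound from Lemma~\ref{le:twice}. Let $P$ denote the probability that TRM places the facility in $\subtree(q)$; then the expected number of agents using $s$ under TRM equals
\[
P \cdot (n - \overline{\treesize}(s)) + (1 - P) \cdot \overline{\treesize}(s) \;=\; \overline{\treesize}(s) + P \cdot (n - 2\overline{\treesize}(s)).
\]
Since $\overline{\treesize}(s) \geq n/2$, the coefficient $(n - 2\overline{\treesize}(s))$ is nonpositive, so to upper bound the expectation I need a \emph{lower} bound on $P$ (in contrast to the previous lemma, where an upper bound on $P$ was what mattered).

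I would split on whether $q \in X'$. If $q \in X'$, Observation~\ref{ob:tree_probability_extended} gives $P = (\treesize(q) - n/4)/(n/2)$, and combining this with Lemma~\ref{le:twice_case_2} yields
\[
P \;\geq\; \frac{(2\overline{\treesize}(s) - n) - n/4}{n/2} \;=\; \frac{8\overline{\treesize}(s) - 5n}{2n} \;=:\; L.
\]
If $q \notin X'$, then $\treesize(q) < n/4$, and since $\treesize$ is monotone nonincreasing down the tree, no descendant of $u_q$ lies in $X$, forcing $P = 0$.

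For the main case $\overline{\treesize}(s) \geq 5n/8$, the inequality $2\overline{\treesize}(s) - n \geq n/4$ forces $q \in X'$ and $L \geq 0$. Because $P \geq L$ and $(n - 2\overline{\treesize}(s)) \leq 0$, the product $(P - L)(n - 2\overline{\treesize}(s))$ is nonpositive, so $P(n - 2\overline{\treesize}(s)) \leq L(n - 2\overline{\treesize}(s))$. Substituting and expanding $\overline{\treesize}(s) + L(n - 2\overline{\treesize}(s))$ gives exactly $10\overline{\treesize}(s) - 8n^{-1}(\overline{\treesize}(s))^2 - 5n/2$, as required.

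For the secondary case $\overline{\treesize}(s) < 5n/8$, the lower bound $L$ may be negative, but $P \geq L$ still holds trivially: either $q \in X'$ and Lemma~\ref{le:twice_case_2} applies as above, or $q \notin X'$ and $P = 0 \geq L$. The same substitution therefore produces the first term in the stated max, while the subcase $q \notin X'$ additionally gives the direct equality $\overline{\treesize}(s)$, yielding the second term. The main point needing care is the sign bookkeeping in the substitution once $L$ is allowed to be negative; using only $P \geq L$ (rather than $L \geq 0$) keeps the argument uniform across both regimes, and this is the step I expect to double-check most carefully.
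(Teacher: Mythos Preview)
Your proposal is correct and follows essentially the same route as the paper: split on whether $q \in X'$, apply Observation~\ref{ob:tree_probability_extended} to get $P = (\treesize(q)-n/4)/(n/2)$, and plug in the lower bound $\treesize(q) \geq 2\overline{\treesize}(s)-n$ from Lemma~\ref{le:twice_case_2}. Your uniform framing via $P \geq L$ (valid in both subcases since $L<0$ when $q\notin X'$ and $\overline{\treesize}(s)<5n/8$) is a mild streamlining of the paper's case analysis, which instead argues directly that $q\notin X'$ forces $\overline{\treesize}(s)<5n/8$; the content is the same.
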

\begin{proof}
Clearly $q \neq r$, and therefore, $c(q) = n/4$. We have to distinguish between two cases. The first case is that $q \in X'$. In this case, by Observation~\ref{ob:tree_probability_extended}, the probability that TRM locates the facility below $s$ in the tree is $(\treesize(q) - c(q)) / (n/2)$. If the facility is indeed below $s$, then $n - \overline{\treesize}(s)$ agents use $s$ under TRM. However, if the facility is above $s$, then $\overline{\treesize}(s)$ agents use $s$ under TRM. Hence, the expected number of agents using $s$ under TRM is:
\begin{align*}
    &
    \frac{\treesize(q)-c(q)}{n/2} \cdot \left(n - \overline{\treesize}(s)\right) + \left(1 - \frac{\treesize(q)-c(q)}{n/2}\right) \cdot \overline{\treesize}(s) \\
    ={} &
    \frac{\treesize(q)-c(q)}{n/2} \cdot \left(n - 2 \cdot \overline{\treesize}(s)\right) + \overline{\treesize}(s)\\
    ={} &
    \frac{(c(q) - n) + (n - \treesize(q))}{n/2} \cdot \left(2 \cdot \overline{\treesize}(s) - n\right) + \overline{\treesize}(s)\\
    \leq{} &
    \frac{(c(q) - n) + 2(n - \overline{\treesize}(s))}{n/2} \cdot \left(2 \cdot \overline{\treesize}(s) - n\right) + \overline{\treesize}(s)\\
    ={} &
    \frac{\left(5n/4 - 2 \cdot \overline{\treesize}(s)\right) \cdot \left(2 \cdot \overline{\treesize}(s) - n\right) + \overline{\treesize}(s) \cdot (n/2)}{n/2}\\
    ={} &
    \frac{5n \cdot \overline{\treesize}(s) - 4 \cdot (\overline{\treesize}(s))^2 - 5n^2/4}{n/2}\\
    ={} &
    10 \cdot \overline{\treesize}(s) - 8n^{-1} \cdot (\overline{\treesize}(s))^2 - 5n/2 \enspace.
\end{align*}

Let us now consider the second case, \ie, the case that $q \not \in X'$. Notice that in this case $\treesize(q) < n/4$, and therefore, no point of of $X'$ is located in the subtree rooted at $q$. Thus, the probability that the facility will be located below $s$ is $0$, and exactly $\overline{\treesize}(s)$ agents use $s$ under TRM. Moreover, in this case, by Lemma~\ref{le:twice_case_2}:
\[
    \overline{\treesize}(s)
    \leq
    n - \frac{n - \treesize(q)}{2}
    =
    \frac{n + \treesize(q)}{2}
    <
    \frac{5n}{8}
    \enspace.
    \qedhere
\]
\end{proof}

On the other hand, we can also characterize the number of agents using $s$ under the optimal solution.

\begin{lemma} \label{le:opt_pays_case_2}
The number of agents that use $s$ under the optimal solution is: $n - \overline{\treesize}(s)$.
\end{lemma}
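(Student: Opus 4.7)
The plan is to mirror the proof of Lemma~\ref{le:opt_pays} with the two sides of $s$ interchanged, exploiting the assumption $\overline{\treesize}(s) \geq n/2$ in place of $\overline{\treesize}(s) \leq n/2$. By Observation~\ref{ob:optimal_in_u_prime} the optimal facility location $F$ lies in $V'$, and therefore the path from every agent to $F$ is a union of segments of $S$ with disjoint interiors. In particular, a given agent $a_{i,j}$ uses $s$ under the optimal solution if and only if exactly one of $t_{i,j}$, $F$ lies in the subtree rooted at $q$ (with $q$ being the endpoint of $s$ farther from $r$).

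Hence I would split into two cases based on the position of $F$. First, if $F$ lies in the subtree rooted at $q$ (including $F=q$), then the agents that use $s$ are exactly the ones whose location is outside this subtree, which is $n-\overline{\treesize}(s)$ agents, as required. Second, if $F$ lies on the $r$-side of $s$ (i.e., either $F=q'$ or $F$ is strictly above $s$), then the agents using $s$ are precisely those in the subtree rooted at $q$, giving $\overline{\treesize}(s)$ agents; I then need to show that in this case $\overline{\treesize}(s) = n/2$, so that the count coincides with $n - \overline{\treesize}(s)$.

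To handle the second case, I invoke Observation~\ref{ob:median_optimal}: since $F$ is an optimal location, it is a median of the multi-set of agent positions. By the definition of a median, every sub multi-set of agents connected by paths avoiding $F$ has size at most $n/2$. The $\overline{\treesize}(s)$ agents lying in the subtree rooted at $q$ are connected to each other through $q$ (none of these paths passes through $F$, since $F$ lies on the opposite side of $s$), so $\overline{\treesize}(s) \leq n/2$. Combined with the standing hypothesis $\overline{\treesize}(s) \geq n/2$, this forces $\overline{\treesize}(s) = n/2 = n - \overline{\treesize}(s)$, completing the argument.

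The proof is essentially bookkeeping; there is no serious obstacle. The only point requiring a little care is verifying, in the second case, that the entire subtree-of-$q$ population can be grouped into paths that avoid $F$, which follows directly from the fact that $s$ separates $F$ from the interior of that subtree (and is the relevant translation of the corresponding step of Lemma~\ref{le:opt_pays}).
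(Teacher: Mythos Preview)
Your proposal is correct and is exactly the argument the paper intends: the paper's proof simply reads ``The proof of this lemma is analogous to the proof of Lemma~\ref{le:opt_pays}'', and your write-up carries out precisely that analogy, swapping the roles of the two sides of $s$ and using the standing hypothesis $\overline{\treesize}(s)\geq n/2$ in place of $\overline{\treesize}(s)\leq n/2$.
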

\begin{proof}
The proof of this lemma is analogous to the proof of Lemma~\ref{le:opt_pays}
\end{proof}

We can now conclude the proof for the case $\overline{\treesize}(s) \geq n/2$.

\begin{corollary} \label{co:segment_ratio_case_2}
The number of agents using $s$ under TRM is at most twice the number of agents using $s$ under the optimal solution.
\end{corollary}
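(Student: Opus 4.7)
The plan is to follow the template of the proof of Corollary~\ref{co:segment_ratio}, now adapted to the regime $\overline{\treesize}(s) \geq n/2$. By Lemma~\ref{le:opt_pays_case_2} the number of agents using $s$ under the optimal solution is $n - \overline{\treesize}(s)$, so it suffices to show that the bound on the expected number of agents using $s$ under TRM coming from Lemma~\ref{le:TRM_pays_case_2} never exceeds $2(n - \overline{\treesize}(s))$. Writing $t = \overline{\treesize}(s)$, I would handle the two expressions appearing inside the $\max$ of Lemma~\ref{le:TRM_pays_case_2} separately.

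When the operative bound is the trivial term $t$, Lemma~\ref{le:TRM_pays_case_2} guarantees that this can only occur when $t < 5n/8$. Since $5n/8 < 2n/3$, this gives $3t < 2n$, i.e., $t < 2(n - t)$, so the required inequality is immediate.

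When the operative bound is instead the quadratic expression $10t - 8 t^2/n - 5n/2$, I would verify directly that
\[
    10t - 8t^2/n - 5n/2 \leq 2(n - t).
\]
Rearranging and multiplying through by $n/2 > 0$, this is equivalent to $(2t - 3n/2)^2 \geq 0$, which is trivially true. Equality holds at $t = 3n/4$, recovering the tight ratio of $2$ predicted by the theorem.

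The entire argument reduces to a one-variable algebraic check, so I do not anticipate any substantive obstacle. The only bit of care is the case split between the two expressions inside the $\max$ of Lemma~\ref{le:TRM_pays_case_2}, but that is handled immediately by the constraint $t < 5n/8$ that accompanies the trivial bound.
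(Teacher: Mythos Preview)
Your proof is correct and follows essentially the same approach as the paper: both split into the case where the operative bound from Lemma~\ref{le:TRM_pays_case_2} is the trivial term $t$ (handled via $t<5n/8$) and the case where it is the quadratic expression, then verify an elementary one-variable inequality. The only cosmetic difference is that you complete the square to get $(2t-3n/2)^2\ge 0$, whereas the paper rewrites the ratio in terms of $n-\overline{\treesize}(s)$ and checks that the resulting expression is maximized at $\overline{\treesize}(s)=3n/4$ with value $2$.
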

\begin{proof}
If $10 \cdot \overline{\treesize}(s) - 8n^{-1} \cdot (\overline{\treesize}(s))^2 - 5n/2 \leq \overline{\treesize}(s) < 5n/8$, then $n - \overline{\treesize}(s) > 3n/8$ agents use $s$ under the optimal solution, while at most $\overline{\treesize}(s) < 5n/8$ use it, in expectation, under TRM. The ratio between these values is $5/3 < 2$.

Otherwise, by Lemmata~\ref{le:TRM_pays_case_2} and \ref{le:opt_pays_case_2}, the ratio between the number of agents using $s$ under TRM and under the optimal solution is at most:
\begin{align*} \label{eq:ratio_TRM_case_2}
    &
    \frac{10 \cdot \overline{\treesize}(s) - 8n^{-1} \cdot (\overline{\treesize}(s))^2 - 5n/2}{n - \overline{\treesize}(s)}\\
    ={} &
    \frac{10 \cdot [n - (n -\overline{\treesize}(s))] - 8n^{-1} \cdot [n - (n - \overline{\treesize}(s))]^2 - 5n/2}{n - \overline{\treesize}(s)}\\
    ={} &
    \frac{6 \cdot (n -\overline{\treesize}(s)) -8n^{-1} \cdot (n -\overline{\treesize}(s))^2 - n/2}{n - \overline{\treesize}(s)}\\
    ={} &
    6 -8n^{-1} \cdot (n -\overline{\treesize}(s)) - \frac{n}{2(n - \overline{\treesize}(s))}
    \enspace.
\end{align*}
It can be easily checked that the last expression is maximized for $\overline{\treesize}(s) = 3n/4$, and the value of this maximum is:
\[
    6 - 8n^{-1} \cdot (n - 3n/4) - \frac{n}{2 \cdot (n - 3n/4)}
    =
    6 - 2 - 2
    =
    2
    \enspace.
    \qedhere
\]
\end{proof}

Combing the results we proved for the two cases, we get the competitive ratio of TRM.

\begin{corollary}
The Tree Randomized Mechanism (TRM) is $2$-competitive.
\end{corollary}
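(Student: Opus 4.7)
The plan is to decompose the proof into four ingredients: validity of the probability distribution $\{p(u)\}$, the $2$-competitive ratio, mediator-side IC, and agent-side IC.

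\textbf{Validity of the distribution.} I would first show $p(u)\in[0,1]$ for every $u\in X$ and $\sum_{u\in X} p(u)=1$. For the upper bound on $p(u)$ at a non-root $u$, the choice of $r$ as the weighted median of $L$ forces $\treesize(u)\leq n/2$, so $(\treesize(u)-n/4)/(n/2)\leq 1/2$, and the subtracted term in $p(u)$ is non-negative. For the lower bound $p(u)\geq 0$, the key observation is that children $u'\in X\cap\children(u)$ contribute $\treesize(u')-n/4$ each, whose sum is at most $\treesize(u)-n/4$ because $\treesize(u)=\size(u)+\sum_{u'\in\children(u)}\treesize(u')$ and each summand not in $X$ is below $n/4$. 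The root case is analogous with $c(r)=n/2$. The identity $\sum_{u\in X}p(u)=1$ follows by an induction on $|X\cap\subtree(u)|$ which establishes the more general telescoping identity $\sum_{u'\in X\cap\subtree(u)}p(u')=(\treesize(u)-c(u))/(n/2)$.

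\textbf{Competitive ratio.} I would use a segment-charging argument. Let $V'$ contain $V$ together with all agent locations, and let $S$ be the set of segments obtained by cutting the tree at points of $V'$. After arguing (via a standard median-shift argument) that some optimum $F$ lies in $V'$, both $\mathrm{OPT}$ and the expected cost of TRM decompose as $\sum_{s\in S}|s|\cdot(\text{\# agents using }s)$. It thus suffices to prove that for every segment $s$ connecting a lower point $q$ to its parent $q'$, the expected number of agents using $s$ under TRM is at most twice the corresponding count under OPT. Let $\overline{\treesize}(s)$ denote the number of agents strictly below $s$. The crucial inequality is $\treesize(q)\leq 2\,\overline{\treesize}(s)$ when $\overline{\treesize}(s)\leq n/2$ (and its symmetric counterpart $n-\treesize(q)\leq 2(n-\overline{\treesize}(s))$ otherwise), which holds because any mediator whose median lies in $\subtree(q)$ must have at least half of his agents below $s$. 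Combined with the extended identity $\sum_{u'\in\subtree(q)}p(u')=(\treesize(q)-c(q))/(n/2)$, this bounds the probability that TRM places the facility below $s$ and thus the expected count on $s$. A short calculation then shows the resulting ratio is maximized at $\overline{\treesize}(s)\in\{n/4,3n/4\}$, where it is exactly $2$.

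\textbf{Two-sided IC.} Since TRM is mediator-based, any deviation of mediator $d_i$ or of any agent $a_{i,j}$ can only affect the distribution through the single point $\ell_i$ (recall $\ell_i$ was defined via Observation~\ref{ob:closest_median_exists}). I would therefore analyze how the distribution $\{p(u)\}_{u\in X}$ changes when $\ell_i$ shifts along the tree to some $\ell'_i$. The plan is to build a monotonicity statement: as $\ell_i$ moves toward $\ell'_i$, the probability mass assigned to each half-tree obtained by cutting the tree at an arbitrary separating point only moves in the direction from $\ell_i$ to $\ell'_i$ (intuitively, promoting/demoting $d_i$'s weight up or down the rooted tree changes $\treesize$ values monotonically along the unique path between $\ell_i$ and $\ell'_i$, and the changes in $r$ and in the rooting are consistent with this direction, analogously to Observation~\ref{ob:point_moves}). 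Using this, and the fact that the cost of any player as a function of the facility location, restricted to any single tree path, is piecewise linear with non-decreasing slope away from the player's own optimum (exactly as in the WMM proof), I would conclude by linearity of expectation that the player's expected cost is non-decreasing under any such deviation. Specializing to $d_i$ gives mediator-side IC; for agent $a_{i,j}$ the same argument applies after noting that truthful $d_i$ moves $\ell_i$ only along the path between $t_{i,j}$ and any fake report $t'_{i,j}$, again by Observation~\ref{ob:point_moves}.

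The main obstacle will be the monotonicity analysis for the IC part: unlike WMM, where the output is a deterministic weighted median to which Observation~\ref{ob:point_moves} applies directly, here a deviation in $\ell_i$ can simultaneously shift the root $r$, rearrange $X$, and reshuffle mass among many $p(u)$'s. Carrying out the monotonicity carefully along the path from $\ell_i$ to $\ell'_i$, while handling the special case where $r$ itself moves and the rooting changes, is the delicate step; the competitive-ratio analysis, though computational, reduces to the clean two-case bound described above.
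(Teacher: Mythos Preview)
Your competitive-ratio section is essentially the paper's argument: cut the tree at $V'=V\cup\{\text{agent locations}\}$, show an optimum lies in $V'$, prove $\treesize(q)\le 2\,\overline{\treesize}(s)$ (and symmetrically $n-\treesize(q)\le 2(n-\overline{\treesize}(s))$) via the half-the-agents observation for any mediator whose median falls in $\subtree(q)$, combine this with the telescoping identity $\sum_{u'\in\subtree(q)}p(u')=(\treesize(q)-c(q))/(n/2)$ to bound the mass below $s$, and verify the resulting per-segment ratio is maximized at $\overline{\treesize}(s)\in\{n/4,3n/4\}$ where it equals $2$. Note, however, that the corollary as stated asserts only $2$-competitiveness; your validity section is a prerequisite the paper establishes separately beforehand, and your two IC sections belong to the surrounding theorem, not to this corollary.
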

\begin{proof}
Corollaries~\ref{co:segment_ratio} and \ref{co:segment_ratio_case_2} prove that the expected number of agents using segment $s$ under TRM is always at most double the number of such agents under the optimal solution. Hence, the expected contribution of $s$ to the cost of TRM is always at most double its contribution to the cost of the optimal solution. The corollary now follows if we recall that $s$ was chosen as an arbitrary segment of $S$, and that the entire cost of the optimal solution and TRM is contributed by segments of $S$.
\end{proof}

\subsection{Proof: TRM is Two-Sided IC} \label{th:truthfulness_general}

In this subsection we prove that TRM is Two-Sided IC. Intuitively, the proof has three steps.
\begin{compactitem}
    \item Showing that when a mediator deviates the ``probability mass'' of the facility location follows the median of the locations reported by the mediator. More specifically, if mediator $d$ deviates and reports a set of agent locations whose median is $\hat{\ell}$ instead of the true median $\ell$, then the ``probability mass'' moves towards $\hat{\ell}$ inside the path connecting $\ell$ and $\hat{\ell}$. Outside this path, the probabilities of nodes to become the facility location is unaffected. This step consists of Lemma~\ref{le:short_deviation} and Corollary~\ref{co:long_deviation}.
    \item Proving (using the first step) that whenever an agent located at $t$ deviates and reports a location $\hat{t} \neq t$, then the median of the locations reported by its mediator can only go further away from $t$, and therefore, the ``probability mass'' of the facility location also goes away from $t$. This step is achieved by Lemma~\ref{le:TRM_truthful}.
    \item Proving (using the first step) that whenever a mediator deviates and reports a set of agent locations whose median is $\hat{\ell}$ instead of the true median $\ell$, then the ``probability mass'' is pushed away from $\ell$, and therefore, things become worse for the agents of the deviating mediator. This step is achieved by Lemma~\ref{le:TRM_mediators_truthful}.
\end{compactitem}

Recall that $V$ is the set of nodes of the tree produced by the first part of TRM. Let us begin with the first step.

\begin{lemma} \label{le:short_deviation}
Assume mediator $d_i$ reports one of two multi-sets $S_i$ and $\hat{S}_i$ of agent locations. Let $u$ and $\hat{u}$ be the medians of $S_i$ and $\hat{S}_i$ closest to $z$, respectively. Assume that the path between $u$ and $\hat{u}$ goes through no other point of $V$. Let $p(v), X, V$ and $\hat{p}(v), \hat{X}, \hat{V}$ denote the value of $p(v)$, the set $X$ and the set $V$ given that $d_i$ reports $S_i$ and $\hat{S}_i$, respectively.
Also, let $P = \{u, \hat{u}\}$, then:
\begin{compactitem}
    \item For a node $v \in V \setminus P$, $p(v) = \hat{p}(v)$.
    \item $p(\hat{u}) \leq \hat{p}(\hat{u})$.
    \item $p(u) + p(\hat{u}) = \hat{p}(u) + \hat{p}(\hat{u})$.
\end{compactitem}
\end{lemma}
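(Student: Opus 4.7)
The plan is to show that moving mediator $d_i$'s reported median from $u$ to $\hat u$ affects the function $\treesize$ at exactly one vertex, namely $\hat u$, from which the three claims follow.

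First, dispose of the trivial case $u = \hat u$, in which $P = \{u\}$ and all three claims reduce to identities. For $u \neq \hat u$, I reduce to a common root: shifting $d_i$'s median from $u$ to $\hat u$ corresponds to moving weight $n_i$ in the multi-set $L$ from $u$ to $\hat u$, so by Observation~\ref{ob:point_moves}, whenever $r \neq \hat r$ both lie on the closed path from $u$ to $\hat u$, with $r$ the one closer to $u$. Since $r \in V$ and the hypothesis excludes other points of $V$ from this path, one obtains either $r = \hat r$ or $r = u$; the latter is a separate edge case. In the generic case $r = \hat r$, I work in the common refinement $\tilde V$ obtained by splitting edges at every point of $V \cup \hat V$, rooted at $r$, and write $T(\cdot)$ and $\hat T(\cdot)$ for $\treesize$ under the two reports.

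In $\tilde V$, the hypothesis forces $u$ and $\hat u$ to be joined by a single edge, and WLOG $\hat u$ is a child of $u$ in the rooted tree. For any vertex $v$, the indicator $[u \in \subtree(v)]$ is at most $[\hat u \in \subtree(v)]$ because every ancestor of $u$ is also an ancestor of $\hat u$; the two indicators disagree only when $v = \hat u$. Since the deviation subtracts $n_i$ from $\size(u)$ and adds $n_i$ to $\size(\hat u)$ while leaving every other $\size$-value fixed, summing over subtrees yields $\hat T(v) = T(v)$ for every $v \neq \hat u$ and $\hat T(\hat u) = T(\hat u) + n_i$.

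I then derive the three conclusions. For $v \in V \setminus \{u, \hat u\}$, the formula for $p(v)$ depends only on $\treesize$ at $v$ and at its children in $X$; since $\hat u$ is a child only of $u$, none of these quantities change, so $p(v) = \hat p(v)$. The inequality $p(\hat u) \leq \hat p(\hat u)$ follows by direct substitution: the children of $\hat u$ have unchanged $\treesize$ and hence unchanged $X$-membership, so $\hat p(\hat u) - p(\hat u) = n_i / (n/2) \geq 0$ when $\hat u \in X$, while in the boundary case $\hat u \in \hat X \setminus X$ one has $\hat p(\hat u) \geq 0 = p(\hat u)$. Finally, the equality $p(u) + p(\hat u) = \hat p(u) + \hat p(\hat u)$ is immediate from the first conclusion combined with the normalisation $\sum_v p(v) = 1 = \sum_v \hat p(v)$ given by Lemma~\ref{le:tree_probability} at the root. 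The main obstacle throughout is the bookkeeping between $V$, $\hat V$, and $\tilde V$: one must check that inserting an internal-edge vertex with $\size$-value zero into $\tilde V$ leaves every probability invariant (its contribution cancels in the formula for its parent's $p$-value), and that the degenerate case $r = u$ with $\hat r \neq r$ is consistent with all three claims.
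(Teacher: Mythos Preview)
Your argument for the case $r=\hat r$ is correct and is essentially the paper's first case, though your formulation via ``only $\treesize(\hat u)$ changes, by $+n_i$'' is tidier than the paper's ancestor/descendant bookkeeping. The WLOG ``$\hat u$ is a child of $u$'' is legitimate once you work in the common refinement $\tilde V$, because swapping $(S_i,\hat S_i)$ swaps $(p,\hat p)$ and the third bullet then converts $\hat p(u)\le p(u)$ back into $p(\hat u)\le \hat p(\hat u)$.

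The genuine gap is the root-change case. You correctly deduce from Observation~\ref{ob:point_moves} that if $r\neq\hat r$ then $r=u$ (and in fact $\hat r=\hat u$), but you then only assert that this ``degenerate case \ldots\ is consistent with all three claims'' without proving it. Your central device---``$\hat T(v)=T(v)$ for $v\neq\hat u$ and $\hat T(\hat u)=T(\hat u)+n_i$''---breaks down here: re-rooting from $u$ to $\hat u$ changes $c(u)$ from $n/2$ to $n/4$ and $c(\hat u)$ from $n/4$ to $n/2$, and it swaps which of $u,\hat u$ has the whole tree as its subtree, so both $\treesize(u)$ and $\treesize(\hat u)$ change in a way not captured by ``add $n_i$ at $\hat u$''. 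The first and third bullets do survive (for $v\notin\{u,\hat u\}$ the subtree, the children, and $c(v)$ are all unchanged by re-rooting at an adjacent vertex, and normalisation still gives the sum identity), but the inequality $p(\hat u)\le\hat p(\hat u)$ needs a separate computation. The paper carries this out directly using $\treesize(\hat u)=n-\widehat{\treesize}(u)-n_i$ together with $c(\hat u)=\hat c(u)=n/4$ and $\hat c(\hat u)=n/2$, obtaining $p(\hat u)=\hat p(\hat u)-2n_i/n$; some such calculation is needed, and your sketch does not supply it.
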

\begin{proof}
For every node $v \in V \cup \{\hat{u}\}$, let $\treesize(v), \subtree(v), c(v)$ and $\widehat{\treesize}(v)$, $\widehat{\subtree}(v), \hat{c}(v)$ denote the values of $\treesize(v), \subtree(v)$ and $c(v)$ given that $d_i$ reports $S_i$ and $\hat{S}_i$, respectively. If $u \not \in \hat{V}$, then we define $\widehat{\treesize}(u), \widehat{\subtree}(u)$ and $\hat{c}(u)$ like in Subsection~\ref{ssc:approximation_general}. Similarly, if $\hat{u} \not \in V$, then we define $\treesize(\hat{u}), \subtree(\hat{u})$ and $c(\hat{u})$ like in Subsection~\ref{ssc:approximation_general}. In the rest of the proof we distinguish between two cases.

\paragraph{First case:}
In the first case reporting either one of the multi-sets $S_i$ and $\hat{S}_i$ induces the same root $r$. Let us begin by proving that $p(v) = \hat{p}(v)$ for every point $v \in V \setminus P$. Let $a$ be the lowest common ancestor of $u$ and $\hat{u}$. Clearly, $a \in P$ since no other point of $V$ appears on the path between $u$ and $\hat{u}$. The subtree of every point $v \in V \setminus P$ which is not an ancestor of $a$ is the same (including the $\size$ of the nodes) regardless of which multi-set $d_i$ reports ($S_i$ or $\hat{S}_i$), and therefore, $p(v) = \hat{p}(v)$ because $p(v)$ depends only on the subtree of $v$ and on the question whether $v$ is the root. Consider now a point $v \in V \setminus P$ which is an ancestor of $a$, and let $v'$ be the child of $v$ that is an ancestor of $a$ when $d_i$ reports $S_i$ (possibly $v' = a$). Similarly, let $\hat{v}'$ be the child of $v$ that is an ancestor of $a$ when $d_i$ reports $\hat{S}_i$ (possibly $\hat{v}' = a$ and/or $v' = \hat{v}'$). Observe that the declaration of $d_i$ affects only the subtree of $a$, and therefore, does not affect any child of $v$ except for $v'$ and $\hat{v}'$. Moreover, either $v' = \hat{v}'$ or they are both in $P$, and thus, $\treesize(v') = \treesize(\hat{v}')$ and $v' \in X \Leftrightarrow \hat{v}' \in X$. Since these are the only two properties of $v'$ and $\hat{v}'$ used by the formula of $p(v)$, we get, once again, $p(v) = \hat{p}(v)$.

By Lemma~\ref{le:tree_probability} the sum of the probabilities of all points of $V$ is $1$. Thus, the change in $p(\hat{u})$ following the deviation plus the change in $p(u)$ must add up to $0$. Formally,
\[
    [\hat{p}(\hat{u}) - p(\hat{u})] + [\hat{p}(u) - p(u)] = 0
    \Rightarrow
    p(u) + p(\hat{u}) = \hat{p}(\hat{u}) + \hat{p}(u)
    \enspace.
\]

By the definition of $X$, if $a \not \in X$, then $\subtree(a) \cap X = \varnothing$ for the two declarations of $d_i$ we consider. Since both nodes of $P$ are in the subtree of $a$, we get in this case that every point $v \in P$ has $p(v) = \hat{p}(v) = 0$. Hence, the interesting case is when $a \in X$, which we assume next.

Assume $\hat{u} = a$. Notice that this implies $\hat{u} \in X$. We have three subcases to consider based on the membership of $u$ in $X$ and $\hat{X}$.
\begin{compactitem}
    \item If $u \not \in X$, then $u \not \in \hat{X}$ because the median of $S_i$ is in $u$, and the median of $\hat{S}_i$ is outside of $u$'s subtree. Hence, $p(u) = 0 = \hat{p}(u)$, which implies $p(\hat{u}) = \hat{p}(\hat{u})$, since we already proved $p(u) + p(\hat{u}) = \hat{p}(\hat{u}) + \hat{p}(u)$.
    \item If $u \in X$ and $u \not \in \hat{X}$, then $\hat{p}(u) = 0 \leq p(u)$, which implies $p(\hat{u}) \leq \hat{p}(\hat{u})$ (again, since we already proved $p(u) + p(\hat{u}) = \hat{p}(\hat{u}) + \hat{p}(u)$).
    \item If $u \in X$ and $u \in \hat{X}$, then recall that $\hat{u}$ is not in the subtree of $u$ and $c(u) = \hat{c}(u)$ because $u \neq r$. Thus, by applying Lemma~\ref{le:tree_probability} twice (once for $p$ and once for $\hat{p}$):
    \begin{align*}
        \hat{p}(u)
        ={} &
        \frac{\widehat{\treesize}(u) - \hat{c}(u)}{n/2} - \sum_{\substack{v \in \widehat{\subtree}(u) \\ v \neq u}} \hat{p}(v)\\
        ={} &
        \frac{\treesize(u) - n_i - c(u)}{n/2} - \sum_{\substack{v \in \subtree(u) \\ v \neq u}} p(v)
        =
        p(u) - \frac{2n_i}{n}
        \enspace,
    \end{align*}
    which implies $p(\hat{u}) < \hat{p}(\hat{u})$, and the proof continues as before.
\end{compactitem}

The case $u = a$ is analogous, and therefore, we omit it.

\paragraph{Second case:}
In the second case reporting each one of the multi-sets $S_i$ and $\hat{S}_i$ induces a different root node $r$. By Observation~\ref{ob:point_moves}, the root $r$ must move from $u$ to $\hat{u}$. Therefore, the deviation does not affect the subtree of every point $v \in V \setminus P$. Since $p(v)$ depends only on the subtree of $v$ and on the question whether $v$ is the root, we get that $p(v) = \hat{p}(v)$ for every point $v \in V \setminus P$.

Using the same argument as in the first case, we get from Lemma~\ref{le:tree_probability} that $p(u) + p(\hat{u}) = \hat{p}(\hat{u}) + \hat{p}(u)$. If $\hat{u} \not \in X$ then $p(\hat{u}) = 0 \leq \hat{p}(\hat{u})$, and we are done. Similarly, if $u \not \in \hat{X}$ then $\hat{p}(u) = 0 \leq p(u)$, which implies $p(\hat{u}) \leq \hat{p}(\hat{u})$ since we already proved $p(u) + p(\hat{u}) = \hat{p}(\hat{u}) + \hat{p}(u)$. Thus, the interesting case is when both $\hat{u} \in X$ and $u \in \hat{X}$, which imply also $u \in X$ and $\hat{u} \in \hat{X}$.

Notice that $\subtree(\hat{u}) = V \setminus  \widehat{\subtree}(u)$ and $V = \widehat{\subtree}(\hat{u}) $, thus
$\subtree(\hat{u}) = \widehat{\subtree}(\hat{u}) \setminus  \widehat{\subtree}(u)$.
Moreover, $\treesize(\hat{u}) = n - \widehat{\treesize}(u) - n_i$. Combining both observations:
\begin{align*}
    p(\hat{u})
    ={} &
    \frac{\treesize(\hat{u}) - c(\hat{u})}{n/2} - \sum_{\substack{v \in \subtree(\hat{u}) \\ v \neq \hat{u}}} p(v)\\
    \stackrel{(*)}{=}{} &
    \frac{\treesize(\hat{u}) - c(\hat{u})}{n/2} - \sum_{\substack{v \in \widehat{\subtree}(\hat{u}) \\ v \neq \hat{u}}} \hat{p}(v) + \sum_{v \in \widehat{\subtree}(u)} \hat{p}(v)\\
    ={} &
    \frac{[n - \widehat{\treesize(u)} - n_i] - c(\hat{u})}{n/2} - \sum_{\substack{v \in \widehat{\subtree}(\hat{u}) \\ v \neq \hat{u}}} \hat{p}(v) + \frac{\widehat{\treesize}(u) - \hat{c}(u)}{n/2}\\
    ={} &
    \frac{n - n_i - c(\hat{u}) - \hat{c}(u)}{n/2} - \sum_{\substack{v \in \widehat{\subtree}(\hat{u}) \\ v \neq \hat{u}}} \hat{p}(v)\\
    \stackrel{(**)}{=}{} &
    \frac{\widehat{\treesize}(\hat{u}) - \hat{c}(\hat{u}) - n_i}{n/2} - \sum_{\substack{v \in \widehat{\subtree}(\hat{u}) \\ v \neq \hat{u}}} \hat{p}(v)
    =
    \hat{p}(\hat{u}) - \frac{2n_i}{n}
    \enspace,
\end{align*}
where (*) holds because $\subtree(\hat{u}) = \widehat{\subtree}(\hat{u}) \setminus  \widehat{\subtree}(u)$ and (**) holds $\widehat{\treesize}(\hat{u}) = n$, $c(\hat{u}) = \hat{c}(u) = n/4$ and $\hat{c}(\hat{u}) = n/2$. Hence, $p(\hat{u}) < \hat{p}(\hat{u})$.
\end{proof}

The next corollary generalizes the last lemma to the case where the two multi-sets $S_i$ and $\hat{S}_i$ that $d_i$ can declare have medians that are further apart.

\begin{corollary} \label{co:long_deviation}
Assume mediator $d_i$ reports one of two multi-sets $S_i$ and $\hat{S}_i$ of agent locations. Let $u$ and $\hat{u}$ be the medians of $S_i$ and $\hat{S}_i$, respectively.
Let $p(v)$ and $\hat{p}(v)$ denote the value of $p(v)$ given that $d_i$ reports $S_i$ and $\hat{S}_i$, respectively. We also define $P(u, \hat{u})$ to be the path between $u$ and $u_i$, and let $u = w_1, w_2, \ldots, w_s = \hat{u}$ denote the nodes along $P(u, \hat{u})$. Then,
\begin{compactitem}
    \item For a node $v \in V \setminus P(u, \hat{u})$, $p(v) = \hat{p}(v)$.
    \item For $1 \leq k \leq s$, $\sum_{j = k}^s p(w_j) \leq \sum_{j = k}^s \hat{p}(w_j)$.
\end{compactitem}
\end{corollary}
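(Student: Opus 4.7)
The plan is to chain Lemma~\ref{le:short_deviation} along the path by letting the median of $d_i$'s report ``hop'' one node at a time from $u = w_1$ to $\hat{u} = w_s$. Because Part~1 of TRM uses $d_i$'s report only to compute the pair $(\ell_i, n_i)$ (the rest of Part~1 and all of Part~2 see $n_i$ as concentrated at the single point $\ell_i$ through the formula for $\size$), the distribution $p(\cdot)$ is entirely determined by the median closest to $z_i$ that $d_i$ induces. Hence for each $k \in \{1, \ldots, s\}$ I may take $S^{(k)}$ to be any report whose unique median closest to $z_i$ is $w_k$ --- concretely, $n_i$ copies of the single location $w_k$ --- and let $p^{(k)}(\cdot)$ denote the resulting TRM probabilities. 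Then $p^{(1)} = p$ and $p^{(s)} = \hat{p}$.

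For each $k \in \{1, \ldots, s-1\}$, the nodes $w_k$ and $w_{k+1}$ are consecutive along $P(u, \hat{u})$, so the path between them contains no other point of $V$ and Lemma~\ref{le:short_deviation} applies to the transition from $S^{(k)}$ to $S^{(k+1)}$. It yields:
\begin{compactitem}
    \item $p^{(k)}(v) = p^{(k+1)}(v)$ for every $v \notin \{w_k, w_{k+1}\}$;
    \item $p^{(k)}(w_{k+1}) \leq p^{(k+1)}(w_{k+1})$;
    \item $p^{(k)}(w_k) + p^{(k)}(w_{k+1}) = p^{(k+1)}(w_k) + p^{(k+1)}(w_{k+1})$.
\end{compactitem}
Telescoping the first bullet across $k = 1, \ldots, s-1$ immediately gives $p(v) = \hat{p}(v)$ for every $v \notin P(u, \hat{u})$, which is the first conclusion of the corollary.

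For the second conclusion, I track the partial sum $\Sigma_k(t) := \sum_{j=k}^{s} p^{(t)}(w_j)$ as $t$ increases from $1$ to $s$ and argue it is non-decreasing. Each transition $t \to t+1$ changes $p^{(\cdot)}$ only at $w_t$ and $w_{t+1}$; when $t+1 < k$ neither index lies in the tracked block and the first bullet keeps $\Sigma_k$ unchanged; when $t \geq k$ both indices lie in the block and the third bullet preserves $p^{(\cdot)}(w_t) + p^{(\cdot)}(w_{t+1})$, so $\Sigma_k$ is again preserved; and in the single remaining case $t = k-1$ only $w_k$ from the tracked block is affected and its probability weakly increases by the second bullet. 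Chaining the $s-1$ inequalities yields $\sum_{j=k}^s p(w_j) \leq \sum_{j=k}^s \hat{p}(w_j)$, completing the proof. The main (and quite mild) thing to verify is that TRM's output distribution really depends on $d_i$'s report only through $(\ell_i, n_i)$, so that the intermediate reports $S^{(k)}$ are legitimate surrogates --- this is immediate from the definition of $\size$ in Part~1.
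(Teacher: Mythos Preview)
Your proof is correct and follows essentially the same approach as the paper: introduce intermediate reports whose medians hop along $w_1,\ldots,w_s$, apply Lemma~\ref{le:short_deviation} at each step, and chain the resulting (in)equalities via the same three-case analysis on the position of the changing pair relative to the tracked suffix. Your explicit remark that TRM's output depends on $d_i$'s report only through $(\ell_i, n_i)$ is a nice addition that the paper leaves implicit.
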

\begin{proof}
Let us consider $s$ multi-sets $S = S_1, S_2, \ldots, S_s = \hat{S}$ that $d_i$ can declare, where the median of $S_h$ is $w_h$, and let $p_h(v)$ denote the value of $p(v)$ assuming $d_i$ declares the multi-set $S_h$. Lemma~\ref{le:short_deviation} gives the following properties of the relation between $p_h$ and $p_{h + 1}$ for every $1 \leq h < s$.
\begin{compactitem}
    \item A point $v \in V \setminus P(u, \hat{u})$ is not the median of either $S_h$ or $S_{h+1}$, and therefore, $p_h(v) = p_{h + 1}(v)$.
    \item For $k > h + 1$: $p_h(w_j) = p_{h + 1}(w_j)$ for every $j \geq k$, and thus, $\sum_{j = k}^s p_h(w_j) = \sum_{j = k}^s p_{h + 1}(w_j)$.
    \item For $k = h + 1$: $p_h(w_k) \leq p_{h + 1}(w_k)$ and $p_h(w_j) = p_{h + 1}(w_j)$ for every $j > k$. Thus, $\sum_{j = k}^s p_h(w_j) \leq \sum_{j = k}^s p_{h + 1}(w_j)$.
    \item For $k \leq h$: $p_h(w_j) = p_{h + 1}(w_j)$ for every $j$ such that $k \leq j < h$ or $h + 1 < j$. Also, $p_h(w_i) + p_h(w_{h + 1}) = p_{h + 1}(w_h) + p_{h + 1}(w_{h + 1})$. Thus, $\sum_{j = k}^s p_h(w_j) = \sum_{j = k}^s p_{h + 1}(w_j)$.
\end{compactitem}

\vspace{0.1in}Combining the above properties, we can conclude:
\begin{compactitem}
    \item For a point $v \in V \setminus P(u, \hat{u})$, $p(v) = p_1(v) = p_2(v) = \ldots = p_s(v) = \hat{p}(v)$.
    \item For every $1 \leq k \leq s$, $\sum_{j = k}^s p(w_j) = \sum_{j = k}^s p_1(w_j) \leq \sum_{j = k}^s p_2(w_j) \leq \ldots \leq \sum_{j = k}^s p_s(w_j) = \sum_{j = k}^s \hat{p}(w_j)$.
    \qedhere
\end{compactitem}
\end{proof}

Next, we prove a simple technical observation that is used by the proofs of the following lemmata.
\begin{observation} \label{ob:sum_exchange}
Given a series $v_0, v_1, \ldots, v_m$ of points in the tree such that the path from $v_0$ to $v_m$ goes through $v_1, v_2, \ldots, v_{m - 1}$ in this order, and values $\{A_f, B_f ~|~ 1 \leq f \leq m\}$ such that: $\sum_{f = h}^m A_f \leq \sum_{f = h}^m B_f$ for every $1 \leq h \leq m$. Then:
\[
    \sum_{h = 1}^m \dist(v_0, v_h) \cdot A_h
    \leq
    \sum_{h = 1}^m \dist(v_0, v_h) \cdot B_h
    \enspace.
\]
\end{observation}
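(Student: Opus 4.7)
The plan is to prove this by Abel summation (summation by parts), exploiting the fact that the distances $\dist(v_0, v_h)$ are monotone non-decreasing in $h$ along the path.

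First I would set $d_h = \dist(v_0, v_h)$ for $h \geq 0$ (so $d_0 = 0$) and observe that, because $v_1, v_2, \ldots, v_{m-1}$ lie in this order along the unique path from $v_0$ to $v_m$, we have $0 = d_0 \leq d_1 \leq d_2 \leq \cdots \leq d_m$. In particular, the increments $\delta_j := d_j - d_{j-1}$ are non-negative for every $1 \leq j \leq m$, and $d_h = \sum_{j=1}^{h} \delta_j$.

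Next I would rewrite each sum by exchanging the order of summation:
\[
    \sum_{h=1}^m d_h \cdot A_h
    = \sum_{h=1}^m \sum_{j=1}^h \delta_j \cdot A_h
    = \sum_{j=1}^m \delta_j \sum_{h=j}^m A_h,
\]
and the analogous identity holds for the $B_h$'s. Subtracting,
\[
    \sum_{h=1}^m d_h \cdot B_h - \sum_{h=1}^m d_h \cdot A_h
    = \sum_{j=1}^m \delta_j \left( \sum_{h=j}^m B_h - \sum_{h=j}^m A_h \right).
\]
Every term on the right is a product of a non-negative factor $\delta_j \geq 0$ and a non-negative factor $\sum_{h=j}^m (B_h - A_h) \geq 0$ (by hypothesis), so the whole difference is non-negative, yielding the claim.

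There is no real obstacle here; the only thing to be careful about is verifying the monotonicity $d_1 \leq d_2 \leq \cdots \leq d_m$, which follows directly from the assumption that the path from $v_0$ to $v_m$ passes through $v_1, \ldots, v_{m-1}$ in this order (so $v_j$ lies on the path from $v_0$ to $v_{j+1}$, and the tree metric is additive along paths). Everything else is the standard Abel summation rearrangement.
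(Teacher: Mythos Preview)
Your proof is correct and is essentially identical to the paper's own argument: the paper also writes $\dist(v_0, v_h) = \sum_{f=1}^{h} \dist(v_{f-1}, v_f)$, swaps the order of summation to obtain $\sum_{h=1}^m \dist(v_{h-1}, v_h) \sum_{f=h}^m A_f$, and then applies the tail-sum hypothesis together with the non-negativity of the increments. Your increments $\delta_j = d_j - d_{j-1}$ are exactly the paper's $\dist(v_{j-1}, v_j)$ by additivity of the tree metric along the path, so the two proofs coincide.
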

\begin{proof}
Observe that $\dist(v_0, v_h) = \sum_{f = 1}^h \dist(v_{f - 1}, v_f)$. Therefore,
\begin{align*}
    \sum_{h = 1}^m \dist(v_0, v_h) \cdot A_h
    ={} &
    \sum_{h = 1}^m \left(\dist(v_{h - 1}, v_h) \cdot \sum_{f = h}^m A_f\right)\\
    \leq{} &
    \sum_{h = 1}^m \left(\dist(v_{h - 1}, v_h) \cdot \sum_{f = h}^m B_f\right)
    =
    \sum_{h = 1}^m \dist(v_0, v_h) \cdot B_h
    \enspace.
    \qedhere
\end{align*}
\end{proof}

In the rest of this appendix, we use Corollary~\ref{co:long_deviation} and Observation~\ref{ob:sum_exchange} to prove that TRM is two-sided~IC.

\begin{lemma} \label{le:TRM_truthful}
The Tree Randomized Mechanism (TRM) is agent-side IC.
\end{lemma}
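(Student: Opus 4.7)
The plan is to reduce to the two tools just established: Observation~\ref{ob:point_moves} (to locate the agent relative to the two possible medians of her mediator) and Corollary~\ref{co:long_deviation} (to see how the probability distribution over facility locations shifts), and then to combine them via Observation~\ref{ob:sum_exchange}.

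Fix an arbitrary agent $a_{i,j}$ and assume her mediator $d_i$ is truthful. Let $\ell_i$ and $\hat{\ell}_i$ denote the medians closest to $z_i$ of $d_i$'s reports when $a_{i,j}$ reports her true location $t_{i,j}$ and when she instead reports a fake location $t'_{i,j}$. If $\ell_i=\hat{\ell}_i$ then the mediator's forwarded report is unchanged, hence the distribution over facility locations and the agent's expected cost are unchanged, so the deviation is non-beneficial. Assume from now on $\ell_i\neq\hat{\ell}_i$. Because $d_i$ is truthful, the two multi-sets of reports forwarded by $d_i$ differ only in the single point contributed by $a_{i,j}$ (which is $t_{i,j}$ in one case and $t'_{i,j}$ in the other). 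Applying Observation~\ref{ob:point_moves} with $z=z_i$ to these two multi-sets shows that the path from $t_{i,j}$ to $\hat{\ell}_i$ passes through $\ell_i$, and $\ell_i$ is closer to $t_{i,j}$ than $\hat{\ell}_i$ is.

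Next I invoke Corollary~\ref{co:long_deviation} to the mediator-level deviation from reporting a set with median $\ell_i$ to reporting a set with median $\hat{\ell}_i$. Let $\ell_i=w_1,w_2,\ldots,w_s=\hat{\ell}_i$ be the nodes along the path $P(\ell_i,\hat{\ell}_i)$, and let $p(\cdot)$, $\hat{p}(\cdot)$ denote the probability distributions over facility locations in the two cases. The corollary gives $p(v)=\hat{p}(v)$ for every $v\notin P(\ell_i,\hat{\ell}_i)$, and $\sum_{j=k}^s p(w_j)\le\sum_{j=k}^s\hat{p}(w_j)$ for every $1\le k\le s$. Hence, the difference between the expected cost of $a_{i,j}$ under the deviation and under truthfulness equals $\sum_{h=1}^{s}\dist(t_{i,j},w_h)\bigl(\hat{p}(w_h)-p(w_h)\bigr)$, since all contributions from points outside the path cancel.

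Finally, set $v_0=t_{i,j}$ and $v_h=w_h$ for $1\le h\le s$. By the first paragraph, the path from $t_{i,j}$ to $\hat{\ell}_i=w_s$ reaches $\ell_i=w_1$ first and then traverses $w_2,\ldots,w_{s-1}$ in order, which is exactly the hypothesis of Observation~\ref{ob:sum_exchange}. Applying that observation with $A_h=p(w_h)$ and $B_h=\hat{p}(w_h)$ yields
\[
\sum_{h=1}^{s}\dist(t_{i,j},w_h)\cdot p(w_h)\;\le\;\sum_{h=1}^{s}\dist(t_{i,j},w_h)\cdot\hat{p}(w_h),
\]
so the expected cost of $a_{i,j}$ does not decrease after the deviation, proving agent-side IC. The only subtlety I expect is bookkeeping: carefully verifying that the single-point change at the agent level licenses Observation~\ref{ob:point_moves} (the two multi-sets differ only in the weighted point of weight $1$ that is the agent's own report) and that the points of the path from $t_{i,j}$ to $\hat{\ell}_i$ really occur in the order $t_{i,j},\ell_i=w_1,w_2,\ldots,w_s=\hat{\ell}_i$, which is what allows the clean application of Observation~\ref{ob:sum_exchange}.
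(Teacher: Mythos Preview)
Your proposal is correct and follows essentially the same approach as the paper: both proofs invoke Observation~\ref{ob:point_moves} to place $t_{i,j}$ on the $\ell_i$-side of the path $P(\ell_i,\hat{\ell}_i)$, apply Corollary~\ref{co:long_deviation} to control how the probability mass shifts along that path, and conclude via Observation~\ref{ob:sum_exchange} with $v_0=t_{i,j}$, $v_h=w_h$, $A_h=p(w_h)$, $B_h=\hat{p}(w_h)$.
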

\begin{proof}
Consider an agent $a_{i, j}$, and assume $d_i$ is truthful. We would like to show that it not beneficial for $a_{i, j}$ to deviate and report a fake location $\hat{t}_{i, j}$ instead of its real location $t_{i, j}$. If the deviation does not affect the median of the agents' locations reported by $d_i$, then the deviation does not affect the outcome of TRM, and is clearly not helpful for $a_{i, j}$. Hence, the interesting case is when the median of the locations reported by $d_i$ changes following the deviation from $u$ to $\hat{u}$. By Observation~\ref{ob:point_moves}, $u$ and $\hat{u}$ are both located along the path connecting $t_{i, j}$ and $\hat{t}_{i, j}$, and $u$ is closer to $t_{i, j}$ than $\hat{u}$.

Let $P(u, \hat{u})$ denote the path connecting $u$ and $\hat{u}$, and let us denote the points of $V \cup \{\hat{u}\}$ along the path $P(u, \hat{u})$ by $u = w_1, w_2, \ldots, w_s = \hat{u}$. Also, let $p(v), V$ and $\hat{p}(v), \hat{V}$ denote the probability $p(v)$ and the set $V$ before and after the deviation, respectively.
We can now evaluate the cost paid by $a_{i, j}$:
\begin{align*}
    \sum_{v \in V} p(v) \cdot & \dist(v, t_{i, j})
    =
    \sum_{v \in V \setminus P(u, \hat{u})} p(v) \cdot \dist(v, t_{i, j}) + \sum_{h = 1}^s p(w_h) \cdot \dist(w_h, t_{i, j})\\
    \leq{} &
    \sum_{v \in V \setminus P(u, \hat{u})} \hat{p}(v) \cdot \dist(v, t_{i, j}) + \sum_{h = 1}^s \hat{p}(w_h) \cdot \dist(w_h, t_{i, j})
    =
    \sum_{v \in \hat{V}} \hat{p}(v) \cdot \dist(v, t_{i, j})
    \enspace,
\end{align*}
where the inequality follows from Observation~\ref{ob:sum_exchange} (when $v_0 = t_{i, j}$, $v_h = w_h$, $A_h = p(w_h)$ and $B_h = \hat{p}(w_h)$ for every $1 \leq h \leq s$) and Corollary~\ref{co:long_deviation}. Notice that the leftmost hand side is the cost of $a_{i, j}$ before the deviation, and the rightmost hand side is the cost of $a_{i, j}$ after the deviation. Hence, the above inequality proves that $a_{i, j}$ has no incentive to deviate. The lemma follows since $a_{i, j}$ is an arbitrary agent.
\end{proof}

\begin{lemma} \label{le:TRM_mediators_truthful}
The Tree Randomized Mechanism (TRM) is mediator-side IC.
\end{lemma}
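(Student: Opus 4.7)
The plan is as follows. Consider an arbitrary mediator $d_i$ and assume all agents of $A_i$ report truthfully. As in the analysis of WMM, observe that TRM's output depends on the reports of $d_i$ only through the median of his reported multi-set closest to $z_i$: this median is $\ell_i$, and in the second part of TRM each mediator enters only via the pair $(\ell_i, n_i)$, with $n_i$ being public. Hence every deviation of $d_i$ can be characterized by the point $\hat{u}$ that replaces the true median $u$. If $\hat{u} = u$ the entire distribution over facility locations is unchanged and the deviation is non-beneficial, so I may assume $\hat{u} \neq u$. Let $g(v) = \sum_{a \in A_i} \dist(v, t_a)$ denote $d_i$'s cost when the facility is at $v$; by Observation~\ref{ob:median_optimal}, $g$ is minimized at $u$.

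I next invoke Corollary~\ref{co:long_deviation} with this truthful/deviating pair. Let $u = w_1, w_2, \ldots, w_s = \hat{u}$ be the points of $V \cup \{\hat{u}\}$ on the path $P(u, \hat{u})$, and let $p(v), \hat{p}(v)$ denote the probabilities TRM assigns to $v$ before and after the deviation. The corollary yields $p(v) = \hat{p}(v)$ for every $v \notin P(u, \hat{u})$ and
\[
    \sum_{j=k}^{s} p(w_j) \;\leq\; \sum_{j=k}^{s} \hat{p}(w_j) \qquad \text{for every } 1 \leq k \leq s,
\]
meaning probability mass on $P(u, \hat{u})$ is shifted from $u$ toward $\hat{u}$.

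The third step is to show $g(w_1) \leq g(w_2) \leq \cdots \leq g(w_s)$. For any fixed $t_a$, the restriction of $\dist(\cdot, t_a)$ to $P(u,\hat{u})$ is piecewise linear with slopes in $\{-1, +1\}$ and at most one slope change (where the path turns away from $t_a$), so it is convex along the path; summing over $a \in A_i$, $g$ is convex along $P(u, \hat{u})$. Combined with the fact that $u = w_1$ is a global minimizer of $g$, convexity forces $g$ to be non-decreasing as one moves from $w_1$ to $w_s$.

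Finally, the expected cost difference for $d_i$ equals $\sum_{h=1}^{s} (\hat{p}(w_h) - p(w_h)) \, g(w_h)$, since off-path contributions cancel. Expanding $g(w_h) = g(w_1) + \sum_{f=2}^{h} (g(w_f) - g(w_{f-1}))$ and swapping the order of summation (an Abel-style rearrangement in the spirit of Observation~\ref{ob:sum_exchange}) rewrites this as
\[
    g(w_1) \sum_{h=1}^{s} (\hat{p}(w_h) - p(w_h)) \;+\; \sum_{f=2}^{s} \bigl(g(w_f) - g(w_{f-1})\bigr) \sum_{h=f}^{s} (\hat{p}(w_h) - p(w_h)).
\]
The first term vanishes because, by Lemma~\ref{le:tree_probability}, both $p$ and $\hat{p}$ are probability distributions summing to $1$ over the tree, while off-path probabilities agree. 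The second term is a sum of products of non-negative factors, by the monotonicity of $g$ along the path (step three) and by Corollary~\ref{co:long_deviation}. Thus the deviation cannot decrease $d_i$'s expected cost, establishing mediator-side IC. The main obstacle I expect is step three: pinning down the convexity-from-optimality argument on a general tree path, and then correctly pairing it with the right-tail partial-sum inequalities of Corollary~\ref{co:long_deviation} so that the Abel decomposition above actually has non-negative terms — the rest is bookkeeping.
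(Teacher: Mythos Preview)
Your proof is correct. Both you and the paper reduce to showing that the on-path change $\sum_{h=1}^{s}(\hat{p}(w_h)-p(w_h))\,g(w_h)$ is non-negative, using Corollary~\ref{co:long_deviation} for the tail-sum inequalities and the fact that $\sum_h p^\Delta(w_h)=0$. The paper, however, does \emph{not} argue via monotonicity of $g$ along $P(u,\hat u)$: instead it partitions $A_i$ into $A'_i$ (agents whose path to $\hat u$ passes through $u$) and $A''_i$, uses the exact decomposition $\dist(t_{i,j},w_h)=\dist(t_{i,j},\ell_i)+\dist(\ell_i,w_h)$ for $a_{i,j}\in A'_i$, handles $A''_i$ by splitting at each agent's entry point $w_{h(i,j)}$ onto the path and repeatedly invoking Observation~\ref{ob:sum_exchange} to discard non-negative remainders, and finally combines the two pieces into $(|A'_i|-|A''_i|)\cdot\sum_h \dist(\ell_i,w_h)\,p^\Delta(w_h)\geq 0$, where the first factor is non-negative because $\ell_i$ is a median.

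Your route is genuinely shorter: by observing that each $x\mapsto \dist(p_x,t_a)$ is convex along the path (it equals $|x-x(y_a)|+\text{const}$ for the projection $y_a$ of $t_a$ onto $P$), you get convexity of $g$ along $P(u,\hat u)$, and since $u$ is a global minimizer of $g$ this forces $g(w_1)\leq\cdots\leq g(w_s)$. Abel summation then finishes in one line, avoiding the $A'_i/A''_i$ casework entirely. The paper's approach, on the other hand, makes the role of the median inequality $|A'_i|\geq |A''_i|$ explicit and never appeals to convexity, which is arguably more elementary at the cost of more bookkeeping. One minor slip: TRM uses a single tie-breaking point $z$, not per-mediator $z_i$ as in WMM; this is inconsequential for your argument.
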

\begin{proof}
Consider an arbitrary mediator $d_i$, and let us assume the agents of $d_i$ are truthful. Let $\ell_i$ be a median of the reports of all the agents represented by $d_i$, and let $\hat{\ell}_i$ be an arbitrary point. Our objective is to show that the expected social cost of the agents of $A_i$ does not improve if $d_i$ deviates and reports a set of agent locations whose median is $\hat{\ell}_i$.

Let $p(v)$ and $\hat{p}(v)$ denote the probability $p(v)$ before and after the deviation, respectively. Also, let $\ell_i = w_1, w_2, \ldots, w_s = \hat{\ell}_i$ denote the points of $V \cup \{\hat{\ell}_i\}$ along the path from $\ell_i$ to $\hat{\ell}_i$. To simplify the notation, let $p^\Delta(v) = \hat{p}(v) - p(v)$ be the change in $p(v)$ following the deviation. By Corollary~\ref{co:long_deviation}, $p^\Delta(v) = 0$ for every point $v \in V$ which is not on the path between $\ell_i$ and $\hat{\ell}_i$. Hence, the change in the expected social cost of the agents represented by $d_i$ following the deviation is:
\begin{equation} \label{eq:mediator_change_general}
    \sum_{j = 1}^{n_i} \sum_{h = 1}^s \left[\dist(t_{i,j}, w_h) \cdot p^\Delta(w_h)\right] \enspace.
\end{equation}
To prove the lemma, it is enough to show that the last expression is always non-negative. Let us partition $A_i$ into two kinds of agents: $A'_i$ contains agents for which the path connecting them to $\hat{\ell}_i$ goes through $\ell_i$, and $A''_i$ contains the rest of the agents represented by $d_i$. Notice that all the agents of $A''_i$ are connected to each other via paths that do not go through $\ell_i$. Hence, by the definition of a median, $|A''_i| \leq |A_i|/2$, which implies $|A'_i| \geq |A''_i|$. For every agent $a_{i, j} \in A'_i$ and point $w_h$, we have $\dist(t_{i, j}, w_h) = \dist(t_{i, j}, \ell_i) + \dist(\ell_i, w_h)$.
Hence,
\begin{align}
    &
    \sum_{a_{i, j} \in A'_i} \sum_{h = 1}^s \left[\dist(t_{i, j}, w_h) \cdot p^\Delta(w_h)\right] \nonumber\\
    ={} &
    \sum_{a_{i, j} \in A'_i} \sum_{h = 1}^s \left[\dist(t_{i, j}, \ell_i) \cdot p^\Delta(w_h)\right] + \sum_{a_{i, j} \in A'_i} \sum_{h = 1}^s \left[\dist(\ell_i, w_h) \cdot p^\Delta(w_h)\right] \nonumber\\
    ={}&
    \sum_{a_{i, j} \in A'_i} \left[\dist(t_{i, j}, \ell_i) \cdot \sum_{h = 1}^s p^\Delta(w_h)\right] + |A'_i| \cdot \sum_{h = 1}^s \left[\dist(\ell_i, w_h) \cdot p^\Delta(w_h)\right] \nonumber\\
    ={}&
    |A'_i| \cdot \sum_{h = 1}^s \left[\dist(\ell_i, w_h) \cdot p^\Delta(w_h)\right] \label{eq:s1_sum}
    \enspace,
\end{align}

where the last equality follows from Corollary~\ref{co:long_deviation} since:
\[
    \sum_{h = 1}^s p(w_i)
    =
    1 - \sum_{v \in V \setminus \{w_1, w_2, \ldots, w_s\}} \mspace{-36mu} p(v) \mspace{9mu}
    =
    1 - \sum_{v \in V \setminus \{w_1, w_2, \ldots, w_s\}} \mspace{-36mu} \hat{p}(v) \mspace{9mu}
    =
    \sum_{h = 1}^s \hat{p}(w_i)
		\Rightarrow
		\sum_{h = 1}^s p^\Delta(w_i) = 0
		\enspace.
\]
For every agent $a_{i, j} \in A''_i$, let $w_{h(i, j)}$ denote the first point from the list $w_1, w_2, \ldots, w_s$ that appears on the path from $t_{i, j}$ to $\ell_i$. Then,
\begin{align*}
    &
    \sum_{a_{i, j} \in A''_i} \sum_{h = 1}^s \left[\dist(t_{i, j}, w_h) \cdot p^\Delta(w_h)\right] \\
    ={} &
    \sum_{a_{i, j} \in A''_i} \sum_{h = 1}^{h(i, j)} \left[\dist(t_{i, j}, w_h) \cdot p^\Delta(w_h)\right] + \sum_{a_{i, j} \in A''_i} \sum_{h = h(i, j) + 1}^s \left[\dist(t_{i, j}, w_h) \cdot p^\Delta(w_h)\right] \\
    ={} &
    \sum_{a_{i, j} \in A''_i} \sum_{h = 1}^{h(i, j)} \left[\dist(t_{i, j}, w_h) \cdot p^\Delta(w_h)\right]    + \sum_{a_{i, j} \in A''_i} \left[\dist(t_{i, j}, w_{h(i, j) + 1}) \cdot \sum_{h = h(i, j) + 1}^s \mspace{-18mu} p^\Delta(w_h)\right]\\
    &+
    \sum_{a_{i, j} \in A''_i} \sum_{h = h(i, j) + 2}^s \mspace{-18mu} \left[\dist(w_{h(i, j) + 1}, w_h) \cdot p^\Delta(w_h)\right]\\
    \geq{} &
    \sum_{a_{i, j} \in A''_i} \sum_{h = 1}^{h(i, j)} \left[\dist(t_{i, j}, w_h) \cdot p^\Delta(w_h)\right]    + \sum_{a_{i, j} \in A''_i} \left[\dist(t_{i, j}, w_{h(i, j) + 1}) \cdot \sum_{h = h(i, j) + 1}^s \mspace{-18mu} p^\Delta(w_h)\right]
    \enspace,
\end{align*}

where the inequality follows from Observation~\ref{ob:sum_exchange} (when $v_0 = w_{h(i,j) + 1}$, $v_h = w_{h + 1 + h(i, j)}$, $A_h = 0$ and $B_h = \hat{p}(w_{h + 1 + h(i, j)}) - p(w_{h + 1 + h(i, j)})$ for every $1 \leq h \leq s - h(i, j) - 1$) and Corollary~\ref{co:long_deviation}.
Notice that for every $h \leq h(i, j)$: $\dist(t_{i, j}, w_h) = \dist(t_{i, j}, \ell_i) - \dist(\ell_i, w_h)$. Hence:
\begin{align*}
    &
    \sum_{a_{i, j} \in A''_i} \sum_{h = 1}^{h(i, j)} \left[\dist(t_{i, j}, w_h) \cdot p^\Delta(w_h)\right]\\
    ={} &
    \sum_{a_{i, j} \in A''_i} \left[\dist(t_{i, j}, \ell_i) \cdot \sum_{h = 1}^{h(i, j)}p^\Delta(w_h)\right] - \sum_{a_{i, j} \in A''_i} \sum_{h = 1}^{h(i, j)} \left[\dist(\ell_i, w_h) \cdot p^\Delta(w_h)\right]
    \enspace.
\end{align*}

Additionally, $\dist(t_{i, j}, w_{h(i, j) + 1}) \geq \dist(t_{i, j}, \ell_i) - \dist(\ell_i, w_{h(i, j) + 1})$, and $\sum_{h = h(i, j) + 1}^s p^\Delta(w_h) \geq 0$ by Corollary~\ref{co:long_deviation}. Hence:
\begin{align*}
    &
    \sum_{a_{i, j} \in A''_i} \left[\dist(t_{i, j}, w_{h(i, j) + 1}) \cdot \sum_{h = h(i, j) + 1}^s \mspace{-18mu} p^\Delta(w_h)\right] \\
    \geq{} &
    \sum_{a_{i, j} \in A''_i} \left[\dist(t_{i, j}, \ell_i) \cdot \sum_{h = h(i, j) + 1}^{s} \mspace{-18mu} p^\Delta(w_h)\right]
    - \sum_{a_{i, j} \in A''_i} \left[\dist(\ell_i, w_{h(i, j) + 1}) \cdot \sum_{h = h(i, j) + 1}^{s} \mspace{-18mu} p^\Delta(w_h)\right]\\
    ={} &
    \sum_{a_{i, j} \in A''_i} \left[\dist(t_{i, j}, \ell_i) \cdot \sum_{h = h(i, j) + 1}^{s} \mspace{-18mu} p^\Delta(w_h)\right]
    - \sum_{a_{i, j} \in A''_i} \sum_{h = h(i, j) + 1}^{s} \mspace{-18mu} \left[\dist(\ell_i, w_h) \cdot p^\Delta(w_h)\right]\\
    &+ \sum_{a_{i, j} \in A''_i} \sum_{h = h(i, j) + 2}^{s} \mspace{-18mu} \left[\dist(w_{h(i, j) + 1}, w_h) \cdot p^\Delta(w_h)\right] \\
    \geq{} &
    \sum_{a_{i, j} \in A''_i} \left[\dist(t_{i, j}, \ell_i) \cdot \sum_{h = h(i, j) + 1}^{s} \mspace{-18mu} p^\Delta(w_h)\right]
    - \sum_{a_{i, j} \in A''_i} \sum_{h = h(i, j) + 1}^{s} \mspace{-18mu} \left[\dist(\ell_i, w_h) \cdot p^\Delta(w_h)\right]
    \enspace,
\end{align*}
where the second inequality follows again from Observation~\ref{ob:sum_exchange} (when $v_0 = w_{h(i,j) + 1}$, $v_h = w_{h + 1 + h(i, j)}$, $A_h = 0$ and $B_h = \hat{p}(w_{h + 1 + h(i, j)}) - p(w_{h + 1 + h(i, j)})$ for every $1 \leq h \leq s - h(i, j) - 1$) and Corollary~\ref{co:long_deviation}. Combing the last results, we get:
\begin{align}
    &
    \sum_{a_{i, j} \in A''_i} \sum_{h = 1}^s \left[\dist(t_{i, j}, w_h) \cdot p^\Delta(w_h)\right] \nonumber\\
    \geq{} &
    \sum_{a_{i, j} \in A''_i} \left[\dist(t_{i, j}, \ell_i) \cdot \sum_{h = 1}^s p^\Delta(w_h)\right] - \sum_{a_{i, j} \in A''_i} \sum_{h = 1}^s \left[\dist(\ell_i, w_h) \cdot p^\Delta(w_h)\right]
    \nonumber\\
    ={}&
    - |A''_i| \cdot \sum_{h = 1}^s \left[\dist(\ell_i, w_h) \cdot p^\Delta(w_h)\right]  \label{eq:s2_sum}
    \enspace,
\end{align}
where the equality holds since $\sum_{h = 1}^{s}[\hat{p}(w_h) - p(w_h)] = 0$ by Corollary~\ref{co:long_deviation}. Adding up Expressions~(\ref{eq:s1_sum}) and (\ref{eq:s2_sum}) we get that (\ref{eq:mediator_change_general}) is lower bounded by:
\begin{align*}
    \sum_{j = 1}^{n_i} \sum_{h = 1}^s [\dist(t_{i,j}, w_h) \cdot &p^\Delta(w_h)]
    \geq
    (|A'_i| - |A''_i|) \cdot \sum_{h = 1}^s \left[\dist(\ell_i, w_h) \cdot p^\Delta(w_h)\right]\\
    ={} &
    (|A'_i| - |A''_i|) \cdot \sum_{h = 1}^{s - 1} \left[\dist(w_h, w_{h + 1}) \cdot \sum_{j = h + 1}^s p^\Delta(w_h)\right]
    \geq
    0
    \enspace,
\end{align*}
where the second inequality follows since the internal sum is non-negative due to Corollary~\ref{co:long_deviation} and $|A'_i| \geq |A''_i|$.
\end{proof}
}

\end{document}